\numberwithin{equation}{section}
\theoremstyle{plain}
\newtheorem{theorem}{Theorem}[section]
\newtheorem{example}{Example}[section]
\newtheorem{lemma}{Lemma}[section]
\newtheorem{proposition}{Proposition}[section]
\newtheorem{definition}{Definition}[section]
\theoremstyle{remark}
\newcommand{\DPPy}{\textsf{DPPy}}
\newcommand{\Jab}{J_{\mathbf{a},\mathbf{b}}}
\newcommand{\tildeJab}{T_{\mathbf{a},\mathbf{b}}}
\newcommand{\mueq}{\mathrm{\mu_{eq}}}
\def\atomSpace{\bbR_>^N}
\def\weightSpace{S_N}
\newcommand{\CommaBin}{\mathbin{\raisebox{0.5ex}{,}}}
\newcommand*{\Algoref}[1]{Algorithm~\ref{#1}}
\newcommand*{\Exref}[1]{Example~\ref{#1}}
\newcommand*{\Figref}[1]{Figure~\ref{#1}}
\newcommand*{\Lemref}[1]{Lemma~\ref{#1}}
\newcommand*{\Proporef}[1]{Proposition~\ref{#1}}
\newcommand*{\Secref}[1]{Section~\ref{#1}}
\newcommand*{\Thref}[1]{Theorem~\ref{#1}}
\patchcmd{\NAT@test}{\else \NAT@nm}{\else \NAT@nmfmt{\NAT@nm}}{}{}
\DeclareRobustCommand\citepos
   \let\NAT@nmfmt\NAT@posfmt% ...except with a different name format
\let\NAT@ctype\z@\NAT@partrue
\let\NAT@orig@nmfmt\NAT@nmfmt
\def\NAT@posfmt#1{\NAT@orig@nmfmt{#1's}}
\xdef\csname bf\x
\xdef\csname bf\x
\xdef\csname cal\x
\xdef\csname bb\x
\xdef\csname bf\x
\renewcommand{\hat}{\widehat}
\newcommand{\oplim}[3]{%
\operatornamewithlimits{#1}\limits_{\substack{#2}}^{\substack{#3}}}
\newcommand{\argmax}{\operatornamewithlimits{argmax}}
\newcommand{\rddots}{\reflectbox{$\ddots$}}
\newcommand{\e}{\operatorname{\mathrm{e}}}
\newcommand*{\diff}{\mathrm{d}}
\let\originalleft\left
\let\originalright\right
\renewcommand{\left}{\mathopen{}\mathclose\bgroup\originalleft}
\renewcommand{\right}{\aftergroup\egroup\originalright}
\newcommand{\lrb}[1]{\left[ #1 \right]}
\newcommand{\lrp}[1]{\left( #1 \right)}
\newcommand{\lrcb}[1]{\left\{ #1 \right\}}
\newcommand{\lrsp}[1]{\left\langle #1 \right\rangle}
\newcommand{\lrabs}[1]{\left| #1 \right|}
\newcommand{\lrnorm}[1]{\left\| #1 \right\|}
\newcommand{\degree}{\operatorname{deg}}
\newcommand{\indic}{\mathds{1}}
\newcommand{\Beta}[1][\@nil]{%
  \def\tmp{#1}%
   \ifx\tmp\@nnil%
   		\operatorname{Beta}%
    \else%
      \operatorname{Beta}\lrp{#1}%
    \fi}
\newcommand{\Gama}[1][\@nil]{%
  \def\tmp{#1}%
   \ifx\tmp\@nnil%
   		\Gamma%
    \else%
      \Gamma\lrp{#1}%
    \fi}
\newcommand{\Gauss}[1][\@nil]{%
  \def\tmp{#1}%
   \ifx\tmp\@nnil%
   		\calN%
    \else%
      \calN\lrp{#1}%
    \fi}
\newcommand{\Dir}[1][\@nil]{%
  \def\tmp{#1}%
   \ifx\tmp\@nnil%
   		\operatorname{Dir}%
    \else%
      \operatorname{Dir}\lrp{#1}%
    \fi}
\renewcommand{\top}{\mathsf{\scriptscriptstyle T}}
\newcommand{\Tr}[1][\@nil]{%
  \def\tmp{#1}%
   \ifx\tmp\@nnil%
   		\operatorname{Tr}%
    \else%
      \operatorname{Tr}\lrb{#1}%
    \fi}
\newcommand{\rank}[1][\@nil]{%
  \def\tmp{#1}%
   \ifx\tmp\@nnil%
   		\operatorname{rank}%
    \else%
      \operatorname{rank}\lrp{#1}%
    \fi}
\newcommand{\diag}[1][\@nil]{%
  \def\tmp{#1}%
   \ifx\tmp\@nnil%
   		\operatorname{diag}%
    \else%
      \operatorname{diag}\lrp{#1}%
    \fi}
\begin{document}

\begin{frontmatter}

\title{Fast sampling from $\beta$-ensembles}
\runtitle{Fast sampling from $\beta$-ensembles}

\begin{aug}
    \author{%
    \fnms{Guillaume}
    \snm{Gautier$^{*,}$}%
        \thanksref{a,b}%
    % \ead[label=e1]{\small guillaume.gga@gmail.com}
    }
    \author{%
    \fnms{Rémi}
    \snm{Bardenet}%
        % \textsuperscipt{$\dagger$}}%
        \thanksref{a}%
    % \ead[label=e2]{\small remi.bardenet@gmail.com}
    }
    \and
    \author{%
    \fnms{Michal}
    \snm{Valko}%
        % \textsuperscipt{$\dagger$}}%
        \thanksref{b,c}%
    % \ead[label=e3]{\small mvalko@google.com}
    }
    \address[a]{Univ.\,Lille, CNRS, Centrale Lille, UMR 9189 - CRIStAL, 59651 Villeneuve d'Ascq, France.\\[-0.4em]}
    \address[b]{INRIA Lille - Nord Europe,  40, avenue Halley 59650, Villeneuve d'Ascq, France.\\[-0.4em]}
    \address[c]{DeepMind Paris, 14 Rue de Londres, 75009  Paris, France.\\[0.2em]}

    % \printead{e1,e2,e3} doesn't work, so I use the following trick
    E-mails:
    $^*$\href{mailto:guillaume.gga@gmail.com}{guillaume.gga@gmail.com};
    \href{mailto:remi.bardenet@gmail.com}{remi.bardenet@gmail.com};
    \href{mailto:valkom@deepmind.com}{valkom@deepmind.com}.

    \runauthor{G.\,Gautier, R.\,Bardenet, and M.\,Valko}
\end{aug}

\begin{abstract}
    We study sampling algorithms for $\beta$-ensembles with time complexity less than cubic in the cardinality of the ensemble.
    Following \cite{DuEd02}, we see the ensemble as the eigenvalues of a random tridiagonal matrix, namely a random Jacobi matrix.
    First, we provide a unifying and elementary treatment of the tridiagonal models associated to the three classical Hermite, Laguerre and Jacobi ensembles.
    For this purpose, we use simple changes of variables between successive reparametrizations of the coefficients defining the tridiagonal matrix.
    Second, we derive an approximate sampler for the simulation of $\beta$-ensembles, and illustrate how fast it can be for polynomial potentials.
    This method combines a Gibbs sampler on Jacobi matrices and the diagonalization of these matrices.
    In practice, even for large ensembles, only a few Gibbs passes suffice for the marginal distribution of the eigenvalues to fit the expected theoretical distribution.
    When the conditionals in the Gibbs sampler can be simulated exactly, the same fast empirical convergence is observed for the fluctuations of the largest eigenvalue.
    Our experimental results support a conjecture by \citet[][]{KrRiVi16}, that the Gibbs chain on Jacobi matrices of size $N$ mixes in $\mathcal{O}(\log N)$.
\end{abstract}

% \subjclass{Primary 05C38, 15A15; Secondary 05A15, 15A18}
\begin{keyword}%
    $\beta$-ensembles,
    tridiagonal random matrices,
    orthogonal polynomials,
    Gibbs sampling.\\
\textit{MSC 2010 subject classifications:}
    Primary:
    60K35; % (Interacting random processes);
    secondary:
    65C40, % (Computational Markov chains),
    60B20, % (Random matrices),
    33C45 %(Orthogonal polynomials and functions of hypergeometric type).}
\end{keyword}

% Idk how it works, but below is a list of suggestions
% \begin{itemize}
%     \item 60B20     Random matrices (probabilistic aspect
%     \item 15B52     Random matrice
%     \item 60K35     Interacting random processes; statistical mechanics type model
%     \item 82C22     Interacting particle system
%     \item 60G55     Point processe
%     \item 60G57     Random measure
%     \item 82C80     Numerical methods - in Statistical mechanics, structure of matte
%     \item 65C20     Models, numerical method
%     \item 68U20     Simulatio
%     \item 65C40     Computational Markov chain
%     \item 65C05     Monte Carlo method
%     \item 65C50     Other computational problems in probabilit
%     \item 42C05     Orthogonal functions and polynomials, general theor
%     \item  (Jacobi, Laguerre, Hermite
%     \item 60C05     Combinatorial probability
% \end{itemize}

\end{frontmatter}

% I removed the table of contents, add it in the arxiv version
%\tableofcontents
\begingroup
\allowdisplaybreaks

% \vspace{-2em}

% \clearpage
\section{Introduction} % (fold)
\label{sec:introduction}

    $\beta$-ensembles are probability distributions of the form
    \begin{equation}
    \label{eq:joint_distribution_beta_ensemble}
        \lrabs{\Delta(x_1,\dots,x_N)}^{\beta}
            ~ Z^{-1} \prod_{n=1}^N e^{-V(x_n)} \diff x_n, \quad x_1,\dots,x_N \in \bbR,
    \end{equation}
    where $\Delta(x_1,\dots,x_N) = \prod_{i<j}(x_j-x_i)$ is the Vandermonde determinant, $\beta>0$ is akin to an inverse temperature in statistical physics, and $V:\bbR\rightarrow \bbR$ is called the \emph{potential}.
    Loosely speaking, one can think of \eqref{eq:joint_distribution_beta_ensemble} as representing the position of $N$ particles living on the real line, confined by the potential $V$, and repelling each other through the Vandermonde determinant.
    As this interpretation suggests, $\beta$-ensembles arise as models in statistical physics \citep[Chapters 1 to 3]{For10}.
    They are also famous as the distribution of the eigenvalues of some of the classical models of random matrices.
    The particular values $\beta\in\{1,2,4\}$ respectively appear when considering specific random matrices with real, complex, or quaternionic Gaussian entries; see e.g., \citet{For10} again or \citet*[Chapter 4]{AnGuZe09}.

    The case $\beta=2$ is of particular interest, since the distribution of $\{x_1,\dots,x_N\}$ then becomes a particular determinantal point process (DPP), called an \emph{orthogonal polynomial ensemble} \citep[OPE,][]{Kon05}.
    Originally introduced as models in fermionic optics by \citet{Mac75}, DPPs are comparatively easier to analyze than other repulsive distributions.
    Moreover, there exists a generic algorithm to sample from DPPs \citep{HKPV06}.
    Together with their analytic tractability, the existence of sampling algorithms has sparked the study of Monte Carlo integration using an OPE for the quadrature nodes \citep*{BaHa19,GaBaVa19b,BeBaCh19}.

    Besides Monte Carlo integration, numerical procedures to generate samples from $\beta$-ensembles are also needed to establish conjectures in statistical physics or random matrix theory.
    For instance, using a tailored version of the generic DPP sampler of \citet{HKPV06}, \citet{OlNaTr15} explore so-called \emph{universality properties} in random matrix theory, and make conjectures on the law of $\max x_i$ when $\beta=2$ and $V$ is a polynomial of degree $4$.
    \citet{ChFe18} rather use Hamiltonian Monte Carlo to approximately sample from various \emph{Coulomb gases}, including \eqref{eq:joint_distribution_beta_ensemble} with $\beta=2$ and $V(x)=x^4/4$, and investigate their limiting features when $N\rightarrow\infty$.
    From a different perspective, \citet{LiMe13} view \eqref{eq:joint_distribution_beta_ensemble} as the equilibrium distribution for the Dyson Brownian motion associated to the potential $V$.
    When $\beta=2$, they generate approximate samples by discretizing the corresponding stochastic differential equation.

    Algorithms to sample from $\beta$-ensembles come in three different guises, which we describe in increasing order of complexity.
    First, when $\beta>0$ and $V$ is the negative logarithm of a Gaussian, gamma, or beta pdf, we speak of the Hermite, Laguerre, and Jacobi $\beta$-ensemble, respectively.
    \citet{DuEd02} showed that the Hermite and Laguerre $\beta$-ensembles can be characterized as the eigenvalue distribution of a random tridiagonal matrix with easy-to-sample independent entries.
    This gives a $\calO(N^2)$ sampling algorithm.
    \citet{DuEd02} expected the same to hold for the Jacobi $\beta$-ensemble, which was later proved by \citet{KiNe04}.

    Second, when $\beta=2$, the generic projection DPP sampler of \citet{HKPV06} applies.
    That there actually exists an exact sampler is maybe surprising, and it is a particular feature of DPPs among interacting particle systems.
    The procedure remains costly, though.
    It has at least a cubic cost in $N$, with the total cost further depending on rejection sampling subroutines, the cost of which is case-dependent and has been left uninvestigated.
    Additionally, it is required in this procedure to numerically evaluate the first $N$ orthonormal polynomials $p_k, k=0,\dots,N-1$ with respect to $e^{-V(x)}\diff x$.
    This is traditionally done using their recurrence relation
    \begin{equation}
        \label{eq:intro_recurrence_relation_orthogonal_polynomials_wrt_reference_measure}
        \sqrt{b_{k-1}} p_{k-1}(x)
        + a_kp_k(x)
        + \sqrt{b_k} p_{k+1}(x)
        = xp_k(x),
    \end{equation}
    see e.g., \citet{Gau04}.
    In the Hermite, Laguerre, and Jacobi case, the recurrence coefficients $a_k,b_k$ are known, but as we just saw, these three cases are already covered by a computationally more efficient tridiagonal matrix model.
    When the coefficients in \eqref{eq:intro_recurrence_relation_orthogonal_polynomials_wrt_reference_measure} are not known, one can either rely on the Stieltjes algorithm \citep[Section 2.2]{Gau04} or numerically solve a Riemann-Hilbert problem \citep{Olv11}.
    The latter is theoretically only an $\calO(N)$ overcost.

    A third algorithm is Markov chain Monte Carlo \citep[MCMC, see e.g.,][]{RoCa04}, which is in principle valid for any $\beta>0$ and any $V$ that gives a well-defined distribution in \eqref{eq:joint_distribution_beta_ensemble}.
    MCMC only requires to evaluate the pdf in \eqref{eq:joint_distribution_beta_ensemble} pointwise and up to a constant, but it only delivers approximate samples of \eqref{eq:joint_distribution_beta_ensemble}, in the sense that it outputs a sample from a Markov chain with \eqref{eq:joint_distribution_beta_ensemble} as its limiting distribution.
    The issue is that the performance of MCMC samplers --~the mixing time of the Markov chain~-- deteriorates when $N\gg 1$, which is typically the regime of interest for conjectures in random matrix theory or statistical physics.
    Hybric Monte Carlo \citep[HMC,][]{DKPR87, Nea11} is an MCMC sampler that has demonstrated good mixing in high-dimensional problems, provided one can evaluate the gradient of the pdf in \eqref{eq:joint_distribution_beta_ensemble}.
    For $\beta$-ensembles with $\beta=2$, \citet{ChFe18} provide empirical evidence that the output of HMC successfully reproduces known limiting features of the large $N$ regime, and they raise new conjectures.
    The main limitation of this approach is the large number of MCMC iterations required by HMC: \citet{ChFe18} require at least $10^4$ iterations and are restricted to $N\leq 50$.

    In this paper, we further investigate fast samplers of $\beta$-ensembles.
    Our contributions are twofold.
    First we gather existing tools from different communities to give an elementary proof of the tridiagonal models for the Hermite, Laguerre, and Jacobi $\beta$-ensembles.
    This proof crucially relies on successive reparametrizations of the recurrence coefficients in \eqref{eq:intro_recurrence_relation_orthogonal_polynomials_wrt_reference_measure} and unifies the treatment of tridiagonal models for the three classical $\beta$-ensembles, pioneered with two different methods by \citet{DuEd02} and \citet{KiNe04}.
    We take no credit for the originality of the proof: the credit should go -- among others cited below -- to \cite{DeNa12}, who studied distributions on the space of moments, and recognized these three $\beta$-ensembles as corresponding to natural distributions over moments.
    We rather take credit for a stand-alone and elementary version of this unifying proof, using only basic facts on orthogonal polynomials and linear algebra.

    Our second contribution is an MCMC sampler that applies to polynomial potentials. For $V$ of degree at most 6, we give experimental evidence that the resulting Markov chain mixes extremely fast, which confirms an intuition of \citet[Section 2]{KrRiVi16}.
    On a variety of potentials, we demonstrate that our simple
    % Metropolis-within-
    Gibbs Markov kernel yields a much cheaper (although approximate) sampler than the exact procedure of \citet[for $\beta=2$]{HKPV06}. Importantly, our Markov kernel outperforms the HMC approach of \citet{ChFe18} in the particular case of $\beta$-ensembles.
    To give an idea, we are able to reproduce known features of \eqref{eq:joint_distribution_beta_ensemble} for values of $N$ in the hundreds, using only a few Gibbs sweeps, totaling a few seconds on a modern laptop: it takes roughly $10$s for $N=200$ points and less than a minute for $N=1000$ points.
    That such a basic Gibbs kernel can outperform HMC may seem surprising.
    The key is that we exploit the structure of $\beta$-ensembles by defining a Markov chain on the recurrence coefficients of orthogonal polynomials.
    These recurrence coefficients are defined similarly to \eqref{eq:intro_recurrence_relation_orthogonal_polynomials_wrt_reference_measure}, but this time using the orthogonal polynomials with respect to a random discrete measure, the support of which is the $\beta$-ensemble.
    Intuitively, in that new parametrization, the interaction between variables is short-range compared to the interaction among particles in \eqref{eq:joint_distribution_beta_ensemble}, and Gibbs sampling thus becomes easier.
    In this sense, our MCMC kernel extends the tridiagonal models of the three classical $\beta$-ensembles.
    Finally, we note that all experiments can be reproduced using our \DPPy\, toolbox \citep*[][\href{https://github.com/guilgautier/DPPy}{https://github.com/guilgautier/DPPy}]{GPBV19}, which features all samplers described here.

    The rest of the paper is organized as follows.
    In \Secref{sec:tridiagonal_models_for_the_three_classical_beta_ensembles}, we survey existing results on tridiagonal models for $\beta$-ensembles.
    Known exact sampling results actually take the form of diagonalizing random Jacobi matrices, that is, tridiagonal matrices whose coefficients are the recurrence coefficients of a sequence of orthogonal polynomials.
    We introduce the necessary background on orthogonal polynomials in \Secref{sec:atomic_measures_moments_and_jacobi_matrices}.
    In \Secref{sec:making_the_change_of_variable}, we perform the change of variables between the points of a $\beta$-ensemble augmented with weights and the entries of a Jacobi matrix.
    In \Secref{sec:proving_the_three_classical_tridiagonal_models}, we give an elementary proof of the known results on tridiagonal models.
    Finally, in \Secref{sec:gibbs_sampling}, we demonstrate a simple MCMC scheme based on a Gibbs kernel, to sample Jacobi matrices corresponding to $\beta$-ensembles with polynomial potentials.

% section introduction (end)

% \clearpage
\section{Classical $\beta$-ensembles and their tridiagonal models} % (fold)
\label{sec:tridiagonal_models_for_the_three_classical_beta_ensembles}

    The Hermite, Laguerre and Jacobi $\beta$-ensembles were originally defined for $\beta\in \lrcb{1, 2, 4}$, as the eigenvalue distribution of some random full matrices; see e.g., \cite{AnGuZe09}.
    The latter matrices are symmetrizations of matrices filled with i.i.d.\,real, complex, or quarternionic Gaussian variables when $\beta$ is respectively $1,2,$ and $4$.
    In this section, we recall the seminal results of \citet{DuEd02} and \citet{KiNe04} regarding the construction of real-symmetric tridiagonal random matrices, whose eigenvalues follow the classical Hermite, Laguerre and Jacobi $\beta$-ensembles.
    These results actually allow any $\beta \in (0, +\infty)$, and can be interpreted as samplers with $\calO(N^2)$ time complexity, by simply diagonalizing the proposed tridiagonal matrices.

    We note $\bfa \triangleq  \lrp{a_{1}, \dots, a_{N}} \in \mathbb{R}^N$, $\bfb \triangleq \lrp{b_{1}, \dots, b_{N-1}} \in(0,+\infty)^{N-1}$, and define the tridiagonal matrix

    \begin{equation}
    \label{eq:jacobi_matrix_a_b}
        \Jab
        \triangleq
        \begin{bmatrix}
        a_{1}           & \sqrt{b_{1}}  &                   & (0)           \\
        \sqrt{b_{1}}    & a_{2}         & \ddots            &               \\
                        & \ddots        & \ddots            & \sqrt{b_{N-1}}\\
        (0)             &               & \sqrt{b_{N-1}}    & a_{N}
        \end{bmatrix}.\\[1em]
    \end{equation}
    Such a matrix is called a \emph{Jacobi matrix}.
    As we will see in \Secref{sec:atomic_measures_moments_and_jacobi_matrices}, Jacobi matrices naturally arise in the study of orthogonal polynomials.

    To build the random tridiagonal matrix model for the Hermite $\beta$-ensemble, \citet{DuEd02} started from to the original random full matrix model defining the Hermite ensemble with $\beta=1$.
    More specifically, they considered the symmetric part of a random matrix filled with i.i.d. unit Gaussians, and applied Householder transformations to reduce it to tridiagonal form, as in, e.g., \citet[Section 5.4.8]{GoVL13}.

    % \clearpage
    \begin{theorem}[\citealp{DuEd02}, II C, for $\mu=0$ and $\sigma=1$]
    \label{th:hermite_beta_ensemble}
        The Hermite $\beta$-ensemble, defined as \eqref{eq:joint_distribution_beta_ensemble} with potential $V(x) = \frac{1}{2\sigma^2}(x - \mu)^2$, corresponds to the eigenvalue distribution of the tridiagonal matrix $J_{\bfa, \bfb}$ in \eqref{eq:jacobi_matrix_a_b}, with entries drawn independently as
        \begin{equation}
            \label{eq:hermite_beta_ensemble_param_distrib}
            a_n \sim \Gauss[\mu, \sigma^2],
            \quad
            \text{and}
            \quad
            b_n \sim \Gama[\frac{\beta}{2}(N-n), \sigma^2].
        \end{equation}
        % For $\mu=0$ and $\sigma=1$, we recover the result of \citet{DuEd02}.
    \end{theorem}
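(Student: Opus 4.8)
The plan is to diagonalize $\Jab$ and track the distribution it induces on its eigenvalues through the classical correspondence between Jacobi matrices and their spectral (equivalently, orthogonality) measures. Since every $b_n>0$, the matrix $\Jab$ is an irreducible Jacobi matrix, so $e_1$ is a cyclic vector: the spectrum $\lambda_1<\dots<\lambda_N$ is simple, and, writing $u_1,\dots,u_N$ for the associated unit eigenvectors and $w_n\triangleq\lrsp{e_1,u_n}^2$, the spectral measure $\mu_{\bfa,\bfb}=\sum_n w_n\delta_{\lambda_n}$ has full support ($w_n>0$, $\sum_n w_n=1$). As made precise in Sections~\ref{sec:atomic_measures_moments_and_jacobi_matrices}--\ref{sec:making_the_change_of_variable}, the map $(\bfa,\bfb)\mapsto(\bflambda,\bfw)$ is then a diffeomorphism from $\bbR^N\times(0,+\infty)^{N-1}$ onto $\{\lambda_1<\dots<\lambda_N\}$ times the interior of the simplex $\weightSpace$, its inverse reconstructing $(\bfa,\bfb)$ as the three-term recurrence coefficients of the monic orthogonal polynomials $(p_k)_{k\ge 0}$ of $\mu_{\bfa,\bfb}$. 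It then suffices to push the joint law of $(\bfa,\bfb)$ forward through this diffeomorphism and integrate out $\bfw$.

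Two exact identities make the computation collapse. The first is purely spectral: $\Tr \Jab=\sum_n a_n=\sum_n\lambda_n$ and $\Tr\Jab^2=\sum_n a_n^2+2\sum_n b_n=\sum_n\lambda_n^2$, whence
\[
    \sum_{n=1}^N (a_n-\mu)^2=\sum_{n=1}^N(\lambda_n-\mu)^2-2\sum_{n=1}^{N-1}b_n .
\]
The second is the orthogonal-polynomial identity
\[
    \prod_{k=1}^{N-1} b_k^{\,N-k}=\lrabs{\Delta(\bflambda)}^2\prod_{n=1}^N w_n ,
\]
which I would obtain by writing $b_k=\lrnorm{p_k}_{\mu}^2\big/\lrnorm{p_{k-1}}_{\mu}^2$, so that the left-hand side telescopes into $\prod_{k=0}^{N-1}\lrnorm{p_k}_{\mu}^2$, and then identifying this product of squared norms with the Gram/Hankel determinant of the moments of $\mu_{\bfa,\bfb}$, which for an $N$-atom measure equals $\lrabs{\Delta(\bflambda)}^2\prod_n w_n$.

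The third ingredient is the Jacobian of the change of variables. The cleanest route factors $(\bfa,\bfb)\mapsto(\bflambda,\bfw)$ through the moment coordinates $m_j=\lrsp{e_1,\Jab^j e_1}=\sum_n w_n\lambda_n^j$, $j=1,\dots,2N-1$: with the variables ordered $a_1,b_1,a_2,b_2,\dots,a_N$, the map $(\bfa,\bfb)\mapsto(m_1,\dots,m_{2N-1})$ is lower triangular with an explicit product of the $b_k$ on its diagonal, while $(\bflambda,\bfw)\mapsto(m_1,\dots,m_{2N-1})$ has a Vandermonde-type Jacobian; dividing the two and simplifying with the identity above yields
\[
    \lrabs{\det \frac{\partial(\bfa,\bfb)}{\partial(\bflambda,\bfw)}}=\frac{\prod_{k=1}^{N-1} b_k}{\prod_{n=1}^N w_n},
\]
each $b_k$ entering to the first power.

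Assembling the pieces, the law of $(\bfa,\bfb)$ --~the product of the $\Gauss[\mu,\sigma^2]$ densities of the $a_n$ and the $\Gama[\frac{\beta}{2}(N-n),\sigma^2]$ densities of the $b_n$~-- is proportional to $\exp\big(-\tfrac{1}{2\sigma^2}\sum_n(a_n-\mu)^2-\tfrac{1}{\sigma^2}\sum_n b_n\big)\prod_k b_k^{\frac{\beta}{2}(N-k)-1}$. The first identity turns the exponential into $\exp\big(-\tfrac{1}{2\sigma^2}\sum_n(\lambda_n-\mu)^2\big)=\prod_n e^{-V(\lambda_n)}$, the $b$-dependence cancelling exactly; multiplying $\prod_k b_k^{\frac{\beta}{2}(N-k)-1}$ by the $\prod_k b_k$ from the Jacobian gives $\prod_k b_k^{\frac{\beta}{2}(N-k)}=\lrabs{\Delta(\bflambda)}^{\beta}\big(\prod_n w_n\big)^{\beta/2}$ by the second identity; and dividing by the remaining $\prod_n w_n$ leaves a density proportional to $\lrabs{\Delta(\bflambda)}^{\beta}\prod_n e^{-V(\lambda_n)}\cdot\prod_n w_n^{\beta/2-1}$. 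This factorizes in $(\bflambda,\bfw)$: the two vectors are independent, $\bfw$ follows $\Dir(\frac{\beta}{2},\dots,\frac{\beta}{2})$ and integrates out to a finite constant (here $\beta>0$ is used), and the $\bflambda$-marginal is exactly \eqref{eq:joint_distribution_beta_ensemble} with $V(x)=\frac{1}{2\sigma^2}(x-\mu)^2$, restricted to $\lambda_1<\dots<\lambda_N$, i.e.\ the same point process; since $\bflambda$ is by construction the eigenvalue vector of $\Jab$, the theorem follows. I expect the Jacobian step to be the main obstacle: pinning down the exact formula above, and in particular the power-one appearance of each $b_k$ that makes the $b$-dependence cancel, requires the triangular/Vandermonde bookkeeping (or an equivalent direct determinant computation), whereas the rest is elementary once the two identities are in hand. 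Unlike \citet{DuEd02}, who obtained the $\beta\in\{1,2,4\}$ cases by Householder-tridiagonalizing a Gaussian full matrix and computing the Jacobian of that reduction, this route handles all $\beta>0$ uniformly.
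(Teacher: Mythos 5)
Your proposal is correct and is essentially the paper's own argument: it relies on the same three ingredients — the trace identities of \Lemref{lem:trace_J_J^2-ab}, the Hankel--Vandermonde identity $\prod_{k=1}^{N-1} b_k^{N-k}=\lrabs{\Delta(x_1,\dots,x_N)}^2\prod_n w_n$ of \Lemref{lem:H_2N-2_Vandermonde_prod_b}, and the change of variables between $(x_{1:N},w_{1:N-1})$ and $(\bfa,\bfb)$ factored through the moments with Jacobian $\prod_k b_k\big/\prod_n w_n$ as in \Proporef{propo:change_of_variables_favard}. The only difference is the direction: you push the independent Gaussian/Gamma law on $(\bfa,\bfb)$ forward to $(x_{1:N},w_{1:N})$ and integrate out the Dirichlet weights, whereas the paper first establishes \Thref{th:joint_distribution_recurrence_coefficients_Tr_J} by pushing the $\beta$-ensemble-plus-Dirichlet law onto $(\bfa,\bfb)$ and then specializes to the Hermite potential via $\Tr V(\Jab)$ — the computations are identical up to inverting the diffeomorphism.
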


    For the Laguerre $\beta$-ensemble, \citet{DuEd02} used the same linear algebra techniques starting from the original full matrix model defining the Laguerre $\beta$-ensemble for $\beta=1$.
    The latter corresponds to the eigenvalue distribution of the covariance matrix $XX^{\top}$ of i.i.d. $\Gauss[0,I]$ vectors.
    More specifically, they reduced the matrix $X$ to bidiagonal form, see, e.g., \citet[Section 8.3.1]{GoVL13}.

    \begin{theorem}[\citealp{DuEd02}, III B, for $k=\frac{\beta}{2}(M-N+1)$ and $\theta=2$]
    \label{th:laguerre_beta_ensemble}
        \text{ }\\
        The Laguerre $\beta$-ensemble, defined as \eqref{eq:joint_distribution_beta_ensemble} with potential $V(x)= -(k-1)\log(x) + \frac{x}{\theta}$, corresponds to the eigenvalue distribution of the tridiagonal matrix $J_{\bfa, \bfb}$ in \eqref{eq:jacobi_matrix_a_b} parametrized by
        \begin{equation}
        \label{eq:laguerre_beta_ensemble_param}
            \begin{aligned}
            % \begin{adjustbox}{max width=0.93\textwidth}
            % % \small
            %   $
                &a_1
                    = \xi_1,
                \quad\text{and}\quad
                a_{n}
                    = \xi_{2n-2}+\xi_{2n-1},
                \quad\text{for } 2\leq n \leq N,
                \quad\text{and}\\
                &b_{n}
                    = \xi_{2n-1}\xi_{2n},
                \quad\text{for } 1\leq n \leq N-1,
                % $
            % \end{adjustbox}
            \end{aligned}
        \end{equation}
        with independent coefficients
        \begin{equation}
        \label{eq:laguerre_beta_ensemble_param_distrib}
            \begin{aligned}
            % \begin{adjustbox}{max width=0.93\textwidth}
            % % \small
            %   $
                \xi_{2n-1} &\sim \Gama[\frac{\beta}{2}(N-n)+k, \theta],
                    % \quad \text{for } 1\leq n \leq N,
                    \quad \text{and} \quad
       %        \quad
                % \text{and}
                % \quad
                \xi_{2n} &\sim \Gama[\frac{\beta}{2}(N-n), \theta].
                    % \quad \text{for } 1\leq n \leq N-1.
                % $
            % \end{adjustbox}
            \end{aligned}
        \end{equation}
        % For $k=\frac{\beta}{2}(M-N+1)$ and $\theta=2$, we recover the result of \citet{DuEd02}.
    \end{theorem}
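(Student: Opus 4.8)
The plan is to obtain the distribution of the recurrence coefficients $(\bfa,\bfb)$ --- and hence of the eigenvalues of $\Jab$ --- by transporting an augmented copy of the Laguerre $\beta$-ensemble through two successive changes of variables: first from points-plus-weights to $(\bfa,\bfb)$, using the map of \Secref{sec:making_the_change_of_variable}; then from $(\bfa,\bfb)$ to the variables $\xi_i$, via a bidiagonal factorization of $\Jab$. To set things up, I attach to the points $x_1<\dots<x_N$ of the Laguerre $\beta$-ensemble an independent weight vector $(w_1,\dots,w_N)\sim\Dir[\frac{\beta}{2},\dots,\frac{\beta}{2}]$ on the simplex $S_N$, and work with the random atomic measure $\mu=\sum_{n=1}^N w_n\delta_{x_n}$. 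Since $e^{-V(x)}\propto x^{k-1}e^{-x/\theta}$ here, the joint density of $(x,w)$ on $\{0<x_1<\dots<x_N\}\times S_N$ is proportional to $|\Delta(x)|^{\beta}\prod_n x_n^{k-1}e^{-x_n/\theta}\prod_n w_n^{\beta/2-1}$; its $x$-marginal is exactly the Laguerre $\beta$-ensemble, and since the support $\{x_1,\dots,x_N\}$ of $\mu$ is the spectrum of its Jacobi matrix, it suffices to compute the law of that matrix.

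To do so, I apply the change of variables of \Secref{sec:making_the_change_of_variable}, which maps $(x,w)$ bijectively onto $(\bfa,\bfb)\in\bbR^N\times(0,+\infty)^{N-1}$, the recurrence coefficients of the orthonormal polynomials of $\mu$. Combining the Hankel-determinant identity $\Delta(x)^2\prod_n w_n=\prod_{j=1}^{N-1}b_j^{N-j}$ with the Jacobian $|\det\partial(\bfa,\bfb)/\partial(x,w)|=\prod_j b_j/\prod_n w_n$ of this map shows --- this being the common core of the three classical tridiagonal models, \Thref{th:hermite_beta_ensemble} included --- that the law of $(\bfa,\bfb)$ is proportional to $\prod_{j=1}^{N-1}b_j^{\frac{\beta}{2}(N-j)-1}\prod_{n=1}^N e^{-V(\lambda_n)}$, where $\lambda_1<\dots<\lambda_N$ are the eigenvalues of $\Jab$. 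For the Laguerre potential, using $\prod_n\lambda_n=\det\Jab$ and $\sum_n\lambda_n=\operatorname{Tr}(\Jab)$, this reads
\[
    \prod_{j=1}^{N-1}b_j^{\frac{\beta}{2}(N-j)-1}\,\bigl(\det\Jab\bigr)^{k-1}\,e^{-\operatorname{Tr}(\Jab)/\theta},
\]
a law supported on the positive-definite Jacobi matrices, i.e.\ the image of $\{0<x_1<\dots<x_N\}\times S_N$.

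It remains to reparametrize. Every positive-definite Jacobi matrix has a unique factorization $\Jab=LL^{\top}$ with $L$ lower bidiagonal, $L_{nn}=\sqrt{\xi_{2n-1}}$ and $L_{n+1,n}=\sqrt{\xi_{2n}}$, and all $\xi_i>0$; expanding $LL^{\top}$ recovers precisely the relations \eqref{eq:laguerre_beta_ensemble_param}, so $\xi\mapsto(\bfa,\bfb)$ is a bijection from $(0,+\infty)^{2N-1}$ onto the support found above, and it is triangular with Jacobian $\prod_{n=1}^{N-1}\xi_{2n-1}$. One then reads off $\operatorname{Tr}(\Jab)=\sum_{i=1}^{2N-1}\xi_i$, $\det\Jab=(\det L)^2=\prod_{n=1}^{N}\xi_{2n-1}$, and $\prod_{j=1}^{N-1}b_j=\prod_{j=1}^{N-1}\xi_{2j-1}\xi_{2j}$. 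Substituting into the display above and collecting the power of each $\xi_i$, the law of $\xi$ factorizes into a product of independent Gamma densities of common scale $\theta$, of shape $\frac{\beta}{2}(N-n)+k$ for $\xi_{2n-1}$ and $\frac{\beta}{2}(N-n)$ for $\xi_{2n}$ --- which is exactly \eqref{eq:laguerre_beta_ensemble_param_distrib}. Since every map in the chain is a bijection, reversing it proves the theorem.

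I expect the main obstacle to be the middle step: verifying that the Jacobian of $(x,w)\mapsto(\bfa,\bfb)$ equals $\prod_j b_j/\prod_n w_n$, so that, after invoking the Hankel-determinant identity, the factor $|\Delta(x)|^{\beta}\prod_n w_n^{\beta/2-1}$ collapses to exactly $\prod_j b_j^{\frac{\beta}{2}(N-j)-1}$ with nothing left over. For $\beta=2$ this is the standard computation for orthogonal-polynomial ensembles; the general-$\beta$ version, together with the bookkeeping of the Dirichlet exponent $\frac{\beta}{2}-1$ carried by the weights, is what \Secref{sec:making_the_change_of_variable} has to supply, building on the orthogonal-polynomial material of \Secref{sec:atomic_measures_moments_and_jacobi_matrices}. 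Once that is in hand, the first and last steps are routine linear algebra and power counting; the only extra point is to check that the bidiagonal factorization of the last step is genuinely onto the support produced by the middle step --- which holds because that support is precisely the set of positive-definite Jacobi matrices.
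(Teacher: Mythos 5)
Your proposal is correct and follows essentially the same route as the paper: augment the ensemble with $\Dir[\beta/2]$ weights and pass to the recurrence coefficients (the paper's \Thref{th:joint_distribution_recurrence_coefficients_Tr_J}, proved via the moment map in \Secref{sec:making_the_change_of_variable}), then reparametrize through the bidiagonal Cholesky factorization $\Jab=\Xi\,\Xi^{\top}$ with Jacobian $\prod_{n=1}^{N-1}\xi_{2n-1}$ and the identities $\Tr\Jab=\sum_n\xi_n$, $\det\Jab=\prod_n\xi_{2n-1}$, and finish by power counting to obtain the independent Gamma laws. The Jacobians, the Hankel identity $\Delta^2\prod_n w_n=\prod_j b_j^{N-j}$, and the final exponent bookkeeping all match the paper's Propositions~\ref{propo:change_of_variables_favard} and~\ref{propo:jacobian_xi_a,b} and Lemma~\ref{lem:sum-prod_x_xi}.
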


    \cite{DuEd02} left the construction of a tridiagonal model for the Jacobi $\beta$-ensemble as an open problem.
    \citet{KiNe04} found such a model as a byproduct of their study the Circular $\beta$-ensemble.
    The latter ensemble is originally defined, for $\beta\in \lrcb{1, 2, 4}$, as the eigenvalue distribution of orthogonal, unitary and symplectic matrices drawn uniformly at random from the corresponding Haar measures.
    First, \citet{KiNe04} applied Householder transformations to reduce to quindiagonal form a unitary matrix drawn uniformly at random.
    Second, they projected the resulting eigenvalues onto the real line to obtain the tridiagonal model for the Jacobi $\beta$-ensemble.

    \begin{theorem}[\citealp{KiNe04}, Theorem 2]
    \label{th:jacobi_beta_ensemble}
        The Jacobi $\beta$-ensemble, defined as \eqref{eq:joint_distribution_beta_ensemble} with potential $V(x)= -[(a-1)\log(x) + (b-1)\log(1-x)]$, corresponds to the eigenvalue distribution of the tridiagonal matrix $J_{\bfa, \bfb}$ in \eqref{eq:jacobi_matrix_a_b} parametrized by
        \begin{equation}
        \label{eq:jacobi_beta_ensemble_param}
            \begin{aligned}
            % \begin{adjustbox}{max width=0.93\textwidth}
            % % \small
            %   $
                a_1
                    &= c_1,
                &a_{n}
                    = (1-c_{2n-3})c_{2n-2} + (1-c_{2n-2})c_{2n-1},
                \quad\text{for } 2\leq n \leq N,\\
                b_1
                    &= c_1(1-c_1)c_2,
                &b_{n}
                    = (1-c_{2n-2})c_{2n-1}(1-c_{2n-1})c_{2n},
                \quad\text{for } 2\leq n \leq N-1,
                % $
            % \end{adjustbox}
            \end{aligned}
        \end{equation}
        with independent coefficients
        \begin{equation}
        \label{eq:jacobi_beta_ensemble_param_distrib}
            \begin{aligned}
            % \begin{adjustbox}{max width=0.93\textwidth}
            % % \small
            %   $
                c_{2n-1}
                    &\sim \Beta[\frac{\beta}{2}(N-n)+a,
                                \frac{\beta}{2}(N-n)+b],
                    % \quad \text{for } 1\leq n \leq N,
                    \quad\text{and}\\
       %        \quad
                % \text{and}
                % \quad
                c_{2n}
                    &\sim \Beta[\frac{\beta}{2}(N-n),
                                \frac{\beta}{2}(N-n-1)+a+b].
                    % \quad \text{for } 1\leq n \leq N-1.
                % $
            % \end{adjustbox}
            \end{aligned}
        \end{equation}
    \end{theorem}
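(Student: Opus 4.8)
The plan is to obtain Theorem~\ref{th:jacobi_beta_ensemble} as the last link of a chain of explicit changes of variables, exactly as announced in the introduction, and in complete parallel with the way Theorems~\ref{th:hermite_beta_ensemble} and~\ref{th:laguerre_beta_ensemble} will be handled in \Secref{sec:proving_the_three_classical_tridiagonal_models}. First, I augment the ensemble with weights: given $(x_1,\dots,x_N)$ drawn from the Jacobi $\beta$-ensemble on $(0,1)^N$ --- so with joint density proportional to $\lrabs{\Delta(x)}^{\beta}\prod_n x_n^{a-1}(1-x_n)^{b-1}$, using $e^{-V(x)}=x^{a-1}(1-x)^{b-1}$ --- I draw $(w_1,\dots,w_N)\sim\Dir[\tfrac{\beta}{2},\dots,\tfrac{\beta}{2}]$ independently and form the random probability measure $\mu=\sum_{n=1}^N w_n\delta_{x_n}$, supported on $N$ atoms of $(0,1)$. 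On $\{x_1<\dots<x_N\}\times\weightSpace$ the joint density of $(x,w)$ is then proportional to $\lrabs{\Delta(x)}^{\beta}\prod_n x_n^{a-1}(1-x_n)^{b-1}\prod_n w_n^{\beta/2-1}$. By the correspondence between $N$-atom measures and $N\times N$ Jacobi matrices recalled in \Secref{sec:atomic_measures_moments_and_jacobi_matrices}, $\mu$ has a Jacobi matrix $\Jab$ as in \eqref{eq:jacobi_matrix_a_b} whose spectral decomposition returns exactly the $x_n$ as eigenvalues and the $w_n$ as the squared first coordinates of the normalized eigenvectors; thus $(x,w)\leftrightarrow(\bfa,\bfb)$ is a diffeomorphism onto $\bbR^N\times(0,\infty)^{N-1}$, and I push the density above forward to an explicit density in $(\bfa,\bfb)$ using the Jacobian computed in \Secref{sec:making_the_change_of_variable}, which rests on the identity $\prod_{n=1}^{N-1}b_n^{\,N-n}=\lrabs{\Delta(x)}^2\prod_n w_n$.

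The second move is to reparametrize the recurrence coefficients. On $(0,1)$ the natural coordinates are not $(\bfa,\bfb)$ but canonical-moment parameters $c_1,\dots,c_{2N-1}\in(0,1)$ bound to $(\bfa,\bfb)$ through \eqref{eq:jacobi_beta_ensemble_param}, where the lines for $n=1$ are the general formulas read with the convention $c_0=0$. This map factors as the Stieltjes (continued-fraction) substitution $a_1=\xi_1$, $a_n=\xi_{2n-2}+\xi_{2n-1}$, $b_n=\xi_{2n-1}\xi_{2n}$ --- the very one used for the Laguerre case in \eqref{eq:laguerre_beta_ensemble_param}, which puts the Jacobi matrix of any measure on $(0,\infty)$ in bidiagonal-times-transpose form --- followed by $\xi_k=(1-c_{k-1})c_k$, which restricts the support to $(0,1)$. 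Both substitutions are triangular, so their Jacobians are monomials, and the overall Jacobian of $(\bfa,\bfb)\leftrightarrow(c_1,\dots,c_{2N-1})$ is a monomial in the $c_k$ and the $1-c_k$.

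It remains to re-express, in the $c$-coordinates, the $(\bfa,\bfb)$-density found above. Besides the Jacobian just discussed, this uses the determinant identities $\prod_n x_n=\det\Jab$ and $\prod_n(1-x_n)=\det(I-\Jab)$: in the bidiagonal form $\det\Jab$ is the product of the odd-indexed $\xi_k$, hence $\prod_n x_n=\prod_{n=1}^N(1-c_{2n-2})c_{2n-1}$, and applying the same to $I-\Jab$ --- the Jacobi matrix of the reflected measure, whose canonical moments are $1-c_k$ for odd $k$ and $c_k$ for even $k$ --- gives $\prod_n(1-x_n)=\prod_{n=1}^N(1-c_{2n-2})(1-c_{2n-1})$. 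Together with $\lrabs{\Delta(x)}^2\prod_n w_n=\prod_{n=1}^{N-1}b_n^{\,N-n}$ from the first step, these turn the whole density into $\prod_k c_k^{\alpha_k-1}(1-c_k)^{\gamma_k-1}$; matching exponents term by term must give exactly the parameters of \eqref{eq:jacobi_beta_ensemble_param_distrib}, so the $c_k$ are independent Betas. Finally, since the eigenvalues of $\Jab$ are a deterministic function of $(\bfa,\bfb)$, hence of $(c_1,\dots,c_{2N-1})$, running the bijection backwards shows that drawing the $c_k$ as in \eqref{eq:jacobi_beta_ensemble_param_distrib} and diagonalizing $\Jab$ returns the $x$-marginal of the law built in the first step, i.e.\ the Jacobi $\beta$-ensemble.

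The main obstacle is the exponent bookkeeping of the last step. Three sources must be combined into the right power of each $c_k$ and $1-c_k$: the Vandermonde--weight factor $\lrabs{\Delta(x)}^{\beta}\prod_n w_n^{\beta/2-1}$ (rewritten via $\prod b_n^{N-n}$), the potential factor $\prod_n x_n^{a-1}(1-x_n)^{b-1}$ (rewritten via $\det\Jab$ and $\det(I-\Jab)$), and the two Jacobians. One must see that the $\tfrac{\beta}{2}(N-n)$-type shifts come entirely from the Jacobians --- symmetrically in $c_{2n-1}$ and $1-c_{2n-1}$ for the odd-indexed parameters, asymmetrically for the even-indexed ones --- while the $+a$, $+b$, and $+(a+b)$ shifts come entirely from the potential through the two determinants, so that odd- and even-indexed $c_k$ inherit the advertised asymmetric Beta parameters. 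Getting the endpoint conventions right ($n=1$ and $n=N$, the $c_0=0$ normalization, the ordering and the $N!$ factors when restricting to $x_1<\dots<x_N$) is the remaining fiddly part; everything else is the routine algebra of orthogonal polynomials and tridiagonal linear algebra set up in Sections~\ref{sec:atomic_measures_moments_and_jacobi_matrices}--\ref{sec:making_the_change_of_variable}.
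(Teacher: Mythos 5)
Your route is essentially the paper's: augment the ensemble with $\Dir[\beta/2]$ weights, change variables to the recurrence coefficients via \Thref{th:joint_distribution_recurrence_coefficients_Tr_J}, then to the Stieltjes parameters $\xi_{1:2N-1}$ as in the Laguerre case (\Proporef{propo:jacobian_xi_a,b}), then to canonical moments $c_{1:2N-1}$ (\Proporef{propo:jacobian_c_xi}), and finally match exponents using $\det \Jab$, $\det(I_N-\Jab)$ and $\prod_n b_n^{N-n}=\lrabs{\Delta(x)}^2\prod_n w_n$. The one genuine divergence is the sub-lemma $\prod_n(1-x_n)=\prod_{k=1}^{2N-1}(1-c_k)$: you obtain it by applying the $\det\Jab$ formula to the reflected measure $x\mapsto 1-x$, invoking the fact that reflection sends odd-indexed canonical moments $c_k$ to $1-c_k$ and fixes even-indexed ones, whereas the paper computes $\lrabs{\overline{H}_{2N-1}}/\lrabs{\underline{H}_{2N-2}}$ via the auxiliary $\gamma_k$ quantities of \citet[Theorem 1.4.10]{DeSt97} (\Lemref{lem:prod_x,1-x_c}). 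Your reflection argument is arguably cleaner, but note that you are citing the reflection property of canonical moments without proof, just as the paper cites Dette--Studden; neither route is self-contained at that point. One bookkeeping misstatement to fix: the $\frac{\beta}{2}(N-n)$-type exponents do \emph{not} come from the Jacobians --- those are degree-one monomials ($\prod_i\xi_{2i-1}$ and $\prod_k(1-c_k)$) and can only shift exponents by $\pm1$; the $(N-n)$-dependent powers come from the factor $\prod_n b_n^{\frac{\beta}{2}(N-n)-1}$ of \eqref{eq:joint_distribution_a,b_Tr-J}, distributed onto $c_{2n-2},c_{2n-1},c_{2n}$ and their complements through the substitutions $b_n=\xi_{2n-1}\xi_{2n}$ and $\xi_k=(1-c_{k-1})c_k$, while the Jacobians and the potential supply the remaining adjustments. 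Since you list all three sources and the final term-by-term matching (which you defer) would force the correct Beta parameters, this is a narrative slip rather than a gap, but the deferred algebra is exactly the content of the paper's final display and should be carried out to claim \eqref{eq:jacobi_beta_ensemble_param_distrib}.
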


    Observe how the stars align for these three special $\beta$-ensembles: Hermite, Laguerre, and Jacobi. The coefficients in successive parameterizations of the Jacobi matrix $\Jab$ are independent with easy-to-sample distributions.
    From a practical point of view, for any $\beta>0$, the computation of the eigenvalues of these random real-symmetric tridiagonal matrices can be seen as a $\mathcal{O}(N^2)$ sampler for each of the model; see \citet{CoRo13} for practical approaches to diagonalizing such matrices that can even run in quasi-linear time.

    Studying distributions over the space of moments, \citet{DeNa12} elegantly derived the three classical tridiagonal models as the supports of random atomic measures corresponding to natural moment distributions.
    On our side, we provide a unified treatment of these three classical models using a more pedestrian, sampling-motivated approach. To do this, we consider an atomic measure $\mu=\sum_{n=1}^{N} \omega_n \delta_{x_n}$, whose support points are distributed as a target $\beta$-ensemble, and take the Jacobi matrix $\Jab$ in \eqref{eq:jacobi_matrix_a_b} with coefficients the recurrence coefficients \eqref{eq:intro_recurrence_relation_orthogonal_polynomials_wrt_reference_measure} of the orthonormal polynomials w.r.t.\;$\mu$.
    We shall see in \Secref{sec:atomic_measures_moments_and_jacobi_matrices} that the recurrence coefficients are a suitable reparametrization of the atomic measure $\mu$.
    In particular, the support of $\mu$ actually coincides with the eigenvalues of $\Jab$, so that a tridiagonal model for the support of $\mu$ follows from knowing how to randomize $\Jab$.

    %Thus, a proper randomization of the coefficients of $\Jab$ followed by the computation of the eigenvalues of this tridiagonal matrix yields an exact sample of the target $\beta$-ensemble, with $\calO(N^2)$ time complexity.
    % This is summarized in \Figref{fig:parametrizations}.
    % \begin{figure}[ht]
    %   \centering
    %   \includegraphics[scale=0.3]{param_diagram.pdf}
    %   \caption{Parametrizations}
    %   \label{fig:parametrizations}
    % \end{figure}
    The first step of our proof will be to rederive \Thref{th:joint_distribution_recurrence_coefficients_Tr_J}, which allows changing variables from the nodes and weights of an atomic measure $\mu$ to the recurrence coefficients defining the Jacobi matrix $\Jab$.
    Note that the specific choice of distribution on the weights is simply of mathematical convenience.
    \begin{theorem}[\citealp{KrRiVi16}, Proposition 2]
    \label{th:joint_distribution_recurrence_coefficients_Tr_J}
        Consider a random atomic measure $\mu=\sum_{n=1}^{N} \omega_n \delta_{x_n}$, with nodes and weights independently distributed according to a $\beta$-ensemble with potential $V$ \eqref{eq:joint_distribution_beta_ensemble} and a Dirichlet $\Dir[\beta/2]$, respectively.
        Otherly put, the joint distribution of $(x_{1}, \dots, x_{N}, w_{1}, \dots, w_{N})$ is proportional to
        \begin{equation}
        \label{eq:joint_distribution_x_w}
            \lrabs{\Delta \lrp{x_{1}, \dots, x_{N}}}^{\beta}
            \e^{- \sum\limits_{n=1}^{N} V(x_n)}
            \diff x_{1:N}
            \prod_{n= 1}^N
                w_n^{\frac{\beta}{2}-1}
                \indic_{w_n \geq 0}
                \indic_{\sum_{n=1}^{N} w_n = 1}
            \diff w_{1:N-1}.
        \end{equation}
        Then, the recurrence coefficients $(a_{1}, \dots, a_{N}, b_{1}, \dots, b_{N-1})$ of $\mu$ have joint distribution proportional to
        \begin{equation}
        \label{eq:joint_distribution_a,b_Tr-J}
            \prod_{n=1}^{N-1}
                b_n^{\frac{\beta}{2}(N-n)-1}
            \e^{- \Tr[V(\Jab)]}
                \diff a_{1:N}
                \diff b_{1:N-1}.
        \end{equation}
    \end{theorem}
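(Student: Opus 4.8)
The plan is to obtain \eqref{eq:joint_distribution_a,b_Tr-J} from \eqref{eq:joint_distribution_x_w} by a single change of variables, along the map
\[
  \Phi\colon (x_1,\dots,x_N,w_1,\dots,w_{N-1})\longmapsto(a_1,\dots,a_N,b_1,\dots,b_{N-1})
\]
sending the nodes and weights of $\mu=\sum_{n}w_n\delta_{x_n}$ --~with $x_1<\dots<x_N$, $w_n>0$ and $w_N\triangleq 1-\sum_{n<N}w_n$~-- to the recurrence coefficients of its orthonormal polynomials, that is, the entries of $\Jab$; both sides carry $2N-1$ free coordinates. I would first recall (this is what \Secref{sec:atomic_measures_moments_and_jacobi_matrices} sets up) that $\Phi$ is a diffeomorphism from $\{x_1<\dots<x_N\}\times\{w>0,\ \sum_n w_n=1\}$ onto $\bbR^N\times(0,+\infty)^{N-1}$: in the forward direction one runs Gram--Schmidt on $1,x,\dots,x^{N-1}$ in $L^2(\mu)$, which does not degenerate precisely because $\mu$ has exactly $N$ atoms, so every $b_n>0$; in the backward direction, $b_n>0$ makes $e_1$ cyclic for $\Jab$, hence $\Jab=UDU^{\top}$ with $D=\diag(x_1,\dots,x_N)$ ($x_1<\dots<x_N$ distinct) and one sets $w_n=\langle e_1,Ue_n\rangle^2$, recovering a probability measure whose Jacobi matrix is $\Jab$; the $w_n$ are then the Christoffel numbers $\bigl(\sum_{k=0}^{N-1}p_k(x_n)^2\bigr)^{-1}$, and $\sum_n w_n=\|e_1\|^2=1$. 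Since $\Jab$ is real symmetric with spectrum $\{x_1,\dots,x_N\}$, functional calculus gives $\Tr[V(\Jab)]=\sum_{n=1}^N V(x_n)$, so the exponential factors of \eqref{eq:joint_distribution_x_w} and \eqref{eq:joint_distribution_a,b_Tr-J} already agree; the $N!$ from choosing the chamber $x_1<\dots<x_N$ and the Dirichlet normalising constant are absorbed into the proportionality.

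The heart of the argument, carried out in \Secref{sec:making_the_change_of_variable}, is the Jacobian of $\Phi$. I would compute it by factoring $\Phi$ through the moments $m_k=\sum_n w_n x_n^k$, $k=1,\dots,2N-1$ (with $m_0=1$ fixed): the map $(x,w)\mapsto(m_1,\dots,m_{2N-1})$ has an explicit Vandermonde-type Jacobian, while $(m_1,\dots,m_{2N-1})\mapsto(\bfa,\bfb)$ is governed by the Hankel-determinant formulas for the recurrence coefficients --~and is triangular in a suitable ordering~-- so its Jacobian is a monomial in the Hankel determinants; multiplying the two and simplifying should yield the classical identity
\[
  \lrabs{\det\frac{\partial(a_{1:N},b_{1:N-1})}{\partial(x_{1:N},w_{1:N-1})}}
  =\frac{\prod_{n=1}^{N-1}b_n}{\prod_{n=1}^{N}w_n},
\]
in the spirit of \citet{KiNe04} and \citet{DeNa12}. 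An alternative is a direct induction on $N$ that peels off $a_1$ and $b_1$ and exploits the block structure of the differential, the base case $N=2$ being the one-line computation $\det=-(x_2-x_1)^2=-\,b_1/(w_1w_2)$. I expect this Jacobian --~in particular tracking the constants through the moment factorisation~-- to be the only genuinely delicate point; the rest is bookkeeping.

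It then remains to turn the Vandermonde-and-weights prefactor into a power of the $b_n$. For this I would use the companion identity
\[
  \prod_{n=1}^{N-1}b_n^{\,N-n}=\prod_{i<j}(x_j-x_i)^2\prod_{n=1}^{N}w_n=\lrabs{\Delta(x_{1:N})}^2\prod_{n=1}^{N}w_n,
\]
which follows from the confluent Christoffel--Darboux formula together with the expression of the $w_n$ as Christoffel numbers (or from the same induction). Plugging the Jacobian into the change-of-variables formula, \eqref{eq:joint_distribution_x_w} becomes a density on $\bbR^N\times(0,+\infty)^{N-1}$ proportional to
\begin{align*}
  &\lrabs{\Delta(x_{1:N})}^{\beta}\,\e^{-\Tr[V(\Jab)]}\prod_{n=1}^{N}w_n^{\frac{\beta}{2}-1}\cdot\frac{\prod_{n=1}^{N}w_n}{\prod_{n=1}^{N-1}b_n}\\
  &\qquad=\e^{-\Tr[V(\Jab)]}\,\frac{\bigl(\lrabs{\Delta(x_{1:N})}^{2}\prod_{n=1}^{N}w_n\bigr)^{\beta/2}}{\prod_{n=1}^{N-1}b_n}
  =\prod_{n=1}^{N-1}b_n^{\frac{\beta}{2}(N-n)-1}\,\e^{-\Tr[V(\Jab)]},
\end{align*}
which is exactly \eqref{eq:joint_distribution_a,b_Tr-J}; the simplex and positivity indicators of \eqref{eq:joint_distribution_x_w} disappear because $\Phi$ maps onto all of $\bbR^N\times(0,+\infty)^{N-1}$.
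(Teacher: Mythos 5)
Your proposal is correct and follows essentially the same route as the paper: you factor Favard's map through the moments $m_{1:2N-1}$, compute the two Jacobians (Vandermonde-type for $(x,w)\mapsto m$, triangular/Hankel for $m\mapsto(\bfa,\bfb)$), combine them into $\lrabs{\partial(a,b)/\partial(x,w)}=\prod_n b_n/\prod_n w_n$, and then absorb the Vandermonde-and-weights prefactor via the identity $\prod_{n=1}^{N-1}b_n^{N-n}=\lrabs{\Delta(x_{1:N})}^2\prod_{n=1}^N w_n$ together with $\Tr[V(\Jab)]=\sum_n V(x_n)$, which is exactly the chain of Propositions~\ref{propo:jacobian_x,w_moments}--\ref{propo:change_of_variables_favard} and \Lemref{lem:H_2N-2_Vandermonde_prod_b} used in the paper's proof. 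The final bookkeeping matches the paper's computation verbatim.
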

    In \Secref{sec:making_the_change_of_variable}, we first re-prove that the change of variables underlying \Thref{th:joint_distribution_recurrence_coefficients_Tr_J} is valid. Then, in \Secref{sec:proving_the_three_classical_tridiagonal_models}, we obtain the three classical tridiagonal models of Theorems~\ref{th:hermite_beta_ensemble}, \ref{th:laguerre_beta_ensemble}, and \ref{th:jacobi_beta_ensemble} as instances of this result, using further smart-but-simple changes of variables.
    Before delving into the proof, we first survey how Jacobi matrices naturally appear in the theory of orthogonal polynomials.

    % section tridiagonal_models_for_the_three_classical_beta_ensembles (end)

% section tridiagonal_models_for_the_three_classical_beta_ensembles (end)

% \clearpage
\section{Atomic measures, moments and Jacobi matrices} % (fold)
\label{sec:atomic_measures_moments_and_jacobi_matrices}

    Throughout this section, we let $\mu = \sum_{n=1}^N w_n \delta_{x_n}$ be a discrete probability measure on $\bbR$ with $N$ distinct atoms $x_{1}, \dots, x_{N}$ and positive weights $\omega_{1}, \dots, \omega_{N}$.
    We further denote its moments by
    $$ m_k \triangleq \sum_{n=1}^N w_n x_n^k,\quad k\geq 0.$$

    \subsection{Orthogonal polynomials and Jacobi matrices} % (fold)
    \label{sub:orthogonal_polynomials_jacobi_matrices}

        This section closely follows \citet[Section 1.3]{Sim11}, to which we refer for details.
        Applying the Gram-Schmidt procedure in $L^2(\mu)$ to the monomials $(x\mapsto x^k)_{k=0}^{N-1} $ yields monic polynomials $(P_k)_{k=0}^{N-1}$ with $\deg P_k=k$ and
        \begin{equation}
        \label{eq:orthogonality_atomic_measure}
            \lrsp{P_k, P_\ell}_{\mu}
            \triangleq
                \sum_{n=1}^{N} w_n P_k(x_n) P_\ell(x_n) = 0,
                \quad k\neq \ell.
        \end{equation}
        These polynomials are called the \emph{monic orthogonal polynomials} (monic OPs, in short) with respect to $\mu$.
        We \emph{define} the $N$-th monic OP as
        $$
        P_N(x) =\prod_{n=1}^{N} (x-x_n).
        $$
        Since $\Vert P_N \Vert_\mu \triangleq  \lrsp{P_N, P_N}_{\mu} = 0$, $P_N$ is the zero vector of $L^2(\mu)$: it is orthogonal to all $P_k$ with $k\leq N-1$.

        Furthermore, for any $n<N$, since $\lrsp{xP_n,P_k}_\mu = \lrsp{P_n,xP_k}_\mu = 0$ for $k< n-1$, the polynomial $xP_n$ can be uniquely expressed using only $P_{n-1}$, $P_n$ and $P_{n+1}$.
        This is usually phrased as follows.
        The monic OPs satisfy a three-term recurrence relation involving two sequences of \emph{recurrence coefficients}, namely
        \begin{gather}
        \label{eq:3-terms_recurrence_monic_orthogonal_poly}
        \begin{aligned}
            &P_{- 1} \equiv 0, P_0 \equiv 1 \text{ and} \\
            &x P_{n}(x) = b_{n}P_{n-1}(x) + a_{n+1}P_{n}(x) + P_{n+1}(x),
            \quad \forall 0 \leq n < N,
        \end{aligned}
        \end{gather}
        where $\bfa = a_{1:N} = \lrp{a_n} \in \mathbb{R}^N$, and $\bfb = b_{1:N-1} =\lrp{b_n}\in(0,+\infty)^{N-1}$.
        These relations can be written in matrix form as
        \begin{equation}
        \label{eq:3-terms_recurrence_monic_orthogonal_poly_matrix}
            \begin{bmatrix}
                a_{1}   & 1             &                   & (0)   \\
                b_{1}   & a_{2}         & \ddots            &       \\
                        & \ddots        & \ddots            & 1     \\
                (0)     &               & b_{N-1}           & a_{N}
            \end{bmatrix}
            \begin{bmatrix}
                P_0(x)        \\
                \vdots        \\
                P_{N-2}(x)    \\
                P_{N-1}(x)
            \end{bmatrix}
            = x
            \begin{bmatrix}
                P_0(x)        \\
                \vdots        \\
                P_{N-2}(x)    \\
                P_{N-1}(x)
            \end{bmatrix}
            -
            \begin{bmatrix}
                0           \\
                \vdots      \\
                0           \\
                P_{N}(x)
            \end{bmatrix}.
        \end{equation}
        From \eqref{eq:3-terms_recurrence_monic_orthogonal_poly_matrix}, it is clear that the roots of $P_N$ are also eigenvalues of the tridiagonal matrix $\tildeJab$ appearing on the left-hand side.
        Roots and eigenvalues actually coincide since $P_N$ has $N$ distinct roots by definition.

        A lot more can be said on the links between OPs and their recurrence coefficients.
        For instance, \Proporef{propo:norms_of_Pk} will be of use later on.

        \begin{proposition}
        \label{propo:norms_of_Pk}
        The squared norms of the monic polynomials $(P_n)_{n=0}^{N-1}$ can be expressed as
        \begin{equation}
        \label{eq:monic_square_norm}
            \lrnorm{P_0}_{\mu}^2 = 1,
                \quad
            \lrnorm{P_k}_{\mu}^2 = \prod_{n=1}^k b_n,
                \quad \forall 1\leq k \leq N-1.
        \end{equation}
        \end{proposition}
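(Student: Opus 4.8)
The plan is to prove \eqref{eq:monic_square_norm} by a short induction on $k$, using only the three-term recurrence \eqref{eq:3-terms_recurrence_monic_orthogonal_poly}, the orthogonality \eqref{eq:orthogonality_atomic_measure}, and the fact that each $P_k$ is monic of degree $k$. First I would dispose of the base case: since $P_0\equiv 1$ and $\mu$ is a probability measure, $\lrnorm{P_0}_\mu^2=\lrsp{P_0,P_0}_\mu=\sum_{n=1}^N w_n=1$.

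For the induction step, fix $1\le n\le N-1$ and evaluate $\lrsp{xP_n,P_{n-1}}_\mu$ in two ways. Expanding the first argument via \eqref{eq:3-terms_recurrence_monic_orthogonal_poly} and discarding the $P_n$ and $P_{n+1}$ contributions by orthogonality gives $\lrsp{xP_n,P_{n-1}}_\mu=b_n\lrnorm{P_{n-1}}_\mu^2$. On the other hand, moving the multiplication by $x$ onto the second argument, I would write $xP_{n-1}=P_n+q$ with $\deg q\le n-1$ (this is where monicity of $P_{n-1}$ and $P_n$ is used); since $P_n$ is orthogonal to every polynomial of degree $<n$, hence to $q$, this yields $\lrsp{xP_n,P_{n-1}}_\mu=\lrsp{P_n,xP_{n-1}}_\mu=\lrnorm{P_n}_\mu^2$. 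Comparing the two expressions gives $\lrnorm{P_n}_\mu^2=b_n\lrnorm{P_{n-1}}_\mu^2$ for all $1\le n\le N-1$, and iterating down to $n=0$ produces $\lrnorm{P_k}_\mu^2=\big(\prod_{n=1}^k b_n\big)\lrnorm{P_0}_\mu^2=\prod_{n=1}^k b_n$.

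There is no real obstacle here; the only point deserving care is the identity $\lrsp{P_n,xP_{n-1}}_\mu=\lrnorm{P_n}_\mu^2$, which I would justify by stating two elementary facts explicitly: (i) $xP_{n-1}-P_n$ has degree strictly less than $n$ because $P_{n-1}$ and $P_n$ are monic of degrees $n-1$ and $n$, and (ii) the relations \eqref{eq:orthogonality_atomic_measure} imply $P_n\perp\operatorname{span}\{P_0,\dots,P_{n-1}\}=\operatorname{span}\{1,x,\dots,x^{n-1}\}$. I would also remark that the argument remains valid for $n=N-1$, even though $P_N$ appears in \eqref{eq:3-terms_recurrence_monic_orthogonal_poly} in that case: $P_N$ is the zero vector of $L^2(\mu)$, so in particular $\lrsp{P_N,P_{N-2}}_\mu=0$ and the corresponding term drops out as above.
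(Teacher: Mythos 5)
Your proof is correct and follows essentially the same route as the paper: pairing the three-term recurrence with $P_{k-1}$, using orthogonality and monicity to obtain $\lrnorm{P_k}_\mu^2=b_k\lrnorm{P_{k-1}}_\mu^2$, and iterating down to $\lrnorm{P_0}_\mu^2=1$. Your explicit handling of the case $n=N-1$ via $P_N$ being the zero vector of $L^2(\mu)$ is a welcome clarification, but not a different method.
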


        \begin{proof}
            For $k=0$, $\lrnorm{P_0}_{\mu}^2 = \sum_{n=1}^N w_n = 1$.
            Then, for any $1\leq k\leq N-1$,
            \begin{align*}
                \lrsp{\eqref{eq:3-terms_recurrence_monic_orthogonal_poly},
                  P_{k-1}}_{\mu}
                &\Longleftrightarrow
                \lrsp{x P_{k}, P_{k-1}}_{\mu}
                = \lrsp{b_{k}P_{k-1}, P_{k-1}}_{\mu}\\
                &\Longleftrightarrow
                \lrsp{P_{k}, x P_{k-1}}_{\mu}
                  = b_{k} \lrsp{P_{k-1}, P_{k-1}}_{\mu} \\
                &\Longleftrightarrow
                \lrsp{P_{k}, x^n}_{\mu}
                  = b_{k} \lrnorm{P_{k-1}}_{\mu}^2 \\
                &\Longleftrightarrow
                \lrnorm{P_{k}}_{\mu}^2
                  = b_{k} \lrnorm{P_{k-1}}_{\mu}^2,
            \end{align*}
            and a simple recursion provides $\lrnorm{P_{k}}_{\mu}^2 = \prod_{n=1}^k b_n>0$.
        \end{proof}

        Denoting by $D = \diag\lrp{\lrnorm{P_0}, \dots, \lrnorm{P_{N-1}}}$, \Proporef{propo:norms_of_Pk} yields $\Jab = D^{-1} \tildeJab D$, where we recall that the Jacobi matrix $\Jab$ was defined in \eqref{eq:jacobi_matrix_a_b}.
        This yields the following proposition.
        \begin{proposition}
          \label{propo:atoms_are_eigenvalues}
          The atoms of $\mu$, coincide with the eigenvalues of $\Jab$, where the coefficients of the matrix are taken to be the recurrence coefficients of the monic OPs with respect to $\mu$.
        \end{proposition}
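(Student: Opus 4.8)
The plan is to combine the two facts already assembled above: on the one hand, the roots of $P_N$ are exactly the atoms of $\mu$ and are also the eigenvalues of the non-symmetric tridiagonal matrix $\tildeJab$ appearing in \eqref{eq:3-terms_recurrence_monic_orthogonal_poly_matrix}; on the other hand, $\Jab$ is similar to $\tildeJab$, hence shares its spectrum.

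First I would spell out why the roots of $P_N$ are eigenvalues of $\tildeJab$. Writing $\mathbf{P}(x) = \lrp{P_0(x), \dots, P_{N-1}(x)}^{\top}$, the matrix identity \eqref{eq:3-terms_recurrence_monic_orthogonal_poly_matrix} reads $\tildeJab\, \mathbf{P}(x) = x\, \mathbf{P}(x) - P_N(x)\, e_N$, where $e_N$ is the last canonical basis vector. Evaluating at an atom $x_n$, which is a root of $P_N$ by the definition $P_N(x) = \prod_{m=1}^{N}(x - x_m)$, gives $\tildeJab\, \mathbf{P}(x_n) = x_n\, \mathbf{P}(x_n)$; moreover $\mathbf{P}(x_n) \neq 0$ since its first entry is $P_0(x_n) = 1$. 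So each atom $x_n$ is an eigenvalue of $\tildeJab$. Since the $N$ atoms are pairwise distinct and $\tildeJab$ is an $N \times N$ matrix, they constitute its entire spectrum.

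Second I would record the similarity $\Jab = D^{-1} \tildeJab D$ with $D = \diag\lrp{\lrnorm{P_0}_{\mu}, \dots, \lrnorm{P_{N-1}}_{\mu}}$. By \Proporef{propo:norms_of_Pk} together with $b_n > 0$, all these norms are strictly positive, so $D$ is invertible. A direct entrywise computation, using $\lrnorm{P_k}_{\mu} / \lrnorm{P_{k-1}}_{\mu} = \sqrt{b_k}$, turns the off-diagonal entries $1$ and $b_k$ of $\tildeJab$ into $\sqrt{b_k}$ and leaves the diagonal $\bfa$ unchanged, i.e., it produces exactly the matrix $\Jab$ of \eqref{eq:jacobi_matrix_a_b}. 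A similarity transformation preserves eigenvalues, so $\Jab$ and $\tildeJab$ have the same spectrum, namely the atoms of $\mu$, which is the claim.

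The argument is essentially bookkeeping; the only points that need care are (i) checking that $\mathbf{P}(x_n)$ is genuinely nonzero, which is handled by $P_0 \equiv 1$, and (ii) the index alignment in the entrywise verification of $D^{-1}\tildeJab D = \Jab$, where one must track the shift between the entry $b_k$ sitting at position $(k+1,k)$ of $\tildeJab$ and the ratio of consecutive monic norms. No multiplicity issues arise, since the atoms are distinct and exactly $N$ in number, so counting dimensions suffices to conclude that no further eigenvalues are hidden.
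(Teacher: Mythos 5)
Your proposal is correct and follows essentially the same route as the paper: the atoms are roots of $P_N$, hence eigenvalues of the non-symmetric tridiagonal matrix $\tildeJab$ via the matrix form \eqref{eq:3-terms_recurrence_monic_orthogonal_poly_matrix}, and the similarity $\Jab = D^{-1}\tildeJab D$ with $D=\diag\lrp{\lrnorm{P_0},\dots,\lrnorm{P_{N-1}}}$ (justified by \Proporef{propo:norms_of_Pk}) transfers the spectrum to $\Jab$. You merely spell out details the paper leaves implicit, such as the nonvanishing of $\mathbf{P}(x_n)$ and the distinctness argument that no further eigenvalues exist.
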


        \Proporef{propo:atoms_are_eigenvalues} already gives a tentative $\calO(N^2)$ sampling algorithm for $\beta$-ensembles: find a distribution over Jacobi matrices such that the eigenvalues form the desired $\beta$-ensemble.
        This is precisely what the tridiagonal models of \citet{DuEd02} do; see \Thref{th:hermite_beta_ensemble}.
        To give a complete elementary proof, we need to perform a change of variables from the atoms and weights of $\mu$ to the recurrence coefficients.
        The rest of this section introduces the tools needed for this change of variables, which is then performed in \Secref{sec:making_the_change_of_variable}.

        So far, we have explained how to obtain a Jacobi matrix from an atomic measure with finite support.
        The reverse construction is also possible and elementary.
        This is called Favard's theorem for atomic measures with finite support.
        To save space and because our proof would be a simple copy of Simon's book, we only give a reference.
        We have used the same notation as Simon throughout this section, for ease of reference.
        \begin{theorem}[\citealp{Sim11}, Theorem 1.3.3]
            \label{th:favard}
            Let
            \begin{equation}
                \label{eq:def_atom_and_weight_spaces}
                \atomSpace
                    \triangleq
                        \lrcb{x_1,\dots,x_N \in \bbR
                                \mid x_{1} > \dots > x_{N}}
                \quad\text{and}\quad
                \weightSpace
                    \triangleq
                        \lrcb{\omega_{1}, \dots, \omega_{N-1} > 0
                                \mid \sum_{n=1}^{N-1} \omega_n < 1}.
            \end{equation}
            Favard's map
            \begin{equation}
            \label{eq:favard_map}
                \psi:
                \begin{array}{rcl}
                \bbR_>^{N} \times S_N
                    &\longrightarrow
                    &\bbR^N \times (0, +\infty)^{N-1}\\[0.2em]
                (x_{1:N}, w_{1:N-1})
                    &\longmapsto
                    &(a_{1:N}, b_{1:N- 1})
                \end{array}
            \end{equation}
            linking the nodes and weights of $\mu=\sum_{n=1}^{N} w_n \delta_{x_n}$ with the entries of the corresponding Jacobi matrix $\Jab$ defined in \eqref{eq:jacobi_matrix_a_b},
                is one-to-one and onto.
        \end{theorem}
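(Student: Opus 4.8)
The statement asserts that Favard's map $\psi$ of \eqref{eq:favard_map} is a bijection, so the natural plan is to construct an explicit inverse $\phi$ from the spectral decomposition of the Jacobi matrix and then check that $\phi\circ\psi$ and $\psi\circ\phi$ are both the identity. First I would record that $\psi$ is well defined with the stated codomain: since the $N$ atoms of $\mu=\sum_n w_n\delta_{x_n}$ are distinct, the monomials $1,x,\dots,x^{N-1}$ are linearly independent in $L^2(\mu)$, so Gram--Schmidt produces $N$ honest monic polynomials $P_0,\dots,P_{N-1}$; the recurrence coefficients it extracts via \eqref{eq:3-terms_recurrence_monic_orthogonal_poly} satisfy $a_n\in\bbR$ trivially, and $b_n=\lrnorm{P_n}_\mu^2/\lrnorm{P_{n-1}}_\mu^2>0$ by \Proporef{propo:norms_of_Pk}. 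This is exactly a point of $\bbR^N\times(0,+\infty)^{N-1}$, and the domain and codomain both have dimension $2N-1$, so a bijection is at least plausible on dimensional grounds.

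For the inverse, given an arbitrary $(\bfa,\bfb)\in\bbR^N\times(0,+\infty)^{N-1}$, I would form the real-symmetric Jacobi matrix $\Jab$ of \eqref{eq:jacobi_matrix_a_b}. The one elementary linear-algebra lemma needed is that a real-symmetric tridiagonal matrix with every off-diagonal entry nonzero has $N$ distinct eigenvalues and every eigenvector has nonzero first coordinate; both follow by reading the eigenvalue equation row by row (a vanishing first coordinate propagates down the tridiagonal to force the whole eigenvector to vanish, and a two-dimensional eigenspace would contain such a vector). Writing $\Jab=\sum_{n=1}^{N} x_n v^{(n)}(v^{(n)})^\top$ with $x_1>\dots>x_N$ and unit eigenvectors $v^{(n)}$, I set $w_n\triangleq (v^{(n)}_1)^2>0$. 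Since $(v^{(n)})_n$ is an orthonormal basis, $\sum_n w_n=\lrnorm{e_1}^2=1$, hence $\sum_{n=1}^{N-1}w_n=1-w_N<1$ and $(x_{1:N},w_{1:N-1})\in\bbR_>^N\times S_N$. This defines $\phi$.

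It then remains to check the two compositions. For $\phi\circ\psi=\mathrm{id}$: starting from $\mu=\sum_n w_n\delta_{x_n}$ and $(\bfa,\bfb)=\psi(\mu)$, \Proporef{propo:atoms_are_eigenvalues} already gives that the eigenvalues of $\Jab$ are the atoms $x_1>\dots>x_N$, so only the weights must be matched. Scaling \eqref{eq:3-terms_recurrence_monic_orthogonal_poly_matrix} by $D$ shows that $\bigl(p_0(x_n),\dots,p_{N-1}(x_n)\bigr)^\top$ (with $p_k$ the orthonormal OPs, $p_0\equiv 1$) is an eigenvector of $\Jab$ for $x_n$, so the squared first coordinate of the normalized eigenvector is $1/\sum_{k=0}^{N-1}p_k(x_n)^2$; and the matrix with entries $\sqrt{w_m}\,p_k(x_m)$ is orthogonal because the $p_k$ are an orthonormal basis of $L^2(\mu)\cong\bbR^N$, whence $\sum_{k=0}^{N-1}p_k(x_n)^2=1/w_n$ and the squared first coordinate is exactly $w_n$. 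For $\psi\circ\phi=\mathrm{id}$: with $\mu=\sum_n w_n\delta_{x_n}=\phi(\bfa,\bfb)$, one has $\langle \Jab^k e_1,e_1\rangle=\sum_n x_n^k w_n=\int x^k\,d\mu$ for all $k$, so $\mu$ is the spectral measure of $\Jab$ at $e_1$ and $q\mapsto q(\Jab)e_1$ is an isometry of $L^2(\mu)$ onto $\bbR^N$ (the Krylov vectors $e_1,\Jab e_1,\dots,\Jab^{N-1}e_1$ are independent because the subdiagonal of $\Jab$ has no zero entry). Transporting the three-term recurrence $x\tilde P_k=\tilde b_k\tilde P_{k-1}+\tilde a_{k+1}\tilde P_k+\tilde P_{k+1}$ of the monic OPs $\tilde P_k$ of $\mu$ through this isometry, a short induction on $k$ shows $\tilde P_k(\Jab)e_1\in\mathrm{span}(e_{k+1})$ and forces $\tilde a_{k+1}=a_{k+1}$ and $\tilde b_k=b_k$ at each step (using \Proporef{propo:norms_of_Pk} and the isometry to match norms for the $b$'s), so the recurrence coefficients of $\mu$ are $(\bfa,\bfb)$.

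The genuinely non-formal points, and the ones I would write out carefully, are the weight identity in $\phi\circ\psi=\mathrm{id}$ --- that the squared first coordinates of the eigenvectors of $\Jab$ are the quadrature weights of $\mu$, which is the Christoffel--Darboux reproducing-kernel identity in disguise --- and the bookkeeping in the induction of $\psi\circ\phi=\mathrm{id}$; the simple-spectrum and nonzero-first-coordinate lemma for Jacobi matrices with positive off-diagonals is standard but also must be stated. Everything else is dimension counting and unwinding definitions, which is precisely why the authors are content to cite \citet[Theorem 1.3.3]{Sim11} rather than reproduce it.
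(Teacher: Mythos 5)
Your proof is correct and follows essentially the same route as the proof the paper relies on: the paper deliberately omits the argument and cites \citet[Theorem 1.3.3]{Sim11}, whose proof is precisely your spectral construction, namely inverting $\psi$ by sending $(\bfa,\bfb)$ to the spectral measure of $\Jab$ at $e_1$, with the weights given by the squared first components of the (simple) eigenvectors. The two points you flag as needing care are indeed the substance of that reference: simplicity of the spectrum and nonvanishing first eigenvector coordinates for tridiagonal matrices with positive off-diagonals, and the identity $\sum_{k=0}^{N-1}p_k(x_n)^2=1/w_n$, which is exactly the paper's equation \eqref{eq:weights_CD_kernel_K_N}.
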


        Note that whenever $w_{1:N-1}\in \weightSpace$, we always set $w_N = 1-\sum_{n=1}^{N-1} \omega_n$, so that $\mu$ is a probability measure.
        As a side remark, the weights $w_{1:N}$ of $\mu$ can also be expressed using evaluations of the monic OPs on the support of $\mu$ \citep[Proposition 1.3.1]{Sim11}: for all $n=1,\dots,N$,
        \begin{equation}
        \label{eq:weights_CD_kernel_K_N}
        w_n = \frac{1}{K_N(x_n, x_n)}\CommaBin
        \quad\text{with }
        K_N(x,y)
          = \sum_{k=0}^{N-1}
            \frac{P_k(x) P_k(y)}
               {\lrnorm{P_k}_{\mu}^2}\cdot
        \end{equation}
        These weights are reminiscent of Gaussian quadrature \citep[Section 1.4.2]{Gau04}, where the OPs are usually w.r.t.
        a non-atomic measure.

    \subsection{Orthogonal polynomials and moments} % (fold)
    \label{sub:orthogonal_polynomials_moments_and_spectral_measures}

        We know from \Thref{th:favard} that the change of variables $\psi$ is a bijection.
        In order to prove that $\phi$ is a $C^1$-diffeomorphism and compute its Jacobian in \Secref{sec:making_the_change_of_variable}, we pause to introduce an intermediate parametrization through moments.
        Intuitively, the moments are responsible for the Vandermonde determinant in \eqref{eq:joint_distribution_beta_ensemble}.

        The monic orthogonal polynomials $(P_n)_{n=0}^N$ w.r.t.\,$\mu$ can also be expressed in terms of the moments $(m_k)$ of $\mu$.
        First, define the following \emph{moment matrices}, see, e.g., \citet[Equation 1.4.3]{DeSt97}.
        \begin{definition}
            \label{def:moment_matrices}
            Let
            \begin{align}
                \underline{H}_{2n}
                = \lrb{m_{i+ j}}_{i,j=0}^{n}
                &=  \begin{bmatrix}
                    m_0         & \cdots    & m_{n} \\
                    \vdots          & \rddots   & \vdots \\
                    m_{n}       & \cdots    & m_{2n}
                  \end{bmatrix}
                \label{eq:moment_matrix_H2N_under}\\
                \underline{H}_{2n+1}
                = \lrb{m_{i+j+1}}_{i,j=0}^{n}
                &=  \begin{bmatrix}
                    m_1     & \cdots    & m_{n+1}   \\
                    \vdots  & \rddots   & \vdots    \\
                    m_{n+1} & \cdots    & m_{2n+1}
                  \end{bmatrix}
                \label{eq:moment_matrix_H2N+1_under}\\
                \overline{H}_{2n+1}
                = \lrb{m_{i+j} - m_{i+j+1}}_{i,j=0}^{n}
                &=  \begin{bmatrix}
                    m_0 - m_1       & \cdots    & m_{n} - m_{n+1}\\
                    \vdots          & \rddots   & \vdots \\
                    m_{n} - m_{n+1} & \cdots    & m_{2n} - m_{2n+1}
                  \end{bmatrix}.
                \label{eq:moment_matrix_H2N+1_over}
                %   = \lrb{m_{i+j+1} - m_{i+j+2}}_{i,j=0}^{n-1}
                %   &=  \begin{bmatrix}
                %           m_1 - m_2       & \cdots    & m_{n} - m_{n+1}\\
                %           \vdots          & \rddots   & \vdots \\
                %           m_{n} - m_{n+1} & \cdots    & m_{2n-1} - m_{2n}
                %       \end{bmatrix}
                %       \label{eq:moment_matrix_H2N_over}
            \end{align}
            where $H$ stands for \href{https://en.wikipedia.org/wiki/Hankel_matrix}{\emph{Hankel} matrix}.
        \end{definition}

        The determinant of the Vandermonde matrix
        \begin{equation}
        \label{eq:def_vandermonde_matrix}
            \Delta\lrp{x_{1}, \dots, x_{n}}
            \triangleq
            \begin{bmatrix}
            1           & \cdots    & 1         \\
            x_1         & \cdots    & x_n       \\
                    & \vdots    &           \\
            x_1^{n-1}   & \cdots    & x_n^{n-1}
            \end{bmatrix},
        \end{equation}
        which appears in the definition of $\beta$-ensembles \eqref{eq:joint_distribution_beta_ensemble}, comes out naturally when taking the determinant of moment matrices associated to discrete measures.

        \begin{lemma}
        \label{lem:moment_matrices_vandermonde}
            It holds that
            \begin{equation}
            \label{eq:moment_matrices_vandermonde_determinant_first}
            \lrabs{\underline{H}_{2n-2}}
              \begin{cases}
                >0,
                  &\text{for any } 1\leq n \leq N,\\
                = \lrabs{\Delta\lrp{x_{1}, \dots, x_{N}}}^2
                  \prod_{n=1}^{N} w_n,
                  & \text{for } n=N, \\
                = 0,
                  & \text{for } n>N.
              \end{cases}
            \end{equation}
            Moreover
            \begin{equation}
            \label{eq:moment_matrices_vandermonde_determinant_second}
            \lrabs{\underline{H}_{2N-1}}
              = \lrabs{\underline{H}_{2N-2}}
                \prod_{n=1}^{N} x_n
                \quad \text{and} \quad
                \lrabs{\overline{H}_{2N-1}}
              = \lrabs{\underline{H}_{2N-2}}
                \prod_{n=1}^{N} (1-x_n).
            \end{equation}
        \end{lemma}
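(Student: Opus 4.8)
The plan is to realize every Hankel matrix in the statement as a product $V\,D\,V^{\top}$, with $V$ a (possibly rectangular) Vandermonde matrix built from the atoms $x_{1:N}$ and $D$ diagonal; the identities then reduce to multiplicativity of the determinant and, in the rectangular case, to the Cauchy--Binet formula. Write $V_n\triangleq\bigl[x_m^{\,i}\bigr]_{0\le i\le n-1,\,1\le m\le N}\in\bbR^{n\times N}$ and $W\triangleq\diag[w_1,\dots,w_N]$. Since $m_k=\sum_{m=1}^N w_m x_m^k$, comparing entries gives, for every $n\ge1$,
\begin{equation*}
  \underline{H}_{2n-2}=\bigl[m_{i+j}\bigr]_{i,j=0}^{n-1}=V_n\,W\,V_n^{\top}.
\end{equation*}

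For $n=N$, $V_N$ is the square Vandermonde matrix $\Delta(x_1,\dots,x_N)$ of \eqref{eq:def_vandermonde_matrix}, so $\lrabs{\underline{H}_{2N-2}}=\lrabs{V_N}^2\prod_{n=1}^N w_n=\lrabs{\Delta(x_1,\dots,x_N)}^2\prod_{n=1}^N w_n$, which is strictly positive because the atoms are distinct and the weights positive; this is the $n=N$ line of \eqref{eq:moment_matrices_vandermonde_determinant_first}. For $1\le n<N$, Cauchy--Binet applied to $V_n\,W\,V_n^{\top}$ yields
\begin{equation*}
  \lrabs{\underline{H}_{2n-2}}=\sum_{S}\lrabs{\Delta\lrp{x_m : m\in S}}^2\prod_{m\in S}w_m,
\end{equation*}
the sum over the $n$-subsets $S\subseteq\{1,\dots,N\}$; every summand is strictly positive, hence so is $\lrabs{\underline{H}_{2n-2}}$, which gives the positivity for all $1\le n\le N$. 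For $n>N$, the same factorization forces $\rank[\underline{H}_{2n-2}]\le\rank[W]\le N<n$, so $\lrabs{\underline{H}_{2n-2}}=0$. This settles \eqref{eq:moment_matrices_vandermonde_determinant_first}.

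For \eqref{eq:moment_matrices_vandermonde_determinant_second} I would apply the same idea to the shifted moments: writing $X\triangleq\diag[x_1,\dots,x_N]$ and $V\triangleq V_N$, the identities $m_{i+j+1}=\sum_m w_m x_m^{\,i}\,x_m\,x_m^{\,j}$ and $m_{i+j}-m_{i+j+1}=\sum_m w_m x_m^{\,i}\,(1-x_m)\,x_m^{\,j}$ show $\underline{H}_{2N-1}=V\,W\,X\,V^{\top}$ and $\overline{H}_{2N-1}=V\,W\,(I-X)\,V^{\top}$. Taking determinants and using $\lrabs{\underline{H}_{2N-2}}=\lrabs{V}^2\prod_n w_n$ from the previous step gives $\lrabs{\underline{H}_{2N-1}}=\lrabs{\underline{H}_{2N-2}}\prod_{n=1}^N x_n$ and $\lrabs{\overline{H}_{2N-1}}=\lrabs{\underline{H}_{2N-2}}\prod_{n=1}^N(1-x_n)$.

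I do not expect any real obstacle: the content is simply that Hankel matrices of moments of a discrete measure factor through a Vandermonde matrix. The only point needing a little care is the range $1\le n<N$, where the matrices are rectangular so the determinant cannot be split multiplicatively and Cauchy--Binet must be used instead --- and, conveniently, that same rectangular factorization read for $n>N$ is exactly what produces the vanishing via a rank bound. One could alternatively obtain the positivity by recognizing $\underline{H}_{2n-2}$ as the Gram matrix of $1,x,\dots,x^{n-1}$ for $\lrsp{\cdot,\cdot}_\mu$, which is positive definite precisely because $\mu$ has $N\ge n$ distinct atoms.
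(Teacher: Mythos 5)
Your proof is correct and follows essentially the same route as the paper: factor each Hankel matrix through a (possibly rectangular) Vandermonde matrix times the diagonal weight matrix, apply Cauchy--Binet for $1\le n\le N$ and the rank bound for $n>N$, and treat the shifted matrices $\underline{H}_{2N-1}$, $\overline{H}_{2N-1}$ by absorbing $x_m$ and $1-x_m$ into the diagonal factor. Nothing further is needed.
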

        \begin{proof}
            For any $1\leq n\leq N$, the Cauchy-Binet formula yields
            \begin{align}
                \lrabs{\underline{H}_{2n-2}}
                &= \lrabs{\sum_{k=1}^N w_k x_k^{i+j}}_{i,j=0}^{n-1}
                \nonumber
                \\
                &=  \lrabs{
                \begin{bmatrix}
                    1           & \cdots    & 1         \\
                    x_1         & \cdots    & x_N       \\
                    \vdots      & \vdots    & \vdots    \\
                    x_1^{n-1}   & \cdots    & x_N^{n-1}
                \end{bmatrix}
                \begin{bmatrix}
                    w_1 &           &       \\
                        & \ddots    &       \\
                        &           & w_N
                \end{bmatrix}
                \begin{bmatrix}
                    1       & x_1       & \cdots    & x_1^{n-1}     \\
                    \vdots  & \vdots    & \cdots    & \vdots    \\
                    1       & x_N       & \cdots    & x_N^{n-1}     \\
                \end{bmatrix}}
                \label{eq:moment_matrix_H2n-2_factorization}
                \\
                &= \sum_{\lrcb{i_1,\dots,i_n}\subset [N]}
                    \lrabs{\Delta\lrp{x_{i_1}, \dots, x_{i_n}}}^2
                    \prod_{k=1}^n w_{i_k}
                    > 0.
                    \nonumber
            \end{align}
            The particular case $n=N$ yields
            \begin{equation*}
                    \lrabs{\underline{H}_{2N-2}}
                    = \lrabs{\Delta\lrp{x_{1}, \dots, x_{N}}}^2
                        \prod_{n=1}^N w_{n}.
            \end{equation*}
            In the same vein, the two other determinants are obtained starting from
            \begin{equation*}
            \lrabs{\underline{H}_{2N-1}}
              = \lrabs{\sum_{n=1}^N w_n x_n^{i+j} x_n}_{i,j=0}^{N-1}
            \quad\text{and}\quad
              \lrabs{\overline{H}_{2N-1}}
              = \lrabs{\sum_{n=1}^N w_n x_n^{i+j} (1-x_n)}_{i,j=0}^{N-1}.
            \end{equation*}
            For $n>N$, \eqref{eq:moment_matrix_H2n-2_factorization} clearly shows that $\underline{H}_{2n-2}$ is rank deficient.
        \end{proof}
        Moment matrices also provide an alternative description of orthogonal polynomials.
        \begin{proposition}
          The monic polynomials $\lrp{P_n}_{n=0}^{N}$ orthogonal with respect to $\mu = \sum_{n=1}^{N} w_n \delta_{x_n}$ admit the following expression
          \begin{equation}
          \label{eq:monic_Pn_moment_matrix}
            P_0=1
            \quad \text{and} \quad
            P_{n}(x)
            = \frac{1}{\lrabs{\underline{H}_{2n-2}}}
            \lrabs{
              \begin{array}{cccc}
              \multicolumn{3}{c}{
                \multirow{3}{*}{
                  \raisebox{0mm}{
                    \Large\mbox{$\underline{H}_{2n-2}$}
                  }
                }
              }
                & 1
              \\
                &
                &
                & \raisebox{2pt}{\vdots}
              \\
              m_{n}
                & \cdots
                & m_{2n-1}
                & x^{n}
              \end{array}
            },
            \quad \forall 1\leq n \leq N.
          \end{equation}
          Besides,
          \begin{equation}
          \label{eq:monic_square_norm_moments}
            \lrnorm{P_0}_{\mu}^2=1
            \quad \text{and} \quad
            \lrnorm{P_n}_{\mu}^2
            = \frac{\lrabs{\underline{H}_{2n}}}
                   {\lrabs{\underline{H}_{2n-2}}}
            \CommaBin
            \quad \forall 1\leq n \leq N.
          \end{equation}
          In particular, $P_N(x) = \prod_{n=1}^{N} (x-x_n)$ is the zero vector of $L^2(\mu)$.
        \end{proposition}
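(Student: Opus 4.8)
The plan is to show that the right-hand side of \eqref{eq:monic_Pn_moment_matrix}, which I will write as $Q_n(x)/\lrabs{\underline{H}_{2n-2}}$ with $Q_n$ the displayed $(n+1)\times(n+1)$ determinant, is exactly the monic OP of degree $n$ from \eqref{eq:3-terms_recurrence_monic_orthogonal_poly}, and then to read off \eqref{eq:monic_square_norm_moments} from the same determinant. The expression is well defined because $\lrabs{\underline{H}_{2n-2}}>0$ for all $1\le n\le N$ by \Lemref{lem:moment_matrices_vandermonde}. To see that $P_n:=Q_n/\lrabs{\underline{H}_{2n-2}}$ is monic of degree $n$, I would expand $Q_n$ along its last column $(1,x,\dots,x^{n})^{\top}$: the coefficient of $x^{n}$ is the cofactor of the bottom-right entry, i.e.\ the minor obtained by deleting the last row and column (the cofactor sign being $+1$), which is precisely $\lrabs{\underline{H}_{2n-2}}$, so $Q_n(x)=\lrabs{\underline{H}_{2n-2}}\,x^{n}+(\text{lower order})$.

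For orthogonality I would use multilinearity of the determinant in its last column. Since that column equals $(x^{i})_{i=0}^{n}$, evaluating at the atoms and summing against $\mu$ gives, for any $k\ge0$, that $\lrsp{Q_n,x^{k}}_{\mu}=\sum_{m}w_m x_m^{k}Q_n(x_m)$ equals the determinant of the same matrix with its last column replaced by $\lrp{\sum_m w_m x_m^{i+k}}_{i=0}^{n}=(m_{i+k})_{i=0}^{n}$. For $0\le k\le n-1$ this replacement column coincides with column $k$ of the matrix, whose entries are $m_{i+k}$ for $0\le i\le n$; the determinant therefore has two equal columns and vanishes. Hence $\lrsp{P_n,x^{k}}_{\mu}=0$ for $k=0,\dots,n-1$, and since $P_0,\dots,P_{n-1}$ span the polynomials of degree $<n$, $P_n$ is orthogonal to all of them. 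A monic polynomial of degree $n$ orthogonal to every polynomial of degree $<n$ is unique for $1\le n\le N$: if $R_1,R_2$ were two such, then $R_1-R_2$ has degree $<n\le N$, is orthogonal to both $R_1$ and $R_2$, hence $\lrnorm{R_1-R_2}_{\mu}=0$, which forces $R_1-R_2\equiv0$ because a nonzero polynomial of degree $<N$ cannot vanish at all $N$ atoms. Thus $P_n$ agrees with the monic OP of \eqref{eq:3-terms_recurrence_monic_orthogonal_poly} when $n<N$, and with $\prod_{m=1}^{N}(x-x_m)$ when $n=N$ (the latter being monic of degree $N$ and the zero vector of $L^2(\mu)$, hence trivially orthogonal to all lower-degree polynomials).

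For the squared norm, since $P_n=x^{n}+(\text{lower order})$ and $P_n\perp(\text{degree}<n)$, we get $\lrnorm{P_n}_{\mu}^2=\lrsp{P_n,x^{n}}_{\mu}=\lrabs{\underline{H}_{2n-2}}^{-1}\lrsp{Q_n,x^{n}}_{\mu}$. Running the column-replacement argument with $k=n$ turns the last column into $(m_{i+n})_{i=0}^{n}$, and one checks that the resulting matrix is exactly $\underline{H}_{2n}$; therefore $\lrnorm{P_n}_{\mu}^2=\lrabs{\underline{H}_{2n}}/\lrabs{\underline{H}_{2n-2}}$ for $1\le n\le N$, with the separate trivial case $\lrnorm{P_0}_{\mu}^2=m_0=1$. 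Taking $n=N$ and using that $\underline{H}_{2N}$ is the $(N+1)\times(N+1)$ moment matrix, which is rank deficient — hence $\lrabs{\underline{H}_{2N}}=0$, exactly as in the last line of the proof of \Lemref{lem:moment_matrices_vandermonde} — we recover $\lrnorm{P_N}_{\mu}^2=0$, so $P_N$ is the zero vector of $L^2(\mu)$.

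The only delicate point is the endpoint $n=N$: there $P_N$ is \emph{defined} (as $\prod_m(x-x_m)$) rather than produced by Gram--Schmidt, so one must check separately that the determinant identity \eqref{eq:monic_Pn_moment_matrix}, the orthogonality relations, and the vanishing of $\lrnorm{P_N}_{\mu}$ are all mutually consistent; this is handled by the uniqueness argument above together with $\lrabs{\underline{H}_{2N}}=0$. Everything else is bookkeeping: multilinearity of the determinant in one column, cofactor expansion along the last column, and matching indices between the replaced columns and the Hankel blocks of \Defref{def:moment_matrices}.
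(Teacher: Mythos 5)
Your proof is correct and follows essentially the same route as the paper: orthogonality and the norm ratio both come from multilinearity of the determinant in the last column (replacing $(x^i)_{i=0}^n$ by $(m_{i+k})_{i=0}^n$, giving a repeated column for $k<n$ and $\underline{H}_{2n}$ for $k=n$), with $\lrabs{\underline{H}_{2n-2}}>0$ and $\lrabs{\underline{H}_{2N}}=0$ supplied by \Lemref{lem:moment_matrices_vandermonde}. The only difference is presentational: you identify the determinant formula with the previously defined monic OPs via an explicit uniqueness argument (and handle $n=N$ through it), whereas the paper leaves that identification implicit and instead deduces $P_N=\prod_{n=1}^N(x-x_n)$ directly from $\lrnorm{P_N}_{\mu}=0$ and monicity.
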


        \begin{proof}
            The previous \Lemref{lem:moment_matrices_vandermonde} validates the definition of $(P_n)_{n=0}^{N} $ as a sequence of monic polynomials with $\deg P_n = n$ since the denominator $\lrabs{\underline{H}_{2n-2}}>0$.
            They are also mutually orthogonal.
            To see this, let $1\leq n\leq N$, then
            \begin{equation*}
            \lrsp{P_{n}, x^k}_{\mu}
            % \oplim{=}{}{\eqref{eq:monic_Pn_moment_matrix}}
            =\frac{1}{\lrabs{\underline{H}_{2n-2}}}
                \begin{vmatrix}
                  m_0           & \cdots    & m_{n-1}   & m_k           \\
                  \vdots        &           & \vdots    & \vdots        \\
                  % m_{n-1}& \cdots & m_{2n-2}  & m_{n+ k-1}    \\
                  m_{n}         & \cdots    & m_{2n-1}  & m_{n+ k}
                \end{vmatrix}
              = 0,
              \quad \forall k < n.
            \end{equation*}
            Moreover, $\forall 1\leq n\leq N$,
            \begin{equation*}
                \lrnorm{P_n}_{\mu}^2
                = \lrsp{P_n,P_n}_{\mu}
                = \lrsp{P_n,x^n}_{\mu}
                % \oplim{=}{}{\eqref{eq:monic_Pn_moment_matrix}}
                = \frac{\lrabs{\underline{H}_{2n}}}
                     {\lrabs{\underline{H}_{2n-2}}}\cdot
            \end{equation*}
            Then, \Lemref{lem:moment_matrices_vandermonde}
            yields
            $\lrnorm{P_N}_{\mu}^2
            = \frac{\lrabs{\underline{H}_{2N}}}
                 {\lrabs{\underline{H}_{2N-2}}}
            = 0$.
            Thus, the distinct support points of $\mu$ are zeros of $P_N$.
            % i.e., $\{x_1,\dots,x_N\} \subseteq \lrcb{x~;~P_N(x)=0}$.
            But the latter is monic with $\deg P_N=N$, hence $P_N = \prod_{n=1}^{N} (x-x_n)$.
        \end{proof}
        The next result further relates moment matrices and the recurrence coefficients.
        \begin{lemma}
        \label{lem:H_2N-2_Vandermonde_prod_b}
          The moment matrix $\underline{H}_{2N-2}$ associated to $\mu=\sum_{n=1}^N w_n \delta_{x_n}$ has determinant
          \begin{equation}
          \label{eq:H_2N-2_Vandermonde_prod_b}
              \lrabs{\underline{H}_{2N-2}}
            = \lrabs{\Delta\lrp{x_1, \dots, x_N}}^2
              \prod_{n=1}^N w_n
            = \prod_{n=1}^{N-1} b_{n}^{N-n}.
          \end{equation}
        \end{lemma}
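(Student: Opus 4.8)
The first equality is already the $n=N$ instance of \Lemref{lem:moment_matrices_vandermonde}, so the only thing left to establish is the second one, $\lrabs{\underline{H}_{2N-2}} = \prod_{n=1}^{N-1} b_n^{N-n}$. The plan is to combine the two expressions we already have for the squared norms of the monic OPs and to telescope. On one hand, \eqref{eq:monic_square_norm_moments} gives $\lrnorm{P_n}_{\mu}^2 = \lrabs{\underline{H}_{2n}} / \lrabs{\underline{H}_{2n-2}}$ for $1 \le n \le N-1$; on the other hand, \Proporef{propo:norms_of_Pk} gives $\lrnorm{P_n}_{\mu}^2 = \prod_{k=1}^n b_k$ (and in particular all these norms are positive, so the ratios are well defined). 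Since $\mu$ is a probability measure, $\lrabs{\underline{H}_0} = m_0 = \sum_{n=1}^N w_n = 1$, so the product of consecutive ratios telescopes onto $\lrabs{\underline{H}_{2N-2}}$ itself:
\begin{equation*}
\lrabs{\underline{H}_{2N-2}}
= \frac{\lrabs{\underline{H}_{2N-2}}}{\lrabs{\underline{H}_0}}
= \prod_{n=1}^{N-1} \frac{\lrabs{\underline{H}_{2n}}}{\lrabs{\underline{H}_{2n-2}}}
= \prod_{n=1}^{N-1} \lrnorm{P_n}_{\mu}^2
= \prod_{n=1}^{N-1} \prod_{k=1}^n b_k.
\end{equation*}

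It then remains to reindex the double product. For each fixed $k \in \{1, \dots, N-1\}$, the factor $b_k$ appears in the inner product $\prod_{k'=1}^n b_{k'}$ exactly for those $n$ with $k \le n \le N-1$, that is, $N-k$ times; hence $\prod_{n=1}^{N-1} \prod_{k=1}^n b_k = \prod_{k=1}^{N-1} b_k^{N-k}$, which is the claimed identity.

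I do not anticipate a genuine obstacle here: all the ingredients are already in place — Favard's bijection of \Thref{th:favard} guarantees the $b_n$ exist and are positive, \eqref{eq:monic_square_norm_moments} provides the moment-matrix formula for $\lrnorm{P_n}_{\mu}^2$, and \Proporef{propo:norms_of_Pk} provides the recurrence-coefficient formula. What is left is the bookkeeping of a telescoping product and the elementary multiplicity count in the final reindexing. The two points worth spelling out explicitly in the write-up are that $m_0 = 1$ (this is what makes the telescoping land on $\lrabs{\underline{H}_{2N-2}}$ rather than on a ratio) and the count $N-k$ of occurrences of $b_k$ in the double product.

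\endgroup
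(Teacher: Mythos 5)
Your proposal is correct and follows essentially the same route as the paper: the first equality is the $n=N$ case of \Lemref{lem:moment_matrices_vandermonde}, and the second comes from combining \eqref{eq:monic_square_norm} and \eqref{eq:monic_square_norm_moments} into the telescoping identity $\lrabs{\underline{H}_{2k}}=\prod_{\ell=1}^{k}\prod_{n=1}^{\ell}b_n=\prod_{n=1}^{k}b_n^{k+1-n}$, evaluated at $k=N-1$. Your write-up merely makes explicit the normalization $\lrabs{\underline{H}_0}=m_0=1$ and the multiplicity count in the reindexing, which the paper leaves implicit.
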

        \begin{proof}
          The first equality was established in \Lemref{lem:moment_matrices_vandermonde}.
          The second results from a simple recursion combining Equations~\ref{eq:monic_square_norm} and \ref{eq:monic_square_norm_moments}.
          For any $1\leq k \leq N-1$
          \begin{equation}
          \label{eq:monic_square_norm_moments_prod_bn}
              \lrnorm{P_k}_{\mu}^2
                = \frac{\lrabs{\underline{H}_{2k}}}
                       {\lrabs{\underline{H}_{2k-2}}}
                = \prod_{n=1}^k b_n
              \implies
              \lrabs{\underline{H}_{2k}}
              = \prod_{\ell=1}^k \prod_{n=1}^\ell b_n
              = \prod_{n=1}^k b_n^{k+1-n}.
          \end{equation}
        \end{proof}
        From the point of view of sampling a $\beta$-ensemble, \Lemref{lem:H_2N-2_Vandermonde_prod_b} already hints what tridiagonal models can achieve: if we see the $\beta$-ensemble as the support of a random atomic measure, which is parametrized by its recurrence coefficients, then the complex interaction term that is the Vandermonde determinant in \eqref{eq:joint_distribution_beta_ensemble} gets replaced by a simple product of powers of $b_n$s.
        This intuition, formalized in \Thref{th:joint_distribution_recurrence_coefficients_Tr_J}, requires to make explicit the change of variables between the nodes and weights of $\mu$ and the recurrence coefficients of the corresponding Jacobi matrix $\Jab$.

    % subsection orthogonal_polynomials_moments_and_spectral_measures (end)

% section atomic_measures_moments_and_jacobi_matrices (end)

\section{Making the change of variables} % (fold)
\label{sec:making_the_change_of_variable}

    To compute the Jacobian of Favard's map $(x_{1:N}, \omega_{1:N})\mapsto(a_{1:N}, b_{1:N-1})$, defined in \Thref{th:favard}, we first compute the Jacobian of the moment map $(x_{1:N}, \omega_{1:N})\mapsto(m_{1:2N-1})$, and then use the lattice path construction of \citet{Har18} to express the Jacobian of $(m_{1:2N-1})\mapsto (a_{1:N}, b_{1:N-1})$.
    We mention that the overall Jacobian has already been derived, in a more concise style, by \citet{FoRa06} and \citet{KrRiVi16}.
    Our contribution in this section is to give all details while remaining as elementary as possible.
    In particular, we only rely on Favard's theorem for atomic measures, and the proof of \Thref{th:joint_distribution_recurrence_coefficients_Tr_J} boils down to checking that the changes of variables are $C^1$-diffeomorphisms.

    Let $\phi:\atomSpace\times\weightSpace\rightarrow \mathbb{R}^{2N-1}$ map a set of $N$ distinct atoms and $N-1$ positive weights to their moments $(m_k)$.
    Let $\mathcal{M}\subset \mathbb{R}^{2N-1}$ be the image of $\phi$.
    \begin{proposition}[From atomic measures to moments]
        \label{propo:jacobian_x,w_moments}
        $\calM\subset \mathbb{R}^{2N-1}$ is open, $\phi$ is a $C^1$-diffeomorphism from $\atomSpace\times\weightSpace$ onto $\calM$, and
        \begin{equation}
            \label{eq:jacobian_x,w_moments}
            \lrabs{
                    \frac{  \partial m_{1:2N-1}
                        }{  \partial x_{1:N}, w_{1:N-1} }
                  }
            = \lrabs{\Delta(x_1,\dots,x_N)}^4
                \prod_{n=1}^N w_n
            = \frac{\lrabs{\underline{H}_{2N-2}}^2}
                   {\prod_{n=1}^N w_n},
        \end{equation}
        where the Hankel matrix $\underline{H}_{2N-2}$ is defined by \eqref{eq:moment_matrix_H2N_under}.
    \end{proposition}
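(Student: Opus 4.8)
The plan is to compute the Jacobian of $\phi$ head-on as a $(2N-1)\times(2N-1)$ determinant, to recognise it --- after factoring out the weights and performing one round of elementary column operations --- as a confluent Vandermonde determinant with each node doubled, and then to deduce openness of $\calM$ and the diffeomorphism property from the non-vanishing of that Jacobian together with a short injectivity argument.

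First I would record the partial derivatives. Writing $m_k=\sum_{n=1}^{N-1}w_n x_n^k+\bigl(1-\sum_{n=1}^{N-1}w_n\bigr)x_N^k$, one gets, for $1\le k\le 2N-1$, that $\partial m_k/\partial x_n=k\,w_n\,x_n^{k-1}$ for $1\le n\le N$ and $\partial m_k/\partial w_n=x_n^k-x_N^k$ for $1\le n\le N-1$. Factoring $w_n$ out of the column of $x_n$-derivatives, for each $n=1,\dots,N$, produces the prefactor $\prod_{n=1}^N w_n$ and leaves a matrix $B$ with rows indexed by $k=1,\dots,2N-1$ whose columns are $\bigl(k\,x_n^{k-1}\bigr)_{k}$ for $n=1,\dots,N$ and $\bigl(x_n^k-x_N^k\bigr)_{k}$ for $n=1,\dots,N-1$.

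The crux is to show $\lrabs{\det B}=\lrabs{\Delta(x_1,\dots,x_N)}^4$. I would enlarge $B$ to the $2N\times 2N$ matrix $C$ with rows $k=0,\dots,2N-1$ and columns $\bigl(k\,x_n^{k-1}\bigr)_{k=0}^{2N-1}$ and $\bigl(x_n^k\bigr)_{k=0}^{2N-1}$ for $n=1,\dots,N$: up to a permutation of columns this is the confluent Vandermonde matrix in which each node $x_n$ carries multiplicity $2$, so the classical formula (with factorial prefactor $\prod_{n=1}^N 0!\,1!=1$) gives $\lrabs{\det C}=\prod_{i<j}(x_j-x_i)^4=\lrabs{\Delta}^4\neq 0$. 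Now subtract the $x_N$-value column from each $x_n$-value column, $n=1,\dots,N-1$: this does not change $\det C$, and since the $k=0$ row of $C$ is already zero on the derivative columns, it becomes zero on all columns except the untouched $x_N$-value column, whose $k=0$ entry is $x_N^0=1$. Expanding $\det C$ along the $k=0$ row therefore yields $\det C=\pm\det B$, so $\lrabs{\det B}=\lrabs{\Delta}^4$, which establishes the first equality in \eqref{eq:jacobian_x,w_moments}, $\lrabs{\partial m_{1:2N-1}/\partial(x_{1:N},w_{1:N-1})}=\lrabs{\Delta}^4\prod_{n=1}^N w_n$. The second form is then immediate from $\lrabs{\underline{H}_{2N-2}}=\lrabs{\Delta}^2\prod_{n=1}^N w_n$, which is \Lemref{lem:moment_matrices_vandermonde}.

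It remains to handle regularity. The map $\phi$ is polynomial, hence $C^\infty$, and on $\atomSpace\times\weightSpace$ its Jacobian never vanishes (distinct atoms give $\Delta\neq 0$, and each $w_n>0$), so $\phi$ is a local $C^\infty$-diffeomorphism and in particular an open map; thus $\calM=\phi(\atomSpace\times\weightSpace)$ is open. For injectivity I would recover $\mu$ from a moment vector $(m_1,\dots,m_{2N-1})\in\calM$: setting $m_0=1$, the Hankel matrix $\underline{H}_{2N-2}$ is nonsingular by \Lemref{lem:moment_matrices_vandermonde}, so \eqref{eq:monic_Pn_moment_matrix} defines the monic polynomial $P_N$, whose $N$ distinct roots, listed in decreasing order, must be $x_1>\dots>x_N$; the weights $w_{1:N}$ are then the unique solution of the nonsingular Vandermonde system $\sum_{n=1}^N w_n x_n^k=m_k$, $0\le k\le N-1$. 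Hence $\phi$ is a bijection from $\atomSpace\times\weightSpace$ onto $\calM$ which is a local diffeomorphism, and therefore a ($C^1$, in fact $C^\infty$) diffeomorphism. I expect the confluent Vandermonde identification to be the main obstacle: one has to arrange $C$ so that $B$ really appears as a cofactor after the correct column operations, and to be careful that only absolute values matter, so that the column permutation relating $C$ to the standard confluent Vandermonde and the sign in the cofactor expansion are harmless.
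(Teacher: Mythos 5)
Your proof is correct and follows essentially the same route as the paper: compute the partial derivatives, factor out $\prod_{n=1}^N w_n$, and identify the remaining $(2N-1)\times(2N-1)$ determinant with the confluent Vandermonde determinant in which each node is doubled (the paper performs the same bordering by a $k=0$ row and an $x_N$-value column together with column additions, just in the reverse direction, and likewise quotes the classical formula $\prod_{i<j}(x_j-x_i)^4$), before concluding with the inverse function theorem. The only deviation is the injectivity step: the paper invokes Favard's theorem (\Thref{th:favard}) after noting that the moments determine the monic OPs via \eqref{eq:monic_Pn_moment_matrix}, whereas you reconstruct the measure directly, taking the atoms as the roots of the moment-determined $P_N$ and the weights as the unique solution of the resulting nonsingular Vandermonde system; both arguments are valid, yours being marginally more self-contained.
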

    \begin{proof}
        Moments define monic OPs; see \Proporef{eq:monic_Pn_moment_matrix}.
        By Favard's \Thref{th:favard}, monic OPs in turn define the atoms and weights of $\mu$ uniquely.
        Thus, $\phi$ is injective.
        Moreover $\atomSpace\times\weightSpace\subset\mathbb{R}^{2N-1}$ is open, and $\phi$ is $C^1$.
        By the classical inverse function theorem, see e.g., \citet[Corollary 4.2.2]{Car71}, it is thus enough to show that the Jacobian of $\phi$ never vanishes.

        The $i$-th moment of $\mu$ can be written in two forms
        \begin{equation}
            m_i
            = \sum_{j= 1}^{N} w_j x_j^i
            % &= \sum_{j= 1}^{N- 1} w_j x_j^i + w_N x_N^i
            %  = \sum_{j= 1}^{N- 1} w_j x_j^i + \lrp{1-\sum_{j= 1}^{N- 1} w_j} x_N^i
            %   \nonumber \\
            = \sum_{j= 1}^{N- 1} w_j \lrp{x_j^i - x_N^i} + x_N^i,
        \end{equation}
        so that
        \begin{equation}
            \frac{\partial m_i}{\partial x_j}
            % \oplim{=}{}{\eqref{eq:mom_sum_w_j-x_j}}
            =
              iw_jx_j^{i-1}
            \quad
            \text{and}
            \quad
            \frac{\partial m_i}{\partial w_j}
            % \oplim{=}{}{\eqref{eq:mom_sum_w_j_x_j-x_N}}
            =
              x_j^{i} - x_N^{i}.
        \end{equation}
        Thus,
        \begin{align}
            \lrabs{
              \frac{    \partial m_{1:2N- 1}
                }{  \partial x_{1:N}, w_{1:N- 1} }
              }
            &= \lrabs{
            \lrb{
              \lrb{
                \frac{\partial m_i}{\partial x_j} ~
                \frac{\partial m_i}{\partial w_j}}_{j= 1}^{N- 1}
              \quad
              \lrb{
                \frac{\partial m_i}{\partial x_N}
                }
            }_{i= 1}^{2N- 1}
            }\nonumber\\
            &= \lrabs{
              \lrb{
                \lrb{
                  iw_jx_j^{i-1} ~~ x_j^{i} - x_N^{i}
                  }_{j= 1}^{N- 1}
                  \quad
                  iw_Nx_N^{i-1}
                }_{i= 1}^{2N- 1}
              }\nonumber\\
            &= \lrabs{
              \lrb{
                \lrb{
                  ix_j^{i-1} ~~ x_j^{i} - x_N^{i}
                  }_{j= 1}^{N- 1}
                  \quad
                  ix_N^{i-1}
                }_{i= 1}^{2N- 1}
              }
            \times \prod_{n=1}^N w_n
            \nonumber\\
            &= \lrabs{
              \lrb{
                \lrb{(i-1)x_j^{i-2}
                   ~~
                   x_j^{i-1} - x_N^{i-1}
                }_{j= 1}^{N- 1}
                \quad
                  (i-1)x_N^{i-2}
                  ~~
                  x_N^{i-1}
              }_{i= 1}^{2N}
            }
            \times \prod_{n=1}^N w_n
            \nonumber\\
            &= %\oplim{=}{}{\text{col}_{2i}+=\text{col}_{2N}}
            \lrabs{
              \lrb{
                (i-1)x_j^{i-2} ~~ x_j^{i-1}
                }_{i=1,j= 1}^{2N,N}
              }
            \prod_{n=1}^N w_n.
            \label{e:confluentVandermonde}
        \end{align}
        The last equality is obtained by adding the last column to all other even columns.
        The determinant in \eqref{e:confluentVandermonde} is called a \emph{confluent} Vandermonde determinant.
        Its value is given, e.g., by \citet[Corollary 1, with $\eta_i \equiv 2$]{HaGi80}
        \begin{equation*}
          \lrabs{
              \lrb{
                (i-1)x_j^{i-2} ~~ x_j^{i-1}
                }_{i=1,j= 1}^{2N,N}
              }
          = \prod_{1\leq i<j \leq N} (x_j-x_i)^{2\times 2}
          = \lrabs{\Delta(x_1,\dots,x_N)}^4.
        \end{equation*}
        In particular, \eqref{e:confluentVandermonde} never vanishes on $\atomSpace\times \weightSpace$.
    \end{proof}

    % Bijection of the spectral map \citet[Proposition 2.42]{Dei00}
    % Diffeomorphism of the spectral map? \citet[Theorem 6]{Tom15}

    Let us now consider the map
    \begin{equation}
        \label{eq:def_map_moment_recurrence_coef}
        \rho:\calM \rightarrow \bbR^N \times (0, +\infty)^{N-1},
    \end{equation}
    that takes moments $m_{1:2N-1}$ and returns the recurrence coefficients $(a_{1:N}, b_{1:N- 1})$.

    \begin{proposition}[From recurrence coefficients to moments]
        \label{propo:jacobian_a,b_moments}
        $\rho$ is a $C^1$-diffeomorphism from $\calM$ onto $\bbR^N \times (0, +\infty)^{N-1}$, and
        \begin{equation}
        \label{eq:jacobian_a,b_moments}
            \lrabs{
                    \frac{  \partial m_{1:2N-1}
                        }{  \partial a_{1:N}, b_{1:N-1} }
                  }
            = \prod_{n= 1}^{N- 1} b_{n}^{2(N-n)-1}
            = \frac{\lrabs{\underline{H}_{2N-2}}^2}
                   {\prod_{n= 1}^{N- 1} b_{n}},
        \end{equation}
        where the Hankel matrix $\underline{H}_{2N-2}$ is defined by \eqref{eq:moment_matrix_H2N_under}.
    \end{proposition}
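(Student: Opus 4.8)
The plan is to establish the result in three stages: first verify that $\rho$ is injective and a local $C^1$-diffeomorphism so that the inverse function theorem applies, then compute the Jacobian determinant of $\rho^{-1}$ (equivalently, of the map $(a,b)\mapsto m$) via an explicit triangular structure, and finally rewrite the resulting power product in terms of $|\underline{H}_{2N-2}|$ using Lemma~\ref{lem:H_2N-2_Vandermonde_prod_b}. For the first stage, injectivity of $\rho$ is essentially a restatement of the three-term recurrence \eqref{eq:3-terms_recurrence_monic_orthogonal_poly}: the moments determine the monic OPs through \eqref{eq:monic_Pn_moment_matrix}, which determine the recurrence coefficients $(a,b)$; conversely, given $(a,b)$, Favard's Theorem~\ref{th:favard} produces an atomic measure and hence all its moments, so $\rho$ is a bijection between $\calM$ and $\bbR^N\times(0,+\infty)^{N-1}$. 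Smoothness in both directions follows because $m_k$ is a polynomial in $(a,b)$ (obtained by iterating the recurrence, or equivalently by reading off $m_k = (J_{\bfa,\bfb}^k)_{11}$ up to the normalization $m_0=1$) and the entries of $J_{\bfa,\bfb}$ depend smoothly on $(a,b)$ on the open set where all $b_n>0$; once the Jacobian is shown nonvanishing, the inverse function theorem upgrades the bijection to a $C^1$-diffeomorphism, exactly as in the proof of Proposition~\ref{propo:jacobian_x,w_moments}.

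For the Jacobian computation itself, I would use the lattice-path / combinatorial expression of moments in terms of $(a,b)$ alluded to in the introduction of Section~\ref{sec:making_the_change_of_variable} (the construction attributed to \citet{Har18}): writing $m_k$ as a sum over Motzkin-type paths of length $k$ with level weights $a_{i}$ and descent weights $b_{i}$, one reads off that $m_{2n-1}$ and $m_{2n}$ are the \emph{first} moments in which $a_n$ and $b_n$ respectively appear, and they appear there linearly with coefficient $1$. Ordering the variables as $(a_1,b_1,a_2,b_2,\dots,a_{N-1},b_{N-1},a_N)$ and the moments as $(m_1,m_2,\dots,m_{2N-1})$, this makes the Jacobian matrix $\partial m_{1:2N-1}/\partial(a,b)$ lower-triangular. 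Its diagonal entries are $\partial m_1/\partial a_1 = 1$, then $\partial m_{2n}/\partial b_n = \lrnorm{P_{n-1}}_\mu^2 = \prod_{i<n} b_i$ (the weight of the unique length-$2n$ path that goes straight up $n$ steps and straight down $n$ steps, with a forced descent off level $n$), and $\partial m_{2n+1}/\partial a_{n+1} = \lrnorm{P_n}_\mu^2 = \prod_{i\le n} b_i$. Multiplying these diagonal terms gives
\begin{equation*}
\lrabs{\frac{\partial m_{1:2N-1}}{\partial a_{1:N}, b_{1:N-1}}}
= \prod_{n=1}^{N-1}\lrp{\prod_{i=1}^{n-1} b_i}\prod_{n=1}^{N-1}\lrp{\prod_{i=1}^{n} b_i}
= \prod_{n=1}^{N-1} b_n^{(N-1-n)+(N-n)}
= \prod_{n=1}^{N-1} b_n^{2(N-n)-1},
\end{equation*}
which is the first claimed equality. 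The second equality, $\prod_{n=1}^{N-1} b_n^{2(N-n)-1} = |\underline{H}_{2N-2}|^2/\prod_{n=1}^{N-1} b_n$, is then immediate from $|\underline{H}_{2N-2}| = \prod_{n=1}^{N-1} b_n^{N-n}$ (Lemma~\ref{lem:H_2N-2_Vandermonde_prod_b}), since squaring doubles each exponent to $2(N-n)$ and dividing by $\prod b_n$ subtracts one.

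The main obstacle is pinning down precisely which path contributes the relevant diagonal derivative and that it contributes with coefficient exactly $1$ — i.e.\ carefully justifying the triangular structure and identifying the diagonal entries with the squared norms $\lrnorm{P_{n-1}}_\mu^2$ and $\lrnorm{P_n}_\mu^2$. One clean way to sidestep a heavy path-counting argument is to instead use the identity $m_k = e_1^\top \tildeJab^{\,k} e_1$ (from \eqref{eq:3-terms_recurrence_monic_orthogonal_poly_matrix}, since $m_k = \lrsp{x^k P_0, P_0}_\mu$ and applying $\tildeJab$ to the coordinate vector of $P_0$ realizes multiplication by $x$ in the monic-OP basis): differentiating in $a_n$ or $b_n$ and expanding $\tildeJab^{\,k}$ shows that the lowest power of $x$ — hence the smallest $k$ — for which $\partial_{a_n} m_k \ne 0$ or $\partial_{b_n} m_k \ne 0$ is $k=2n-1$ and $k=2n$ respectively, because reaching index $n$ and returning to index $1$ requires at least that many steps of $\tildeJab$. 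Evaluating the unique minimal-length contribution via Proposition~\ref{propo:norms_of_Pk} (after conjugating $\tildeJab$ to the symmetric $\Jab$ by the diagonal matrix $D$) gives the diagonal entries above. Either route reduces the problem to a bookkeeping exercise once the ordering of variables and moments is fixed; the rest is the elementary exponent arithmetic already displayed.
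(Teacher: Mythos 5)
Your proposal is correct and takes essentially the same route as the paper: bijectivity via Favard's theorem plus the inverse function theorem, the lattice-path expansion of the moments yielding a triangular Jacobian under the interleaved ordering, with diagonal entries $\partial m_{2i-1}/\partial a_i=\partial m_{2i}/\partial b_i=\prod_{k=1}^{i-1}b_k$, and then Lemma~\ref{lem:H_2N-2_Vandermonde_prod_b} for the Hankel-determinant form. Your alternative justification via $m_k=e_1^\top \tildeJab^{\,k}e_1$ and minimal-length walks is merely a matrix-power rephrasing of the same path argument (and note the first appearance of $a_n$, resp.\ $b_n$, is linear with coefficient $\prod_{k<n}b_k$, not $1$, though your diagonal entries are stated correctly afterwards).
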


    \begin{proof}
        Using \Thref{th:favard} and \Proporef{propo:jacobian_x,w_moments}, $\rho = \psi\circ \phi^{-1}$, so that $\rho$ is bijective.
        As in the proof of \Proporef{propo:jacobian_x,w_moments}, we apply the inverse function theorem \citep[Corollary 4.2.2]{Car71}, but this time to $\rho^{-1}$.
        We first note that $\bbR^N \times (0, +\infty)^{N-1}\subset \bbR^{2N-1}$ is open.
        It is thus enough to show that $\rho^{-1}$ is $C^1$ and that its Jacobian never vanishes.
        To this end, we borrow an elegant lattice path representation of the recurrence relations for OPs from \citet[Equation 1.8]{Har18}.
        This allows us to express the successive moments as polynomials in the recurrence coefficients.

        To provide intuition, we first compute the first few moments by hand, recursively applying the recurrence relation \eqref{eq:3-terms_recurrence_monic_orthogonal_poly}. It comes
        \begin{align}
            m_{1}
                = \lrsp{xP_0, P_0}
                &= 1    \cdot  \cancel{\lrsp{P_1, P_0}}
                  + a_{1} \cdot \lrsp{P_0, P_0}
                  + 0
                =\textcolor{red}{ a_1}\,,
            \nonumber\\[5pt]
        % \end{align}
        % \begin{align}
            m_{2}
                = \lrsp{x^2P_0, P_0}
                &=
                        1       \cdot \lrsp{xP_1, P_0}
                        + a_{1} \cdot \lrsp{xP_0, P_0}
                        + 0\nonumber\\
                &= 1 \cdot
                    \lrp{
                        1       \cdot \cancel{\lrsp{P_2, P_0}}
                        + a_{2} \cdot \cancel{\lrsp{P_1, P_0}}
                        + b_{1} \lrsp{P_0, P_0}
                    }\nonumber\\
                   &\quad
                   +  a_1
                   \cdot \lrp{
                        1       \cdot \cancel{\lrsp{P_1, P_0}}
                        + a_{1} \cdot \lrsp{P_0, P_0}
                        + 0
                    }\nonumber\\
                &= \textcolor{blue}{1\cdot b_1} +  a_1\cdot a_1\,,
            \nonumber\\[5pt]
        % \end{align}
        % \begin{align}
                % &=  a_1^2 +  b_1
            m_{3}
                = \lrsp{x^3P_0, P_0}
                &=  1 \cdot \lrsp{x^2P_1, P_0}
                    + a_{1} \cdot \lrsp{x^2P_0, P_0}
                    + 0\nonumber\\
                &= 1
                    \cdot \lrp{
                        1       \cdot \lrsp{xP_2, P_0}
                        + a_{2} \cdot \lrsp{xP_1, P_0}
                        + b_{1}\cdot    \lrsp{xP_0, P_0}
                    }\nonumber\\
                &\quad
                   +  a_1
                    \cdot \lrp{
                        1       \cdot \lrsp{xP_1, P_0}
                        + a_{1} \cdot \lrsp{xP_0, P_0}
                        + 0
                    }\nonumber\\
                &=1\cdot1
                    \cdot
                    \lrp{
                        1       \cdot \cancel{\lrsp{P_3, P_0}}
                        + a_{3} \cdot \cancel{\lrsp{P_2, P_0}}
                        + b_{3} \cdot \cancel{\lrsp{P_1, P_0}}
                    }\nonumber\\
                &\quad
                  +1\cdot a_2
                    \cdot
                    \lrp{
                        1       \cdot \cancel{\lrsp{P_2, P_0}}
                        + a_{2} \cdot \cancel{\lrsp{P_1, P_0}}
                        + b_{1}\cdot    \lrsp{P_0, P_0}
                    }\nonumber\\
                &\quad
                  +1\cdot b_1
                    \cdot
                    \lrp{
                        1       \cdot \cancel{\lrsp{P_1, P_0}}
                        + a_{1} \cdot \lrsp{P_0, P_0}
                        + 0
                    }\nonumber\\
                &\quad
                  +  a_1\cdot1
                    \cdot \lrp{
                        1       \cdot \cancel{\lrsp{P_2, P_0}}
                        + a_{2} \cdot \cancel{\lrsp{P_1, P_0}}
                        + b_{1}\cdot    \lrsp{P_0, P_0}
                    }\nonumber\\
                &\quad
                  +  a_1 \cdot  a_1
                       \times\lrp{
                            1       \cdot \cancel{\lrsp{P_1, P_0}}
                            + a_{1} \cdot \lrsp{P_0, P_0}
                            + 0
                        }\nonumber\\
                &= \textcolor{red}{1\cdot a_2\cdot b_1}
                    + a_1\cdot1\cdot b_1
                    + 1\cdot b_2\cdot a_1
                    + a_1\cdot a_1\cdot a_1.
            \label{eq:development_of_m3}
        \end{align}

        More generally, when computing $m_k =  \lrsp{x^k P_0, P_0}$, the recursive application of the recurrence relation \eqref{eq:3-terms_recurrence_monic_orthogonal_poly} allows to decrease the power of $x$ from $k$ to $0$ until each term in the development is proportional to the inner product of $P_0=1$ with another monic OP.
        The only nonzero such inner product is $\lrsp{P_0,P_0}=1$.
        Consequently, each nonzero term in the final development of $m_k$ corresponds to a path of length at most $k$ that leaves from the lower left corner of the graph in \Figref{fig:Jacobian_moments_rec-coefs_paths} and ends up on the bottom row.
        In between, the path has to remain above the bottom row, and can only move North-East, East, or South-East.
        Each edge corresponds to picking one of the three terms in the recurrence relation \eqref{eq:3-terms_recurrence_monic_orthogonal_poly}.
        For example, the development of $m_3$ in \eqref{eq:development_of_m3} corresponds to three such paths, shown in green in \Figref{fig:Jacobian_moments_rec-coefs_paths}.
        The product of the coefficients along each path forms the resulting term in the development.

        \begin{figure}[ht]
            \centering
            \begin{subfigure}[b]{0.9\textwidth}
                \centering
                \includegraphics[scale=0.5]{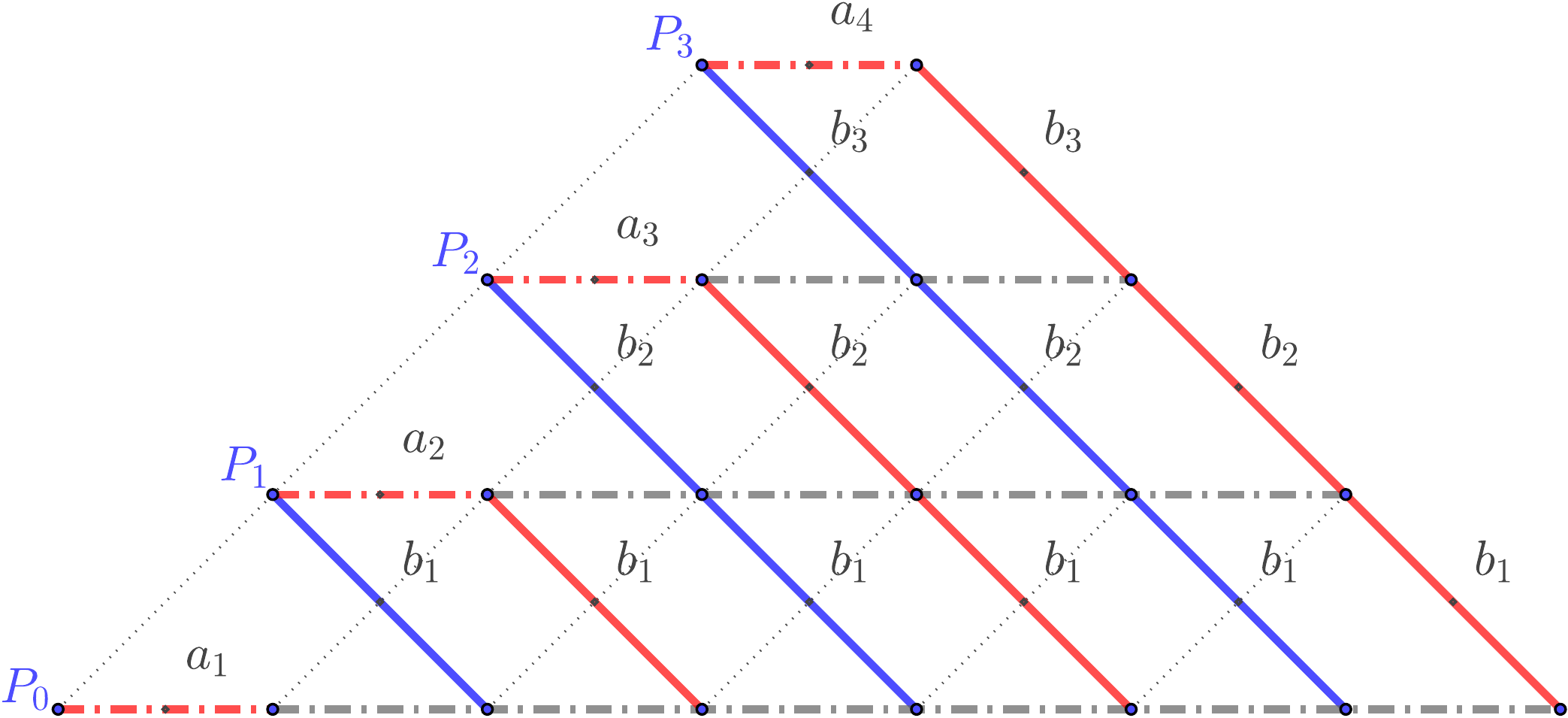}
                \caption{The North-East, East, South-East edges associated to weights $1$, $a_n$, $b_n$ are respectively represented as dashed, dash-dotted and solid lines.
                Note that on each dashed and dash-dotted line, the weight is constant.}
                \label{fig:Jacobian_moments_rec-coefs_paths}
            \end{subfigure}\\
            \begin{subfigure}[b]{0.23\textwidth}
                \centering
                \includegraphics[width=\textwidth]{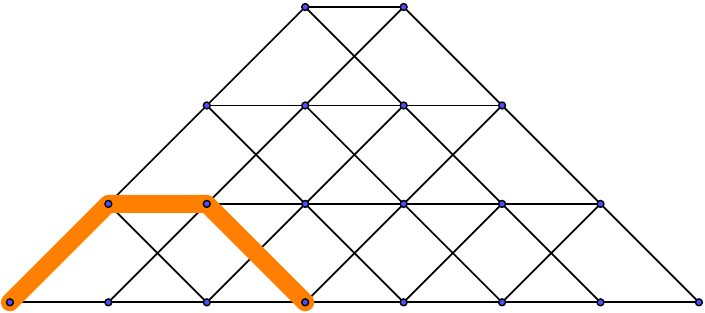}
                \caption{$1\cdot a_2\cdot b_1$}
                \label{fig:path_urd}
            \end{subfigure}
            \begin{subfigure}[b]{0.23\textwidth}
                \centering
                \includegraphics[width=\textwidth]{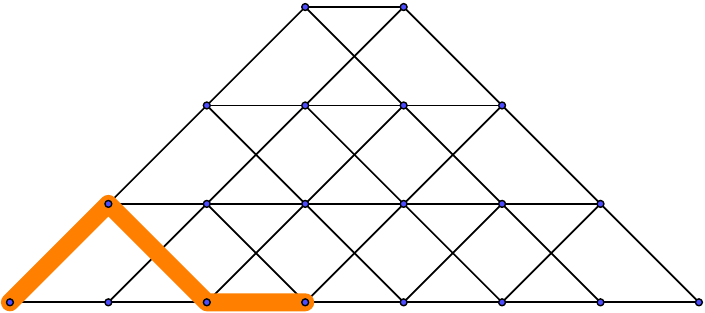}
                \caption{$1\cdot b_2\cdot a_1$}
                \label{fig:path_rud}
            \end{subfigure}
            \begin{subfigure}[b]{0.23\textwidth}
                \centering
                \includegraphics[width=\textwidth]{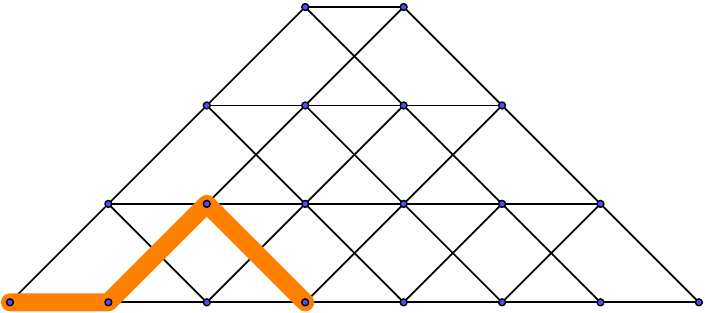}
                \caption{$a_1\cdot1\cdot b_1$}
                \label{fig:path_udr}
            \end{subfigure}
            \begin{subfigure}[b]{0.23\textwidth}
                \centering
                \includegraphics[width=\textwidth]{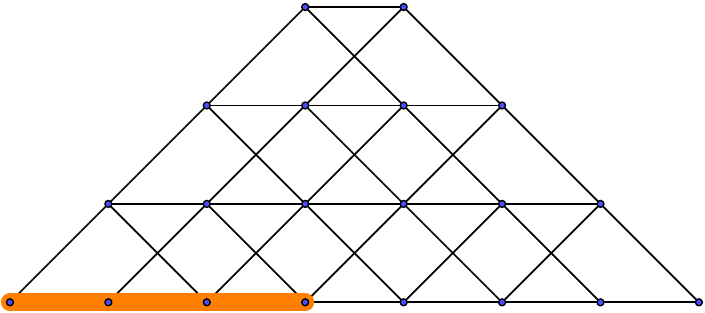}
                \caption{$a_1\cdot a_1\cdot a_1$}
                \label{fig:path_rrr}
            \end{subfigure}
            \caption{The lattice path of \citet{Har18} used to compute $m_n=\lrsp{x^nP_0, P_0}$ is displayed in~\subref{fig:Jacobian_moments_rec-coefs_paths}.
            The paths used for the computation of $m_3$ \eqref{eq:development_of_m3} are highlighted in~\subref{fig:path_urd}-\subref{fig:path_rrr} with the corresponding weight as caption.}
            \label{fig:lattice_path}
            % \setfloatalignment{b}  %b/t
            % \forceversofloat  % \forcerectofloat
        \end{figure}

        %The derivation of $m_1, m_2$ and $m_3$, will convince you that
        In the end, odd moments $m_{2i-1}$, resp.\;even moments $m_{2i}$, are the sum of the weights of the paths below the $i$-th \textcolor{red}{red}, respectively \textcolor{blue}{blue} path, counting from the bottom left.
        More precisely,
        \begin{equation}
        \label{eq:moments_rec_coef_paths}
            m_{2i-1}
                = \textcolor{red}{
                    a_{i} \prod_{k= 1}^{i-1}  b_{k}
                    }
                    + f_1( a_{1:i-1}, b_{1:i-2})
            \quad
                \text{and}
            \quad
            m_{2i}
                = \textcolor{blue}{
                    \prod_{k= 1}^{i}  b_{k}
                    }
                    + f_2( a_{1:i}, b_{1:i-1}).
        \end{equation}
        Thus, the Jacobian is the determinant of a triangular matrix
        \begin{equation*}
            \lrabs{
                    \frac{  \partial m_{1:2N-1}
                        }{  \partial a_{1:N}, b_{1:N-1} }
                  }
            =   \begin{vmatrix}
                    \begin{bmatrix}
                        \frac{\partial m_{2i-1}}{\partial  a_j}
                            &\frac{\partial m_{2i-1}}{\partial  b_{j}} \\
                        \frac{\partial m_{2i}}{\partial  a_j}
                            &\frac{\partial m_{2i}}{\partial  b_{j}}
                    \end{bmatrix}
                    & \begin{bmatrix}
                        \frac{\partial m_{2i-1}}{\partial  a_N}  \\
                        \frac{\partial m_{2i}}{\partial  a_N}
                    \end{bmatrix}
                    \\
                    \lrb{
                        \frac{\partial m_{2N-1}}{\partial  a_j}
                        \enspace
                        \frac{\partial m_{2N-1}}{\partial  b_{j}}
                    }
                    &
                        \frac{\partial m_{2N-1}}{\partial  a_N}\\
                \end{vmatrix}_{i,j= 1}^{N-1}
            =\prod_{i= 1}^N
                \frac{\partial m_{2i-1}}{\partial a_{i}}
              \prod_{i= 1}^{N-1}
                \frac{\partial m_{2i}}{\partial b_{i}}
                \cdot
        \end{equation*}
        The formulation \eqref{eq:moments_rec_coef_paths} yields
        \begin{equation*}
            \frac{\partial m_{2i-1}}{\partial a_{i}}
                = \prod_{k= 1}^{i-1}  b_{k}
            \quad
                \text{and}
            \quad
            \frac{\partial m_{2i}}{\partial b_{i}}
                = \prod_{k= 1}^{i-1}  b_{k}.
        \end{equation*}
        Finally, we obtain
        \begin{align*}
            \lrabs{
                    \frac{  \partial m_{1:2N-1}
                        }{  \partial a_{1:N}, b_{1:N-1} }
                  }
            % &= \prod_{i= 1}^N
            %       \frac{\partial m_{2i-1}}{\partial a_{i}}
          %         \prod_{i= 1}^{N-1}
            %       \frac{\partial m_{2i}}{\partial b_{i}}\\
            =   \prod_{i= 1}^N
                    \prod_{k= 1}^{i-1}  b_{k}
                \prod_{i= 1}^{N-1}
                    \prod_{k= 1}^{i-1}  b_{k}
            =   \frac{
                    \lrb{
                        \prod_{i= 1}^{N}
                            \prod_{k= 1}^{i-1}  b_{k}
                       }^2
                     }{
                        \prod_{k= 1}^{N-1}  b_{k}
                     }
                % = \frac{
                %       \lrb{
                %           \prod_{i=2}^{N}
                %               \prod_{k= 1}^{i-1}  b_{k+1}
                %          }^2
                %    }{
                %       \prod_{k= 1}^{N-1}  b_{k+1}
                %    }\\
            =   \frac{
                    \lrb{
                        \prod_{n= 1}^{N-1} b_{n}^{N- n}
                       }^2
                     }{
                        \prod_{n= 1}^{N-1} b_{n}
                     }\CommaBin
        \end{align*}
        which does not vanish since all $b_n$s are positive by construction.
        Finally, the last equality in \eqref{eq:jacobian_a,b_moments} follows from \Lemref{lem:H_2N-2_Vandermonde_prod_b}.
    \end{proof}

    Propositions~\ref{propo:jacobian_x,w_moments} and \ref{propo:jacobian_a,b_moments} now allow us to conclude that Favard's map $\psi=\rho\circ\phi$ (cf.\,\Thref{th:favard}) is a $C^1$-diffeomorphism, and compute its Jacobian.
    \begin{proposition}
    \label{propo:change_of_variables_favard}
    Favard's map $\psi$ is a $C^1$-diffeomorphism from $\atomSpace\times\weightSpace$ onto $\bbR^N \times (0, +\infty)^{N-1}$, and
        \begin{equation}
        \label{eq:jacobian_x,w_a,b}
            \lrabs{
                    \frac{  \partial x_{1:N}, w_{1:N-1}
                        }{  \partial a_{1:N}, b_{1:N-1} }
                  }
            = \prod_{n=1}^{N-1} b_{n}^{-1} \prod_{n=1}^N w_n.
        \end{equation}
    \end{proposition}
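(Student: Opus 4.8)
The plan is to obtain the statement as an immediate consequence of Propositions~\ref{propo:jacobian_x,w_moments} and~\ref{propo:jacobian_a,b_moments}, together with the factorization $\psi=\rho\circ\phi$. First I would record that $\phi$ is a $C^1$-diffeomorphism from $\atomSpace\times\weightSpace$ onto the open set $\calM$ (Proposition~\ref{propo:jacobian_x,w_moments}) and $\rho$ is a $C^1$-diffeomorphism from $\calM$ onto $\bbR^N\times(0,+\infty)^{N-1}$ (Proposition~\ref{propo:jacobian_a,b_moments}); since the codomain of $\phi$ coincides with the domain of $\rho$, the composite $\psi=\rho\circ\phi$ is a well-defined $C^1$-diffeomorphism from $\atomSpace\times\weightSpace$ onto $\bbR^N\times(0,+\infty)^{N-1}$. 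One then has to observe that this composite really is Favard's map of \Thref{th:favard}: both $\psi$ and $\rho\circ\phi$ send the nodes and weights of $\mu$ to the recurrence coefficients of the monic OPs with respect to $\mu$, the moment sequence $m_{1:2N-1}$ being merely the intermediate parametrization through which $\rho$ and $\phi$ are defined, so the identification is automatic.

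For the Jacobian, I would invoke the chain rule for Jacobian determinants together with the invertibility of $\phi$, which gives
\begin{equation*}
  \lrabs{\frac{\partial x_{1:N}, w_{1:N-1}}{\partial a_{1:N}, b_{1:N-1}}}
  = \lrabs{\frac{\partial x_{1:N}, w_{1:N-1}}{\partial m_{1:2N-1}}}
    \lrabs{\frac{\partial m_{1:2N-1}}{\partial a_{1:N}, b_{1:N-1}}}
  = \lrabs{\frac{\partial m_{1:2N-1}}{\partial a_{1:N}, b_{1:N-1}}}
    \bigg/
    \lrabs{\frac{\partial m_{1:2N-1}}{\partial x_{1:N}, w_{1:N-1}}}.
\end{equation*}
Substituting the two expressions in terms of the Hankel determinant $\lrabs{\underline{H}_{2N-2}}$ furnished by \eqref{eq:jacobian_x,w_moments} and \eqref{eq:jacobian_a,b_moments}, namely $\lrabs{\underline{H}_{2N-2}}^2/\prod_{n=1}^N w_n$ in the denominator and $\lrabs{\underline{H}_{2N-2}}^2/\prod_{n=1}^{N-1} b_n$ in the numerator, the factor $\lrabs{\underline{H}_{2N-2}}^2$ cancels and one is left with $\prod_{n=1}^{N-1} b_n^{-1}\prod_{n=1}^N w_n$, which is the claimed value.

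I do not expect any genuine obstacle here: all the analytic content — the injectivity and surjectivity, the $C^1$ regularity, the openness of $\calM$, and the non-vanishing Jacobian computations — has already been carried out in Propositions~\ref{propo:jacobian_x,w_moments} and~\ref{propo:jacobian_a,b_moments}. The only points that need a little care are bookkeeping ones: confirming that the domains and codomains match so that $\rho\circ\phi$ is legitimate, that composing two $C^1$-diffeomorphisms yields a $C^1$-diffeomorphism, and that the displayed ratio of Jacobians is the correct consequence of differentiating $\psi=\rho\circ\phi$ with $\phi$ a diffeomorphism.
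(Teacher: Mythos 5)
Your proposal is correct and follows exactly the route the paper takes: it defines $\rho=\psi\circ\phi^{-1}$ (so $\psi=\rho\circ\phi$), composes the two $C^1$-diffeomorphisms of Propositions~\ref{propo:jacobian_x,w_moments} and~\ref{propo:jacobian_a,b_moments}, and obtains the Jacobian as the ratio of \eqref{eq:jacobian_a,b_moments} to \eqref{eq:jacobian_x,w_moments}, with the common factor $\lrabs{\underline{H}_{2N-2}}^2$ cancelling. No gaps; your bookkeeping remarks about matching domains and the chain rule are precisely what the paper leaves implicit.
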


    We now have all the ingredients to give an explicit proof of \Thref{th:joint_distribution_recurrence_coefficients_Tr_J}, of which the three classical tridiagonal models of \Secref{sec:proving_the_three_classical_tridiagonal_models} will be seen to be corollaries.
    \begin{proof}[Proof of \Thref{th:joint_distribution_recurrence_coefficients_Tr_J}]
        For simplicity we drop the indicator functions and rewrite the density of the nodes and weights as
        \begin{align*}
            % \label{eq:}
            \eqref{eq:joint_distribution_x_w}
                &= \lrp{\lrabs{\Delta \lrp{x_{1}, \dots, x_{N}}}^{2}
                    \prod_{n=1}^N
                        w_n}^{\frac{\beta}{2}}
                \e^{-\Tr[V(\diag(x_{1}, \dots, x_{N}))]}
                \prod_{n= 1}^N
                    w_n^{-1}
                \diff x_{1:N}
                \diff w_{1:N-1}
      \end{align*}

      Combining \Lemref{lem:H_2N-2_Vandermonde_prod_b}, and the fact that $x_{1}, \dots, x_{N}$ are the eigenvalues of $\Jab$, the change of variables provided by \Proporef{propo:change_of_variables_favard} yields

      \begin{align*}
            % \label{eq:}
            \eqref{eq:joint_distribution_x_w}
      &= \lrp{\prod_{n=1}^{N-1}
                    b_n^{N-n}}^{\frac{\beta}{2}}
                \e^{-\Tr[V(\Jab)]}
                \prod_{n= 1}^N
                    w_n^{-1}
                \lrabs{
                    \frac{  \partial x_{1:N}, w_{1:N-1}
                        }{  \partial a_{1:N}, b_{1:N-1} }
                  }
                \diff a_{1:N}
                \diff b_{1:N-1}\\
            &=
                \prod_{n=1}^{N-1}
                b_n^{\frac{\beta}{2}(N-n)-1}
                \e^{-\Tr[V(\Jab)]}
                \diff a_{1:N}
                \diff b_{1:N-1},
        \end{align*}
    where the last equality follows from \eqref{eq:jacobian_x,w_a,b}.
    \end{proof}

    % subsection making_the_change_of_variable (end)

% section making_the_change_of_variable (end)

\section{Proving the three classical tridiagonal models} % (fold)
\label{sec:proving_the_three_classical_tridiagonal_models}

    \Thref{th:joint_distribution_recurrence_coefficients_Tr_J} gives the distribution over recurrence coefficients, from which one has to sample, in order for the atoms of the corresponding atomic measure to follow a given $\beta$-ensemble.
    When the potential of the $\beta$-ensemble is taken among three particular forms, the recurrence coefficients turn out to be independent with simple distributions.
    In particular, the recurrence coefficients are much simpler to sample than the complex joint distribution \eqref{eq:joint_distribution_beta_ensemble} of the atoms.

    \subsection{The H$\beta$E and its tridiagonal model} % (fold)
    \label{sub:the_hermite_ensemble_and_its_tridiagonal_model}

        The tridiagonal model associated to the Hermite $\beta$-ensemble, cf.\,\Thref{th:hermite_beta_ensemble}, follows from a direct application of \Thref{th:joint_distribution_recurrence_coefficients_Tr_J} and the following immediate lemma.
        \begin{lemma}
        \label{lem:trace_J_J^2-ab}
            Let $\Jab$ be a Jacobi matrix as defined by \eqref{eq:jacobi_matrix_a_b}, with eigenvalues $x_{1}, \dots, x_{N}$.
            It holds that
            \begin{equation}
            \label{eq:trace_J_J^2-ab}
                \sum_{n=1}^N x_n
                    = \Tr \Jab
                    = \sum_{n=1}^N a_n
                \quad \text{and} \quad
                \sum_{n=1}^N x_n^2
                    = \Tr \Jab^2
                    = \sum_{n=1}^N a_n^2 + 2\sum_{n=1}^{N-1} b_n.
            \end{equation}
            % See also the examples \ref{py_jacobi_ab-trace_J} and \ref{py_jacobi_ab-trace_J2} given for $N=4$.
        \end{lemma}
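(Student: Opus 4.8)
The plan is to compute the two traces directly from the tridiagonal structure of $\Jab$ in \eqref{eq:jacobi_matrix_a_b}, and to identify the power sums of the eigenvalues with these traces using the standard fact that $\Tr M^k = \sum_i \lambda_i^k$ for any (here symmetric, hence diagonalizable) matrix $M$ with eigenvalues $\lambda_i$.

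First I would treat the first identity. Since the diagonal of $\Jab$ is exactly $(a_1,\dots,a_N)$, we immediately get $\Tr\Jab = \sum_{n=1}^N a_n$. On the other hand, by \Proporef{propo:atoms_are_eigenvalues} (or simply because $\Jab$ is real symmetric and similar to $\tildeJab$ whose characteristic polynomial is $P_N$), its eigenvalues are $x_1,\dots,x_N$, so $\Tr\Jab = \sum_{n=1}^N x_n$. Chaining the two equalities gives the left-hand identity in \eqref{eq:trace_J_J^2-ab}.

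Second I would handle $\Tr\Jab^2$. The key observation is that for a symmetric matrix, $\Tr M^2 = \sum_{i,j} M_{ij}^2$, i.e. the sum of squares of all entries. For the Jacobi matrix the nonzero entries are the $N$ diagonal entries $a_n$ and the $2(N-1)$ off-diagonal entries, each equal to $\sqrt{b_n}$ and appearing twice (once as $(\Jab)_{n,n+1}$ and once as $(\Jab)_{n+1,n}$). Hence $\Tr\Jab^2 = \sum_{n=1}^N a_n^2 + 2\sum_{n=1}^{N-1}(\sqrt{b_n})^2 = \sum_{n=1}^N a_n^2 + 2\sum_{n=1}^{N-1} b_n$. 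Combining again with $\Tr\Jab^2 = \sum_{n=1}^N x_n^2$ (eigenvalues of $\Jab^2$ are $x_n^2$) yields the right-hand identity.

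There is essentially no obstacle here: the lemma is, as the paper says, immediate. The only point that deserves a one-line justification is why the eigenvalues of $\Jab$ are the atoms $x_n$ — but this is exactly \Proporef{propo:atoms_are_eigenvalues}, proved earlier via the similarity $\Jab = D^{-1}\tildeJab D$ in \Proporef{propo:norms_of_Pk}. So the write-up is two short displays plus a sentence invoking that proposition.
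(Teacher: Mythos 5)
Your proposal is correct and matches the argument the paper has in mind: the paper states this lemma without proof, calling it immediate, and your computation (diagonal entries for $\Tr \Jab$, sum of squared entries of the symmetric matrix for $\Tr \Jab^2$, and $\Tr \Jab^k=\sum_n x_n^k$ since the $x_n$ are the eigenvalues) is exactly the intended elementary verification. No gaps; the appeal to \Proporef{propo:atoms_are_eigenvalues} is not even needed here, since the lemma already takes $x_1,\dots,x_N$ to be the eigenvalues of $\Jab$.
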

        \begin{proof}[Proof of \Thref{th:hermite_beta_ensemble}]
            Starting from \Thref{th:joint_distribution_recurrence_coefficients_Tr_J} it remains to express the term $\Tr V(\Jab)$, where $V(x)= \frac{(x-\mu)^2}{2 \sigma^2} =  \frac{1}{2 \sigma^2}(x^2 - 2\mu x + \mu^2)$.
            To this end, \Lemref{lem:trace_J_J^2-ab} yields
            \begin{equation*}
                \Tr V(\Jab)
                    = \frac{1}{2 \sigma^2} \lrb{\Tr \Jab^2 - 2\mu \Tr \Jab
                        + N \mu^2}
                    =\frac{1}{2 \sigma^2} \sum_{n=1}^N (a_n- \mu)^2
                        + \frac{1}{\sigma^2} \sum_{n=1}^{N-1} b_n.
            \end{equation*}
            Finally, we can plug this expression back into \eqref{eq:joint_distribution_x_w} to see that the entries of $\Jab$  are independently distributed, with joint distribution proportional to
            \begin{equation}
                \prod_{n=1}^{N-1} b_{n}^{\frac{\beta}{2}(N-n)-1}
                    \e^{- \frac{1}{\sigma^2} b_n}
                    \diff b_n
                \prod_{n=1}^{N}
                    \e^{-  \frac{1}{2\sigma^2}(a_n- \mu)^2 }
                    \diff a_n.
            \end{equation}
        \end{proof}

        Note that when $\mu$ is still supported on $\mathbb{R}$, but the potential is a more general polynomial, the recurrence parameters are no longer independent, but the interaction remains short range.
        This is what we later exploit in \Secref{sec:gibbs_sampling}, where we derive a fast approximate sampler for various $\beta$-ensembles with polynomial potentials.

    % subsection the_hermite_ensemble_and_its_tridiagonal_model (end)

    % \clearpage
    \subsection{The L$\beta$E and its tridiagonal model} % (fold)
    \label{sub:the_l_beta_e_and_its_tridiagonal_model}

        When the target $\beta$-ensemble is supported on $(0, +\infty)$, there is a natural reparametrization of the recurrence coefficients of $\mu$, which allows to express other quantities than those in \Lemref{lem:trace_J_J^2-ab}.
        This leads to the tridiagonal model of \Thref{th:laguerre_beta_ensemble} for the Laguerre $\beta$-ensemble.

        The reparametrization, denoted by $\xi_{1}, \dots, \xi_{2N-1} > 0$, arose in the work of \citet{Sti1894} on continued fractions; see also \citet[Equation 9.12 in Corollary of Theorem 9.1; Equation 2]{Chi78,Chi71} in the context of three-term recurrence relations.
        To introduce these new parameters, first note that since $\mu$ is now supported on $(0,+\infty)$, the recurrence relation  \eqref{eq:3-terms_recurrence_monic_orthogonal_poly} implies
        \begin{equation}
            \label{eq:on_0_infinity_a_n_is_positive}
            a_n
                = \lrnorm{P_{n-1}}^{-2} \lrsp{xP_{n-1}, P_{n-1}}
                = \lrnorm{P_{n-1}}^{-2} \int_0^\infty x P_{n-1}^2(x) \mu(\diff x)
                >0,
                \quad n=1,\dots,N.
        \end{equation}
        Now, we set
        \begin{equation}
        \label{eq:def_ab_xi}
            \begin{aligned}
            % \begin{adjustbox}{max width=0.93\textwidth}
            % % \small
            %   $
                a_1
                    = \xi_1,~
                % \quad\text{and}\quad
                &a_{n}
                    = \xi_{2n-2}+\xi_{2n-1},
                \quad\text{for } 2\leq n \leq N,\\
                \quad\text{and}\quad
                &b_{n}
                    = \xi_{2n-1}\xi_{2n},
                \quad\text{for } 1\leq n \leq N-1.
                % \quad\text{with}\quad
                % \xi_{n} > 0,
                % $
            % \end{adjustbox}
            \end{aligned}
        \end{equation}
        Equivalently, the new parameters correspond to the Cholesky factorization
        \begin{equation}
            \label{eq:cholesky_factorization_jacobi_matrix_ab_xi}
            \Jab = \Xi~\Xi^{\top},
            \quad\text{where}\quad
            \Xi
            =
            \begin{pmatrix}
                \sqrt{\xi_1}
                    &               &               & (0)\\
                \sqrt{\xi_2}
                    & \sqrt{\xi_3}  &               & \\
                    & \ddots        &\ddots         & \\
                (0)
                    &           & \sqrt{\xi_{2N-2}}     & \sqrt{\xi_{2N-1}}
            \end{pmatrix}.
        \end{equation}
        Note that this bidiagonal transformation is reminiscent of the  construction of the tridiagonal model for the L$\beta$E, where \citet{DuEd02} bidiagonalize a random Gaussian matrix.
        The following proposition shows that the change of variables replacing the recurrence coefficients by $\xi_{1:2N-1}$ is valid.
        \begin{proposition}
            \label{propo:jacobian_xi_a,b}
            Consider $\mu$ supported on $(0,+\infty)$, then the corresponding Jacobi matrix \eqref{eq:jacobi_matrix_a_b} factorizes uniquely as $\Jab = \Xi~\Xi^{\top}$, where $\Xi$ is given by \eqref{eq:cholesky_factorization_jacobi_matrix_ab_xi}.
            Moreover, the mapping
            \begin{equation}
            \label{eq:map_xi_a,b}
                \begin{array}{rcl}
                (\xi_1,\dots, \xi_{2N-1})
                    &\longmapsto
                    &(a_{1:N}, b_{1:N- 1}),
                \end{array}
            \end{equation}
            defined by \eqref{eq:def_ab_xi}
            is a $C^1$-diffeomorphism of $(0, +\infty)^{2N-1}$ onto itself, and its Jacobian reads
            \begin{equation}
            \label{eq:jacobian_xi_a,b}
                \lrabs{
                        \frac{  \partial a_{1:N}, b_{1:N-1}
                            }{  \partial \xi_{1:2N- 1}}
                      }
                = \prod_{i=1}^{N- 1} \xi_{2i-1}.
            \end{equation}
        \end{proposition}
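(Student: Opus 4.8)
The plan is to treat the three assertions in turn, the engine throughout being the Cholesky factorization of a positive-definite tridiagonal matrix.

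\textbf{Existence and uniqueness of the factorization.} Since $\mu$ is supported on $(0,+\infty)$, \Proporef{propo:atoms_are_eigenvalues} identifies the eigenvalues of $\Jab$ with the atoms $x_1,\dots,x_N>0$, so $\Jab$ is symmetric positive definite. It therefore admits a unique Cholesky factorization $\Jab = LL^\top$ with $L$ lower triangular and strictly positive diagonal; a short induction on the entries of $L$, using that $\Jab$ is tridiagonal, shows $L$ is in fact lower bidiagonal. Its sub-diagonal entries are then positive as well, since the $(k{+}1,k)$ entry of $\Jab$ is $\sqrt{b_k} = L_{k+1,k}L_{kk}$ with $L_{kk}>0$. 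Hence we may write $L_{kk} = \sqrt{\xi_{2k-1}}$, $L_{k+1,k} = \sqrt{\xi_{2k}}$ with all $\xi_i>0$, which is precisely the matrix $\Xi$ of \eqref{eq:cholesky_factorization_jacobi_matrix_ab_xi}; reading off the diagonal and sub-diagonal of $\Xi\,\Xi^\top$ recovers exactly \eqref{eq:def_ab_xi}, and uniqueness of the $\xi_i$ follows from uniqueness of the Cholesky factor. (In particular $a_n>0$, consistently with \eqref{eq:on_0_infinity_a_n_is_positive}.)

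\textbf{The $C^1$-diffeomorphism.} The map \eqref{eq:map_xi_a,b} is polynomial, hence $C^1$, on $(0,+\infty)^{2N-1}$, and it is injective: solving \eqref{eq:def_ab_xi} recursively gives $\xi_1 = a_1$, then $\xi_{2n} = b_n/\xi_{2n-1}$ and $\xi_{2n+1} = a_{n+1}-\xi_{2n}$, which both determines $\xi$ from $(\bfa,\bfb)$ and exhibits the inverse as a $C^1$ (rational) map on the image. The image is the open set of $(\bfa,\bfb)$ for which $\Jab\succ 0$, equivalently those corresponding to a measure supported on $(0,+\infty)$, which is where the change of variables is applied in the next subsection. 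Together with the non-vanishing Jacobian computed below, the inverse function theorem --- as in the proofs of \Proporef{propo:jacobian_x,w_moments} and \Proporef{propo:jacobian_a,b_moments} --- confirms that \eqref{eq:map_xi_a,b} is a $C^1$-diffeomorphism onto that set.

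\textbf{The Jacobian.} I would order the outputs as $(a_1,b_1,a_2,b_2,\dots,a_{N-1},b_{N-1},a_N)$, so that the $p$-th output depends only on $\xi_1,\dots,\xi_p$: $a_k$ involves only $\xi_{2k-2},\xi_{2k-1}$, and $b_k$ only $\xi_{2k-1},\xi_{2k}$. The Jacobian matrix is then lower triangular, so its determinant is the product of the diagonal entries $\partial a_1/\partial\xi_1 = 1$, $\partial b_k/\partial\xi_{2k} = \xi_{2k-1}$, and $\partial a_{k+1}/\partial\xi_{2k+1} = 1$, yielding $\bigl|\partial(\bfa,\bfb)/\partial\xi_{1:2N-1}\bigr| = \prod_{i=1}^{N-1}\xi_{2i-1}$ as claimed in \eqref{eq:jacobian_xi_a,b}; this is positive, hence non-vanishing. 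The only mildly delicate points are verifying that the Cholesky factor is genuinely lower bidiagonal with positive sub-diagonal, so that it matches the prescribed form of $\Xi$ with every $\xi_i>0$, and being precise that the codomain is the open set of Jacobi matrices of measures on $(0,+\infty)$; neither is a genuine obstacle, and the Jacobian itself is immediate once the triangular ordering of the outputs is chosen.
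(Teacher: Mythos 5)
Your proof follows essentially the same route as the paper's: positive definiteness of the tridiagonal $\Jab$ (its eigenvalues being the atoms in $(0,+\infty)$) gives a unique Cholesky factor, tridiagonality forces it to be bidiagonal so it matches $\Xi$, the map is polynomial hence $C^1$ with an explicit recursive inverse, and the Jacobian is the determinant of a triangular matrix with diagonal entries $1$ and $\xi_{2i-1}$, giving $\prod_{i=1}^{N-1}\xi_{2i-1}$. Your added precision about the image being the open set of $(\bfa,\bfb)$ with $\Jab\succ 0$, rather than literally all of $(0,+\infty)^{2N-1}$, is if anything more careful than the statement's ``onto itself'' and does not change the substance of the argument.
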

        \begin{proof}
            Given that $\Jab$ is symmetric with positive eigenvalues, the Cholesky factorization $\Jab=\Xi~\Xi^{\top}$ is unique, see, e.g., \citet[Theorem 4.2.7]{GoVL13}.
            Moreover, since $\Jab$ is tridiagonal, the factor $\Xi$ can only be bidiagonal.
            Hence, the mapping \eqref{eq:map_xi_a,b} is injective (and even bijective) and $C^1$ because it is polynomial.
            Finally, by definition of the transformation \eqref{eq:def_ab_xi}, the Jacobian reads as the determinant of a triangular matrix
            \begin{equation*}
                \lrabs{
                        \frac{  \partial a_{1:N}, b_{1:N-1}
                            }{  \partial \xi_{1:2N- 1}}
                      }
                =   \begin{vmatrix}
                        \begin{bmatrix}
                            \frac{\partial a_{i}}{\partial \xi_{2j-1}}
                                & \frac{\partial a_{i}}{\partial \xi_{2j}}\\
                            \frac{\partial b_{i}}{\partial \xi_{2j-1}}
                                & \frac{\partial b_{i}}{\partial \xi_{2j}}
                        \end{bmatrix}_{i,j= 1}^{N-1}
                        & \begin{bmatrix}
                            \frac{\partial a_{i}}{\partial \xi_{2N-1}}\\
                            \frac{\partial b_{i}}{\partial \xi_{2N-1}}
                          \end{bmatrix}_{i=1}^{N-1}
                        \\
                        \begin{bmatrix}
                            \frac{\partial  a_N}{\partial \xi_{2j-1}}
                            \enspace
                                \frac{\partial  a_N}{\partial \xi_{2j}}
                        \end{bmatrix}_{j= 1}^{N-1}
                        &
                            \frac{\partial  a_N}{\partial \xi_{2N-1}}\\
                    \end{vmatrix}
                    % \\
                = \prod_{i= 1}^N
                    \underbrace{\frac{\partial a_{i}}{\partial \xi_{2i-1}}}_{=1}
                  \prod_{i= 1}^{N-1}
                    \underbrace{\frac{\partial b_{i}}{\partial \xi_{2i}}}_{=\xi_{2i-1}}\cdot
                    % \\
                % &=\prod_{i= 1}^N
                %   1
              %       \prod_{i= 1}^{N-1}
                %   \xi_{2i-1}\\
                % = \prod_{i= 1}^{N-1}
                %   \xi_{2i-1}
            \end{equation*}
        \end{proof}
        For our purpose, the Cholesky factorization \eqref{eq:cholesky_factorization_jacobi_matrix_ab_xi} is ideal to express the key quantities that appear in the L$\beta$E.
        The proof of the corresponding tridiagonal model, cf.\,\Thref{th:laguerre_beta_ensemble}, follows from a direct application of \Thref{th:joint_distribution_recurrence_coefficients_Tr_J} and the following immediate lemma.
        \begin{lemma}
        \label{lem:sum-prod_x_xi}
            Let $\Jab = \Xi~\Xi^{\top}$ as in \eqref{eq:cholesky_factorization_jacobi_matrix_ab_xi} and note $x_{1}, \dots, x_{N}$ its eigenvalues.
            Then,
            \begin{equation}
            \label{eq:sum-prod_x_xi}
                \sum_{n=1}^N x_n
                    = \Tr \Jab
                    % = \sum_{n=1}^N a_n
                    = \sum_{n=1}^{2N-1} \xi_{n}
                \quad \text{and} \quad
                \prod_{n=1}^N x_n
                    % = \det \Jab
                    = \det \Jab
                    = \prod_{n=1}^N \xi_{2n-1}.
            \end{equation}
            % See also the examples \ref{py_jacobi_xi-trace_J} and \ref{py_jacobi_xi-det_J} given for $N=4$.
        \end{lemma}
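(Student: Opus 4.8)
The plan is to read off both spectral invariants of $\Jab$ directly from the Cholesky factor $\Xi$ of \eqref{eq:cholesky_factorization_jacobi_matrix_ab_xi}. The leftmost equality in each of the two displays is free: since $x_1,\dots,x_N$ are by assumption the eigenvalues of $\Jab$, one has $\sum_n x_n=\Tr\Jab$ and $\prod_n x_n=\det\Jab$. So everything reduces to expressing $\Tr\Jab$ and $\det\Jab$ in terms of the $\xi_i$'s, and \Proporef{propo:jacobian_xi_a,b} already guarantees that the factorization $\Jab=\Xi\Xi^\top$ with $\Xi$ lower bidiagonal exists and is unique, so I may work with it freely.

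For the trace I would write $\Tr\Jab=\Tr(\Xi\Xi^\top)=\lrnorm{\Xi}_F^2$, the sum of the squares of the entries of $\Xi$. Inspecting \eqref{eq:cholesky_factorization_jacobi_matrix_ab_xi}, the diagonal entries $\sqrt{\xi_1},\sqrt{\xi_3},\dots,\sqrt{\xi_{2N-1}}$ contribute $\xi_1+\xi_3+\dots+\xi_{2N-1}$ and the subdiagonal entries $\sqrt{\xi_2},\sqrt{\xi_4},\dots,\sqrt{\xi_{2N-2}}$ contribute $\xi_2+\xi_4+\dots+\xi_{2N-2}$; together this is $\sum_{n=1}^{2N-1}\xi_n$. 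Alternatively one can sum the diagonal $a_1,\dots,a_N$ of $\Jab$ via the defining relations \eqref{eq:def_ab_xi}, namely $a_1=\xi_1$ and $a_n=\xi_{2n-2}+\xi_{2n-1}$ for $n\geq 2$, and observe that the index set $\{1\}\cup\{2n-2,2n-1:2\leq n\leq N\}$ enumerates $\{1,\dots,2N-1\}$ exactly once.

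For the determinant I would use multiplicativity: $\det\Jab=\det(\Xi\Xi^\top)=(\det\Xi)^2$. Because $\Xi$ is (lower) bidiagonal, its determinant is the product of its diagonal entries, $\det\Xi=\prod_{n=1}^N\sqrt{\xi_{2n-1}}$, hence $\det\Jab=\prod_{n=1}^N\xi_{2n-1}$, which is the claimed identity.

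I do not expect any real obstacle: the lemma is labelled ``immediate'' precisely because the Cholesky factor exposes both invariants at a glance. The only points requiring a little care are the index bookkeeping that matches the entries of $\Xi$ (equivalently the $a_n$'s) to the list $\xi_1,\dots,\xi_{2N-1}$ without omission or repetition, and the reminder that all $\xi_i>0$ so that $\sqrt{\xi_i}$ and the Cholesky factorization make sense — but positivity was already secured in \Proporef{propo:jacobian_xi_a,b}.
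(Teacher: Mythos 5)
Your proposal is correct and is exactly the argument the paper intends: the lemma is labelled ``immediate'' and left without a written proof, precisely because $\Tr \Jab=\Tr\lrp{\Xi\,\Xi^{\top}}$ is the sum of the squared entries of the bidiagonal Cholesky factor (equivalently the sum of the $a_n$'s via \eqref{eq:def_ab_xi}), giving $\sum_{n=1}^{2N-1}\xi_n$, while $\det \Jab=(\det\Xi)^2=\prod_{n=1}^{N}\xi_{2n-1}$ since the determinant of the bidiagonal $\Xi$ is the product of its diagonal entries. Your index bookkeeping and the appeal to \Proporef{propo:jacobian_xi_a,b} for existence, uniqueness and positivity of the factorization are exactly the right supporting points, so nothing is missing.
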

        \begin{proof}[Proof of \Thref{th:laguerre_beta_ensemble}]
            Applying \Lemref{lem:sum-prod_x_xi} to the $V(x)= -(k-1)\log(x) + \frac{x}{\theta}$ yields
            \begin{align}
                \exp\lrb{- \Tr V(\Jab)}
                    &=\lrp{\det \Jab}^{k-1}
                        \exp\lrp{-\frac{1}{\theta}\Tr \Jab}
                    \nonumber\\
                    &\oplim{=}{}{\eqref{eq:sum-prod_x_xi}} \prod_{n=1}^{N} \xi_{2n-1}^{k-1}
                        \exp\lrp{-\frac{1}{\theta}
                                 \sum_{n=1}^{2N-1} \xi_{n}}.
                    \label{eq:exp_-_Tr_V_Jab_xi}
            \end{align}
            Starting from \eqref{eq:joint_distribution_a,b_Tr-J}, \Proporef{propo:jacobian_xi_a,b} gives the joint distribution of the underlying $\xi_{1:2N-1}$ parameters as proportional to
            \begin{align}
                &\prod_{n=1}^{N-1}
                    b_{n}^{\frac{\beta}{2}(N-n)-1}
                    \e^{- \Tr V(\Jab)}
                    \diff a_{1:N}
                    \diff b_{1:N-1}
                \nonumber\\
                &
                \overset{\eqref{eq:def_ab_xi}}{=}
                \prod_{n=1}^{N-1}
                    \lrp{\xi_{2n-1}\xi_{2n}}^{\frac{\beta}{2}(N-n)-1}
                \e^{- \Tr V(\Jab)}
                \lrabs{
                \frac{\partial a_{1:N}, b_{1:N-1}}
                    {\partial \xi_{1:2N-1}}
                }
                \diff \xi_{1:2N-1}
                \nonumber
                \\
                &
                \overset{\eqref{eq:jacobian_xi_a,b}}{=}
                \prod_{n=1}^{N-1}
                    \xi_{2n-1}^{\frac{\beta}{2}(N-n)-\cancel{1}}
                    \xi_{2n}^{\frac{\beta}{2}(N-n)-1}
                \e^{- \Tr V(\Jab)}
                \cancel{\prod_{n=1}^{N-1} \xi_{2n-1}}
                \diff \xi_{1:2N-1}
                % \nonumber
                \label{eq:joint_distribution_xi_Tr-J}
                \\
                &
                \overset{\eqref{eq:exp_-_Tr_V_Jab_xi}}{=}
                \prod_{n=1}^{N}
                    \xi_{2n-1}^{\frac{\beta}{2}(N-n)}
                \prod_{n=1}^{N-1}
                    \xi_{2n}^{\frac{\beta}{2}(N-n)-1}
                \Big(\prod_{n=1}^N \xi_{2n-1}\Big)^{k-1}
                \e^{- \frac{1}{\theta}\sum_{n=1}^{2N-1} \xi_n}
                \diff \xi_{1:2N-1}
                \nonumber
                % \\
                % &=
                % \prod_{n=1}^{N}
                %     \xi_{2n-1}^{\frac{\beta}{2}(N-n)+k-1}
                %     \e^{- \frac{1}{\theta} \xi_{2n-1}}
                % \prod_{n=1}^{N-1}
                %     \xi_{2n}^{\frac{\beta}{2}(N-n)-1}
                %     \e^{- \frac{1}{\theta} \xi_{2n}}
                % \diff \xi_{1:2N-1}.
                % \nonumber
                % &= \prod_{n=1}^{N-1}
                %   \xi_{2n-1}^{\frac{\beta}{2}(N-n)}
                %   \xi_{2n}^{\frac{\beta}{2}(N-n)-1}
                % \e^{- \Tr V(\Jab)}
                % \diff \xi_{1:2N-1}
            \end{align}

            % Finally, we obtain independently distributed coefficients with joint distribution
            % \begin{equation}
            % \prod_{n=1}^{N}
            %   \xi_{2n-1}^{\frac{\beta}{2}\lrp{N-n}+k-1}
            %   \e^{- \frac{1}{\theta} \xi_{2n-1}}
            %   \diff \xi_{2n-1}
            %  \prod_{n=1}^{N-1}
            %   \xi_{2n}^{\frac{\beta}{2}\lrp{N-n}-1}
            %   \e^{- \frac{1}{\theta} \xi_{2n}}
            %   \diff \xi_{2n}
            % \end{equation}
            % For $k=\frac{\beta}{2}(M-N+1)$ and $\theta=2$ we recover the result of \citet[III B]{DuEd02}.
        \end{proof}

        In the next section, we introduce another reparametrization of the recurrence coefficients, this time when $\mu$ is supported in a compact interval: the canonical moments of \citet{DeSt97}.

    % subsection the_l_beta_e_and_its_tridiagonal_model (end)

    \subsection{The J$\beta$E and its tridiagonal model} % (fold)
    \label{sub:the_j_beta_e_and_its_tridiagonal_model}

        Finding a tridiagonal model for the J$\beta$E was left as an open problem by \citet[IV B]{DuEd02}.
        The latter was addressed by \citet[Theorem 2]{KiNe04} in their study of the quindiagonal model associated to the circular ensemble.
        However, the authors acknowledged that lifting the points on the unit circle to apply their result represents a winding detour to prove the J$\beta$E.
        Besides, the Jacobian required by this method was obtained by indirect means by \citet[Lemma 4.3]{KiNe07}.
        Subsequently, \citet[Theorem 2]{FoRa06} obtained the Jacobian more directly.

        We can actually prove the tridiagonal model of \Thref{th:jacobi_beta_ensemble} by reparametrizing the Jacobi matrix $\Jab$ again, this time using \emph{canonical moments} \citep[Chapter 1]{DeSt97}.
        In essence, the result can be found in the work of \citet{GaRo10} and \citet{DeNa12}, but we rephrase it as just another consequence of \Thref{th:joint_distribution_recurrence_coefficients_Tr_J}.

        Before formally introducing them, let us mention that canonical moments and their complex counterpart were successfully used to investigate the connection between randomized moments problems, orthogonal polynomials, and optimal design \citep{DeSt97} and in random matrix theory \citep{GaRo10,GaNaRo16}.
        In particular, canonical moments can be thought of as a reparametrization of the moments, where $0<c_n<1$ represents the relative position of the $n$-th moment $m_n$ in the range of all possible moments associated to measure with compatible previous moments $m_{1}, \dots, m_{n-1}$, see \citet{DeSt97}.

        Throughout this section, we assume that the $N$-atomic measure $\mu$ is supported on $(0,1)$.
        In particular, with $(\xi_n)$ the parameters introduced in Section~\ref{sub:the_l_beta_e_and_its_tridiagonal_model}, it comes
        \begin{equation*}
            \label{eq:on_0_1_a_n_is_in_0_1}
            0   < \xi_{2n-2} + \xi_{2n-1}
                = a_n
                = \lrnorm{P_{n-1}}^{-2} \lrsp{xP_{n-1}, P_{n-1}}
                < 1,
                \quad n=2,\dots,N.
        \end{equation*}
        Similarly, $\xi_1=a_1\in (0,1).$
            This implies $0<\xi_n<1$ for all $1\leq n\leq 2N-1$.
        Following the work of \citet{Wal40} on chain sequences and continued fractions, we introduce a new parametrization of the recurrence coefficients.

        \begin{lemma}[Wall]
        \label{lem:wall}
        Assume $\mu$ is supported on $(0,1)$, there exist a sequence $(c_n)\in(0,1)^{\mathbb{N}}$ such that
        \begin{equation}
            \label{eq:def_xi_c}
                \xi_1 = c_1
                \quad \text{and} \quad
                \xi_n = (1-c_{n-1})c_n,
                \quad \forall 2\leq n \leq 2N-1.
                % \begin{cases}
                %   \xi_1
                %       &= c_1 \\
                %   \xi_i
                %       &= (1-c_{i-1})c_i
                % \end{cases}cite
            \end{equation}
        \end{lemma}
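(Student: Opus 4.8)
The plan is to construct $(c_n)$ by the only recursion compatible with \eqref{eq:def_xi_c}: set $c_1 := \xi_1$ and, for $2 \le n \le 2N-1$, $c_n := \xi_n/(1 - c_{n-1})$, and extend arbitrarily (say $c_n := 1/2$) for $n \ge 2N$, since no further constraint is imposed. As $\xi_1 = a_1 \in (0,1)$ we get $c_1 \in (0,1)$, and the recursion preserves positivity as long as $c_{n-1}<1$; so the whole content is the bound $c_n < 1$. To track it I would introduce the numbers $D_{-1} := D_0 := 1$ and $D_n := D_{n-1} - \xi_n D_{n-2}$, and check by induction that whenever $D_0,\dots,D_{n-1}>0$, the quantity $c_n$ is well defined with $1 - c_n = D_n/D_{n-1}$; since $D_n = D_{n-1} - \xi_n D_{n-2} < D_{n-1}$ (because $\xi_n, D_{n-2} > 0$), this gives $c_n \in (0,1) \iff D_n > 0$. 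Hence the lemma reduces to showing $D_n > 0$ for all $0 \le n \le 2N-1$.

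For this step I would recognize the $D_n$ as leading principal minors of a single symmetric tridiagonal matrix built from the $\xi_j$'s alone (so there is no circularity with the $c_n$'s). Let $T$ be the $2N\times 2N$ real-symmetric tridiagonal matrix with zero diagonal and off-diagonal entries $T_{j,j+1} = T_{j+1,j} = \sqrt{\xi_j}$ for $1 \le j \le 2N-1$. Expanding along the last row, the $k$-th leading principal minor $\Delta_k$ of $I - T$ satisfies $\Delta_0 = \Delta_1 = 1$ and $\Delta_k = \Delta_{k-1} - \xi_{k-1}\Delta_{k-2}$, so $\Delta_k = D_{k-1}$ for $1 \le k \le 2N$. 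Thus, by Sylvester's criterion, it suffices to prove that $I - T$ is positive definite. Now $T^2$ has nonzero entries only between indices of equal parity, and a direct computation identifies its odd--odd block with $\Xi\Xi^{\top} = \Jab$ and its even--even block with $\Xi^{\top}\Xi$, where $\Xi$ is the bidiagonal Cholesky factor of \eqref{eq:cholesky_factorization_jacobi_matrix_ab_xi}. Since $\Xi^{\top}\Xi$ and $\Xi\Xi^{\top}$ share the same nonzero spectrum and $\Jab$ is invertible, $T^2$ has eigenvalues $x_1,\dots,x_N$ each with multiplicity two; because $\mu$ is supported on $(0,1)$ all $x_n \in (0,1)$, so the eigenvalues of $T$ are $\pm\sqrt{x_n} \in (-1,1)$. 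Therefore $\|T\| < 1$ and $I - T$ is positive definite, which is exactly what was needed.

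It then remains to assemble the pieces: $D_0,\dots,D_{2N-1} > 0$, so by the induction of the first paragraph $c_n \in (0,1)$ for all $1 \le n \le 2N-1$, and by construction these $c_n$ satisfy \eqref{eq:def_xi_c}; the arbitrary extension for $n \ge 2N$ yields the claimed $(c_n)\in(0,1)^{\mathbb N}$. The main obstacle is the middle step, turning the hypothesis ``$\mu$ supported on $(0,1)$'' into the positivity of the $D_n$; everything hinges on spotting the right auxiliary matrix $T$ and on the block structure of $T^2$ relating it to $\Xi\Xi^{\top}$ and $\Xi^{\top}\Xi$. This is in effect Wall's chain-sequence criterion applied to $\{\xi_n\}$; one could instead cite Wall directly, but the matrix argument keeps the proof self-contained and in line with the rest of the section.
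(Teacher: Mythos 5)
Your argument is correct, but it takes a genuinely different route from the paper: the paper does not prove this lemma at all, it simply defines $c_1=\xi_1$, $c_n=\xi_n/(1-c_{n-1})$ and cites Wall's Theorem 6.1 on chain sequences (see also Chihara) for the nontrivial fact that $0<c_n<1$. You instead make that step self-contained. Your bookkeeping is right: with $D_{-1}=D_0=1$ and $D_n=D_{n-1}-\xi_n D_{n-2}$ one indeed gets $1-c_n=D_n/D_{n-1}$ by induction, so everything reduces to $D_n>0$; the shifted recursion $\Delta_k=\Delta_{k-1}-\xi_{k-1}\Delta_{k-2}$ for the leading principal minors of $I-T$ gives $\Delta_k=D_{k-1}$; and the parity structure of $T^2$ is exactly as you say, its odd--odd block being $\Xi\,\Xi^{\top}=\Jab$ and its even--even block $\Xi^{\top}\Xi$ (this is the classical Golub--Kahan ``doubling'' of a bidiagonal factor, equivalently the symmetrization $x\mapsto x^2$ trick for measures on $(0,1)$). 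Since $\Xi$ is invertible, every eigenvalue $\lambda$ of the symmetric matrix $T$ satisfies $\lambda^2\in\{x_1,\dots,x_N\}\subset(0,1)$, so $I-T\succ 0$ and all $D_n>0$; the precise $\pm\sqrt{x_n}$ pairing you mention is true but not even needed. In effect you re-prove Wall's chain-sequence criterion in this finite setting: what the citation buys the paper is brevity and a pointer to the general theory, while your matrix argument buys a stand-alone, elementary proof in the same linear-algebra spirit as the rest of the section, at the cost of introducing the auxiliary $2N\times 2N$ matrix $T$.
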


        We do not prove Lemma~\ref{lem:wall} and refer to \citet[Theorem 6.1]{Wal40}; see also \citet[Chapter 3]{Chi78} for more details on chain sequences.
        We simply note that defining $(c_n)$ in \eqref{eq:def_xi_c} is straightforward, the nontrivial part of the lemma is that $0<c_n<1$ for all $n$.
        % The parameters that come into play when studying measures supported on a compact interval are the so-called canonical moments \citep[Chapter 1]{DeSt97}.
        % Without loss of generality we focus on measures supported on $[0,1]$.
        We also note that the $c_n$s are today known as the \emph{canonical moments} of $\mu$; see the monograph of \citet{DeSt97}.

        The following proposition shows that the change of variables replacing $\xi$ by $c$ is valid.
        \begin{proposition}
        \label{propo:jacobian_c_xi}
            Consider $\mu$ supported on $(0,1)$, then the corresponding Jacobi matrix \eqref{eq:jacobi_matrix_a_b} can be parametrized in terms of the canonical moments following \eqref{eq:def_ab_xi} and \eqref{eq:def_xi_c}.
            Moreover, the mapping
            \begin{equation}
            \label{eq:map_c_xi}
                \begin{array}{rcl}
                (c_1,\dots, c_{2N-1})
                    &\longmapsto
                    &(\xi_1,\dots, \xi_{2N-1}),
                \end{array}
            \end{equation}
            defined by \eqref{eq:def_xi_c}
            is a $C^1$-diffeomorphism of $(0, 1)^{2N-1}$ onto itself, and its Jacobian reads
            \begin{equation}
            \label{eq:jacobian_c_xi}
                \lrabs{
                    \frac{  \partial \xi_{1:2N- 1}
                        }{  \partial c_{1:2N-1}}
                  }
                = \prod_{n=1}^{2N-2} (1-c_n).
            \end{equation}
        \end{proposition}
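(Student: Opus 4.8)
The plan is to follow the blueprint of \Proporef{propo:jacobian_xi_a,b} almost verbatim: show that \eqref{eq:def_xi_c} defines a polynomial, triangularly structured bijection, read off its Jacobian as the determinant of a triangular matrix, and then invoke the inverse function theorem \citep[Corollary 4.2.2]{Car71} for the $C^1$-diffeomorphism claim. The one external input is Wall's \Lemref{lem:wall}, which we only quote; it is what makes the map well defined between the advertised domains, i.e.\ what guarantees that the $c_n$ stay in $(0,1)$.

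First I would note that, by \eqref{eq:def_xi_c}, each $\xi_n$ is a polynomial in $(c_{n-1},c_n)$, so the map $c_{1:2N-1}\mapsto\xi_{1:2N-1}$ is $C^1$. It is injective, with an explicit recursive inverse $c_1=\xi_1$ and $c_n=\xi_n/(1-c_{n-1})$ for $2\le n\le 2N-1$: \Lemref{lem:wall} guarantees that, starting from the $\xi$'s associated with a measure on $(0,1)$, all the $c_n$ produced this way lie in $(0,1)$, so no denominator vanishes; combined with the bijectivity of $\xi\mapsto(a,b)$ from \Proporef{propo:jacobian_xi_a,b} this certifies that $c_{1:2N-1}$ is a bona fide reparametrization of the Jacobi matrix $\Jab$ of such a measure. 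Then I would compute the Jacobian directly from \eqref{eq:def_xi_c}:
\[
\frac{\partial \xi_1}{\partial c_1}=1,
\qquad
\frac{\partial \xi_n}{\partial c_n}=1-c_{n-1},
\qquad
\frac{\partial \xi_n}{\partial c_{n-1}}=-c_n
\quad(2\le n\le 2N-1),
\]
all remaining partials being zero. Hence $\bigl[\partial\xi_i/\partial c_j\bigr]$ is lower bidiagonal, its determinant equals the product of its diagonal entries, and
\[
\lrabs{\frac{\partial\xi_{1:2N-1}}{\partial c_{1:2N-1}}}
=1\cdot\prod_{n=2}^{2N-1}(1-c_{n-1})
=\prod_{n=1}^{2N-2}(1-c_n),
\]
which is exactly \eqref{eq:jacobian_c_xi}. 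Since every $c_n\in(0,1)$ this never vanishes, so the inverse map is $C^1$ by the inverse function theorem, completing the argument.

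The genuinely nontrivial ingredient is the one we borrow: that the back-substitution $c_n=\xi_n/(1-c_{n-1})$ returns values in $(0,1)$ (equivalently, that the $\xi$'s attached to a measure on $(0,1)$ form a valid chain sequence), which is \citet[Theorem 6.1]{Wal40} and which we do not reprove. Everything else is elementary: $C^1$ regularity is clear because the map is polynomial, injectivity is immediate from the triangular recursion, and the Jacobian is a one-line computation once one notices the nearest-neighbour --- hence bidiagonal --- dependency pattern of \eqref{eq:def_xi_c}, exactly as in the $\xi\mapsto(a,b)$ step of \Proporef{propo:jacobian_xi_a,b}.
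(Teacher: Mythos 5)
Your proposal is correct and follows essentially the same route as the paper: bijectivity is delegated to Wall's lemma, $C^1$ regularity comes from the map being polynomial, and the Jacobian is read off as the product of the diagonal entries $\partial\xi_n/\partial c_n$ of a (bi)triangular matrix, giving $\prod_{n=1}^{2N-2}(1-c_n)$. The only difference is that you spell out the recursive inverse and invoke the inverse function theorem for the smoothness of the inverse, a detail the paper leaves implicit.
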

        \begin{proof}
            The map \eqref{eq:map_c_xi} is a bijection by definition and Lemma~\ref{lem:wall}, and $C^1$ because it is polynomial.
            Then, by definition of the transformation \eqref{eq:def_xi_c}, the Jacobian is the determinant of a triangular matrix
            \begin{equation*}
                \lrabs{
                        \frac{  \partial \xi_{1:2N- 1}
                            }{  \partial c_{1:2N-1}}
                      }
                = \prod_{n= 1}^{2N-1}
                    \frac{  \partial \xi_{n}
                            }{  \partial c_{n}
                        }
                = 1 \cdot \prod_{n=2}^{2N-1} (1-c_{n-1})
                = \prod_{n=1}^{2N-2} (1-c_{n}).
            \end{equation*}
        \end{proof}
        For our purpose, the canonical moment parametrization is ideal to express the key quantities that appear in the J$\beta$E.
        The proof of the corresponding tridiagonal model in \Thref{th:jacobi_beta_ensemble} is again a direct application of \Proporef{propo:jacobian_c_xi} and the following lemma.
        \begin{lemma}
            \label{lem:prod_x,1-x_c}
            It holds that
            \begin{equation}
            \label{eq:prod_x,1-x_c}
                \prod_{n=1}^N x_n
                    = \det \Jab
                    = \prod_{n=1}^N c_{2n-1}
                        \prod_{n=1}^{N-1} (1-c_{2n})
                \quad \text{and} \quad
                \prod_{n=1}^N (1-x_n)
                    = \det[I_N - J]
                    = \prod_{n=1}^{2N-1} (1-c_{n}).
            \end{equation}
            % See also the examples \ref{py_jacobi_c-det_J} and \ref{py_jacobi_c-det_I-J} given for $N=4$.
        \end{lemma}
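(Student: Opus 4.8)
The plan is to reduce both identities to determinant computations that become transparent in the canonical-moment coordinates. Since the atoms $x_{1:N}$ of $\mu$ are precisely the eigenvalues of $\Jab$ (\Proporef{propo:atoms_are_eigenvalues}), the left-hand sides are $\prod_{n=1}^N x_n = \det \Jab$ and $\prod_{n=1}^N (1-x_n) = \det(I_N - \Jab)$, so it suffices to evaluate these two determinants as products of the $c_n$'s.

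The first identity is essentially immediate. \Lemref{lem:sum-prod_x_xi}, obtained from the Cholesky factorisation \eqref{eq:cholesky_factorization_jacobi_matrix_ab_xi}, already gives $\det \Jab = \prod_{n=1}^N \xi_{2n-1}$. Substituting Wall's relations $\xi_1 = c_1$ and $\xi_n = (1-c_{n-1})c_n$ (\Lemref{lem:wall}, \eqref{eq:def_xi_c}) and separating the odd-indexed from the even-indexed factors yields $\det \Jab = \prod_{n=1}^N c_{2n-1}\cdot\prod_{n=1}^{N-1}(1-c_{2n})$, which is the first claim.

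For the second identity I would work with $I_N - \Jab$ directly. It is a real symmetric tridiagonal matrix with diagonal entries $1-a_n$ and off-diagonal entries $-\sqrt{b_n}$, and it is positive definite since its eigenvalues are the $1-x_n\in(0,1)$; hence it admits a Cholesky factorisation $I_N - \Jab = \Upsilon\Upsilon^\top$ with $\Upsilon$ lower bidiagonal. Writing $\hat d_k$ for the square of the $k$-th diagonal entry of $\Upsilon$, one has $\det(I_N-\Jab)=\prod_{k=1}^N\hat d_k$ and the standard recursion $\hat d_1 = 1-a_1$, $\hat d_{k+1} = (1-a_{k+1}) - b_k/\hat d_k$. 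Setting $c_0\triangleq0$ and substituting the canonical-moment forms $a_1 = c_1$, $a_{k+1} = (1-c_{2k-1})c_{2k} + (1-c_{2k})c_{2k+1}$ and $b_k = (1-c_{2k-2})c_{2k-1}(1-c_{2k-1})c_{2k}$ (i.e.\ \eqref{eq:def_ab_xi} composed with \eqref{eq:def_xi_c}), I would prove by induction on $k$ that $\hat d_k = (1-c_{2k-2})(1-c_{2k-1})$. Granting this, $\det(I_N-\Jab) = \prod_{k=1}^N(1-c_{2k-2})(1-c_{2k-1}) = \prod_{n=1}^{2N-1}(1-c_n)$, because the $k=1$ factor $1-c_0$ equals $1$ and the remaining factors exhaust $c_1,\dots,c_{2N-1}$.

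The inductive step is the only genuine computation. With the hypothesis $\hat d_k = (1-c_{2k-2})(1-c_{2k-1})$, the ratio $b_k/\hat d_k$ collapses to $c_{2k-1}c_{2k}$, so $\hat d_{k+1} = 1 - (1-c_{2k-1})c_{2k} - (1-c_{2k})c_{2k+1} - c_{2k-1}c_{2k}$; the two terms carrying $c_{2k}$ combine via $(1-c_{2k-1})+c_{2k-1}=1$ into $c_{2k}$, leaving $\hat d_{k+1} = (1-c_{2k})(1-c_{2k+1})$, as desired. The base case is $\hat d_1 = 1-a_1 = 1-c_1 = (1-c_0)(1-c_1)$. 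I expect the main (mild) obstacle to be bookkeeping: keeping the parity of the indices straight through the substitution and spotting the telescoping cancellation, plus handling the boundary conventions ($c_0=0$, and the fact that $\mu$ has exactly $N$ atoms so the products terminate at $c_{2N-1}$). As an alternative that avoids $I_N-\Jab$ entirely, one can note that $I_N-\Jab$ is conjugate to the Jacobi matrix of the reflected measure $\nu$, the pushforward of $\mu$ under $x\mapsto 1-x$: substituting $x\mapsto 1-x$ in \eqref{eq:3-terms_recurrence_monic_orthogonal_poly} shows $\nu$ has recurrence coefficients $(1-a_n)_n$ and $(b_n)_n$ and is supported in $(0,1)\subset(0,+\infty)$, so \Lemref{lem:sum-prod_x_xi} applies to $\nu$ and the $\hat d_k$ are just its odd-indexed Stieltjes parameters; the same induction then re-expresses the canonical moments of $\nu$ as the odd ones of $\mu$ flipped and the even ones unchanged.
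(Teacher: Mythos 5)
Your proof of the first identity is exactly the paper's: apply \Lemref{lem:sum-prod_x_xi} and substitute Wall's relations \eqref{eq:def_xi_c}. For the second identity, however, you take a genuinely different and correct route. The paper never touches $I_N-\Jab$ directly: it writes $\prod_n(1-x_n)$ as the Hankel-determinant ratio $\lrabs{\overline{H}_{2N-1}}/\lrabs{\underline{H}_{2N-2}}$ via \Lemref{lem:moment_matrices_vandermonde}, expresses the denominator through \eqref{eq:H_2N-2_Vandermonde_prod_b}, imports the identity $\lrabs{\overline{H}_{2N-1}}=\gamma_1^N\prod_{n}(\gamma_{2n}\gamma_{2n+1})^{N-n}$ with the auxiliary parameters $\gamma_n=c_{n-1}(1-c_n)$ from Dette and Studden, and finishes with a telescoping product. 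You instead Cholesky-factorize $I_N-\Jab$ (legitimate: it is symmetric tridiagonal with eigenvalues $1-x_n\in(0,1)$, hence positive definite), use the standard pivot recursion $\hat d_1=1-a_1$, $\hat d_{k+1}=(1-a_{k+1})-b_k/\hat d_k$, and prove by induction that $\hat d_k=(1-c_{2k-2})(1-c_{2k-1})$ with $c_0=0$; I checked the base case and the step ($b_k/\hat d_k=c_{2k-1}c_{2k}$, then the two $c_{2k}$-terms merge), and the bookkeeping is right, giving $\det(I_N-\Jab)=\prod_{n=1}^{2N-1}(1-c_n)$. What each approach buys: yours is self-contained linear algebra that avoids citing the external moment-matrix formula and in effect rederives the $\gamma$-parameters from scratch --- your $\hat d_k$ are precisely the odd-indexed Stieltjes parameters of the reflected measure, as your closing remark about $x\mapsto 1-x$ makes explicit --- whereas the paper's route stays within the moment-matrix machinery it has already built (\Lemref{lem:moment_matrices_vandermonde}, \Lemref{lem:H_2N-2_Vandermonde_prod_b}) at the price of quoting Dette--Studden for the numerator. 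Your sketched reflection alternative is also sound but would need the (standard) fact about how canonical moments transform under $x\mapsto1-x$; the Cholesky induction you actually carry out does not need it.
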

        \begin{proof}
            First, combine the result of \Lemref{lem:sum-prod_x_xi} and the definition of the canonical moments in \eqref{eq:def_xi_c} to get
            \begin{equation}
                \prod_{n=1}^N x_n
                    = \xi_1 \prod_{n=2}^N \xi_{2n-1}
                    = c_1 \prod_{n=2}^N (1-c_{2n-2})c_{2n-1}
                    = \prod_{n=1}^N c_{2n-1} (1-c_{2n}).
            \end{equation}
            Then, \Lemref{lem:moment_matrices_vandermonde} yields
            % observe from \Defref{def:moment_matrices} that the ratio of the determinants of the moment matrices $\overline{H}_{2N-1}$ and $\underline{H}_{2N-2}$ associated to $\mu$ simply yields
            \begin{equation}
            \label{eq:prod_1-x_ratio_moment_matrices}
                \prod_{n= 1}^N (1-x_n)
                =
                \frac{\lrabs{\overline{H}_{2N-1}}}
                     {\lrabs{\underline{H}_{2N-2}}}
                \cdot
                % \lequal
                %   {\eqref{eq:moment_matrices_vandermonde_determinant_first}}
                %   {\eqref{eq:moment_matrices_vandermonde_determinant_second}}
                % \lequal{\eqref{eq:moment_matrix_H2N_under}}
                %      {\eqref{eq:moment_matrix_H2N+1_over}}
                % \frac{\lrabs{\sum_{n=1}^N w_n x_n^{i+j} (1-x_n)}_{i,j=0}^{N-1}}
                %    {\lrabs{\sum_{n=1}^N w_n x_n^{i+j}}_{i,j=0}^{N-1}}
                %   = \frac{\cancel{\lrabs{\Delta}^2
                %                   \prod_{n= 1}^N w_n }
                %           \prod_{n= 1}^N (1-x_n)}
                %          {\cancel{\lrabs{\Delta}^2
                %                   \prod_{n= 1}^N w_n}}
            \end{equation}
            The denominator can be expressed in terms of the $\xi_{1:2N-1}$ parameters
            \begin{equation}
                \lrabs{\underline{H}_{2N-2}}^2
                    \oplim{=}{}{\eqref{eq:H_2N-2_Vandermonde_prod_b}}
                        \prod_{n=1}^{N-1} b_n^{N-n}
                    \oplim{=}{}{\eqref{eq:def_ab_xi}}
                        \prod_{n=1}^{N-1} \lrb{\xi_{2n-1}\xi_{2n}}^{N-n}.
            \end{equation}
            For the numerator, we follow \citet[Theorem 1.4.10]{DeSt97} who introduced additional quantities $\gamma_{1:2N-1}$ to get
            \begin{equation}
         %  \lrabs{\underline{H}_{2N-2}}
            %   \oplim{=}{}{\eqref{eq:H_2N-2_Vandermonde_prod_b},
            %             \eqref{eq:def_a,b-xi}}
            %       \prod_{n=1}^{N-1} \lrb{\xi_{2n-1}\xi_{2n}}^{N-n}
            % \quad \text{and} \quad
            \lrabs{\overline{H}_{2N-1}}
                = \gamma_{1}^N
                    \prod_{n=1}^{N- 1}
                        \lrb{\gamma_{2n}\gamma_{2n+1}}^{N-n},
            \end{equation}
            where
            \begin{equation}
            \label{eq:def_xi_gamma}
                \begin{cases}
                    \xi_1 = c_1\\
                    \gamma_1 = 1 - c_1
                \end{cases}
                \quad \text{and} \quad
                \begin{cases}
                    \xi_n = (1-c_{n-1})c_n\\
                    \gamma_{n} = c_{n-1} (1-c_n)
                \end{cases}
                \forall 2 \leq n \leq 2N-1.
            \end{equation}
            We plug these results back into \eqref{eq:prod_1-x_ratio_moment_matrices}, and conclude that
            \begin{align*}
                \prod_{n= 1}^N (1-x_n)
                    &= \frac{\lrabs{\overline{H}_{2N-1}}}
                            {\lrabs{\underline{H}_{2N-2}}}
                    = \gamma_{1}^N
                        \prod_{n=1}^{N- 1}
                            \lrb{\frac{
                                        \gamma_{2n}\gamma_{2n+1}
                                    }{
                                        \xi_{2n-1}\xi_{2n}
                                    }
                                }^{N- n}\\
                    &= (1-c_1)^N
                        \lrb{\frac{\cancel{c_1}(1-c_2)\cancel{c_2}(1-c_3)}
                                  {\cancel{c_1}(1-c_1)\cancel{c_2}}}^{N-1}
                        \prod_{n=2}^{N- 1}
                            \lrb{\frac{
                                        \cancel{c_{2n-1}}(1-c_{2n})
                                        \cancel{c_{2n}}(1-c_{2n+1})
                                    }{
                                        (1-c_{2n-2})\cancel{c_{2n-1}}
                                        (1-c_{2n-1})\cancel{c_{2n}}
                                    }
                                }^{N- n}\\
                    &= (1-c_1)^N
                        \prod_{n=1}^{N- 1}
                            \lrb{\frac{
                                        1-c_{2n+1}
                                    }{
                                        1-c_{2n-1}
                                    }
                                }^{N- n}
                        ~
                        (1-c_{2})^{N-1}
                        \prod_{n=2}^{N- 1}
                            \lrb{\frac{
                                        1-c_{2n}
                                    }{
                                        1-c_{2n-2}
                                    }
                                }^{N- n}\\
                    &=
                        \prod_{n=1}^N
                            \lrp{1-c_{2n-1}}
                        \prod_{n=1}^{N-1}
                            \lrp{1-c_{2n}}.
            \end{align*}
        \end{proof}

        \begin{proof}[Proof of \Thref{th:jacobi_beta_ensemble}]
            Considering the potential $V(x)= -[(a-1)\log(x) + (b-1)\log(1-x)]$, \Lemref{lem:prod_x,1-x_c} yields
            \begin{align}
                \exp[-\Tr V(\Jab)]
                    &= \lrb{\det J}^{a-1} \lrb{\det {I_N - J}}^{b-1}
                    \nonumber\\
                    &\oplim{=}{}{\eqref{eq:prod_x,1-x_c}}
                        \prod{n=1}{N} c_{2n-1}^{a-1}(1-c_{2n-1})^{b-1}
                        \prod{n=1}{N-1} (1-c_{2n})^{a+b-2}.
                    \label{eq:exp_-_Tr_V_Jab_c}
            \end{align}
            Starting from \eqref{eq:joint_distribution_xi_Tr-J}, \Proporef{propo:jacobian_c_xi} allows us to express the joint distribution of the canonical moments as
            \begin{align*}
                &\prod_{n=1}^{N-1}
                    \frac{\lrb{\xi_{2n-1}\xi_{2n}}^{\frac{\beta}{2}(N-n)}}
                         {\xi_{2n}}
                \e^{- \Tr V(\Jab)}
                    \lrabs{
                    \frac{\partial \xi_{1:2N-1}}
                        {\partial c_{1:2N-1}}
                    }
                    \diff c_{1:2N-1}\\
                &\oplim{=}{}{\eqref{eq:jacobian_c_xi}}
                    \prod_{n=1}^{N-1}
                    \frac{\lrb{\xi_{2n-1}\xi_{2n}}^{\frac{\beta}{2}(N-n)}}
                         {\xi_{2n}}
                    \prod_{n=1}^{2N-2} (1-c_{n})
                    \e^{- \Tr V(\Jab)}
                    \diff c_{1:2N-1}\\
                &\oplim{=}{}{\eqref{eq:def_xi_c}}
                    \frac{\lrb{c_{1}(1-c_{1})c_{2}}^{\frac{\beta}{2}(N-1)}}
                         {\cancel{(1-c_{1})}c_{2}}
                \prod_{n=2}^{N-1}
                    \frac{\lrb{(1-c_{2n-2})c_{2n-1}
                               (1-c_{2n-1})c_{2n}}^{\frac{\beta}{2}(N-n)}}
                         {\cancel{(1-c_{2n-1})}c_{2n}}\\
                &\qquad\qquad
                    \prod_{n=1}^{N-1} \cancel{(1-c_{2n-1})}(1-c_{2n})
                    \e^{- \Tr V(\Jab)}
                    \diff c_{1:2N-1}\\
                &=
                \prod_{n=1}^{N}
                    \lrb{c_{2n-1}(1-c_{2n-1})}^{\frac{\beta}{2}(N-n)}
                \prod_{n=1}^{N-1}
                    c_{2n}^{\frac{\beta}{2}(N-n)-1}
                % \prod_{n=1}^{N-1}
                    (1-c_{2n})^{\frac{\beta}{2}(N-n-1)+1}
                    \e^{- \Tr V(\Jab)}
                    \diff c_{1:2N-1}\\
                &\overset{\eqref{eq:exp_-_Tr_V_Jab_c}}{=}
                % \prod_{n=1}^{N}
                %     \lrb{c_{2n-1}(1-c_{2n-1})}^{\frac{\beta}{2}(N-n)}
                % \prod_{n=1}^{N-1}
                %     c_{2n}^{\frac{\beta}{2}(N-n)-1}
                % % \prod_{n=1}^{N-1}
                %     (1-c_{2n})^{\frac{\beta}{2}(N-n-1)+1}\\
                % &\qquad
                %     \qquad
                %     \prod{n=1}{N} c_{2n-1}^{a-1}(1-c_{2n-1})^{b-1}
                %         \prod{n=1}{N-1} (1-c_{2n})^{a+b-2}
                %     \diff c_{1:2N-1}\\
                % &=
                \prod{n=1}{N}
                    c_{2n-1}^{\frac{\beta}{2}(N-n)+a-1}
                    (1-c_{2n-1})^{\frac{\beta}{2}(N-n)+b-1}
                \prod{n=1}{N-1}
                    c_{2n}^{\frac{\beta}{2}(N-n)-1}
                    (1-c_{2n})^{\frac{\beta}{2}(N-n-1)+a+b-1}
                    \diff c_{1:2N-1}.
            \end{align*}
        \end{proof}

% section proving_the_three_classical_tridiagonal_models (end)

\section{Gibbs sampling tridiagonal models associated to polynomial potentials} % (fold)
\label{sec:gibbs_sampling}

    As seen in \Secref{sec:proving_the_three_classical_tridiagonal_models}, for the specific potentials associated to the Hermite, Laguerre, and Jacobi $\beta$-ensembles, the successive parametrizations of the corresponding Jacobi matrix $\Jab$ yield independent coefficients with easy-to-sample distributions.
    Thus, computing the eigenvalues of the corresponding randomized tridiagonal Jacobi matrices gives $\calO(N^2)$ exact samplers.
    However, when the potential $V$ is generic, these Jacobi parameters may not be independent anymore.
    For polynomial potentials, this dependence remains mild, in the sense that each parameter remains independent from the rest conditionally on a few ``neighboring" parameters.
    As we shall see in this section, simple Gibbs samplers in the space of these Jacobi parameters can provide surprisingly fast-mixing approximate samplers for $\beta$-ensembles.
    In short, we study a Gibbs sampler on tridiagonal matrices, which we can diagonalize in $\calO(N^2)$ to obtain approximate samples from a given $\beta$-ensemble.
    This approach is in contrast with that of \citet{LiMe13} and \citet{ChFe18}, who used MCMC directly on the original space where the particles $\lrcb{x_{1}, \dots, x_{N}}$ live.

    Our starting point is Proposition 2 of \citet{KrRiVi16}, which we rederived as \Thref{th:joint_distribution_recurrence_coefficients_Tr_J}.
    In short, a Jacobi matrix $\Jab$ with coefficients distributed as
    \begin{equation}
        (a_1, b_1, \dots, a_{N-1}, b_{N-1}, a_N)
            \sim
            \prod_{i=1}^{N-1}
            b_{i}^{\frac{\beta}{2}(N-i)-1}
            \exp^{-\Tr V(J_{a, b})}
                \diff a_{1:N}, b_{1:N-1},
    \label{eq:joint_jacobi_coefficients}
    \end{equation}
    has eigenvalues distributed according to the $\beta$-ensemble \eqref{eq:joint_distribution_beta_ensemble} with potential $V$.
    \citet{KrRiVi16} already mention their intuition that a Gibbs chain with invariant measure \eqref{eq:joint_jacobi_coefficients} and a polynomial potential would mix fast, in $\calO(\log N)$, due to the short range interaction between the coefficients.
    From an algorithmic point of view, the explicit conditionals in \eqref{eq:joint_jacobi_coefficients} similarly invite to use a Gibbs sampler, which we investigate in this section.
    % \todo{I do not understand why the sentence "For the sake of..." leads to (6.2) with a zero g5 coefficient.}
    For the sake of presentation, we fix the potential to be a polynomial with even degree at most 6 and positive leading coefficient, i.e.
    \begin{equation}
    \label{eq:sextic_potential}
        V(x)
            = g_6 x^6
            + \cancel{g_5 x^5}
            + g_4 x^4
            + g_3 x^3
            + g_2 x^2
            + g_1 x.
    \end{equation}
    The absence of a term of degree $5$ in \eqref{eq:sextic_potential} comes from practical reasons detailed in \Secref{sub:sampling_from_the_conditionals}.
    While the method applies more generally, we restrict ourselves to potentials of the form \eqref{eq:sextic_potential} because
    $(i)$ it already goes beyond the numerical state-of-the-art,
    $(ii)$ it is rich enough to require different sampling schemes for different conditionals depending on the coefficients in \eqref{eq:sextic_potential}, and
    $(iii)$ the theory of sextic potentials is advanced enough that we have means to empirically assess the convergence of our samplers.

    The associated implementation is available in our \DPPy~toolbox.
    We also provide a companion Python notebook where we illustrate our sampler on various potentials, see \href{https://github.com/guilgautier/DPPy/tree/master/notebooks}{https://github.com/guilgautier/DPPy/tree/master/notebooks}.

    \subsection{Sampling from the conditionals} % (fold)
    \label{sub:sampling_from_the_conditionals}

        We implement a systematic scan Gibbs sampler \citep[Chapter 10]{RoCa04} to approximately sample from the distribution \eqref{eq:joint_jacobi_coefficients} on the Jacobi coefficients.
        Writing the conditionals in closed form for the generic sextic potential \eqref{eq:sextic_potential} is cumbersome, but we do it for a specific instance in \Exref{ex:quartic} below.
        The expansion of $\Tr V(\Jab)$ in \eqref{eq:joint_jacobi_coefficients} reveals that the size of the Markov blanket of each coefficient grows with $\degree V$.
        Quoting \citet[Section 1]{KrRiVi16}, \emph{variables with indices that are $\degree V / 2$ apart are conditionally independent given the variables in between}.
        In other words, the Jacobi coefficients $a_1, \dots, a_N, b_1, \dots, b_{N-1}$ have a more short-range interaction than the corresponding particles $x_{1}, \dots, x_{N} $.
        Gibbs sampling can leverage that property.
        % and $a_i$ for instance, only appears in terms featuring $a_j$ and $b_k$ for $j\in XXX$ and $k\in XXX$ \todo{complete}.

        For $1\leq i\leq N$, let $\bfa_{\setminus i} = (a_1,\dots,a_{i-1},a_{i+1},\dots,a_N)$.
        Similarly, for $1\leq j\leq N-1$, let $\bfb_{\setminus j} = (b_1,\dots,b_{j-1},b_{j+1},\dots,b_{N-1})$.
        In practice, we define one complete Gibbs pass as sampling from
        % \begin{itemize}
            % \item
            $a_n \mid \bfa_{\setminus n}, \bfb$, and then
            % \item
            $b_n \mid \bfa, \bfb_{\setminus n}$,
        % \end{itemize}
        for each $n$ in turn.
        We avoid the term of degree $5$ in \eqref{eq:sextic_potential} to make sure that the conditionals
        $b_n \mid \bfa, \bfb_{\setminus n}$ are always log-concave, while the conditionals
        $a_n \mid \bfa_{\setminus n}, \bfb$ are log-concave if $g_2>0$ and $g_3=g_6=0$.
        Univariate log-concave densities are interesting from a sampling point of view, since they are usually amenable to efficient rejection sampling.

        In our case, for every log-concave conditional, the mode of the corresponding density can be derived analytically.
        We can thus use the tailored rejection sampler of \citet{Dev12}, with an expected $5$ rejection steps per draw; see \Exref{ex:quartic} for details.
        The overall algorithm is given in \Algoref{alg:gibbs_sampler}.

        When the conditionals $a_n \mid \bfa_{\setminus n}, \bfb$ are not log-concave, we switch from a Gibbs algorithm to a Metropolis-within-Gibbs algorithm, and replace exact sampling of the corresponding conditional by a draw from a Metropolis-Hastings kernel.
        More specifically, since the log of the conditional densities $a_n \mid \bfa_{\setminus n}, \bfb$ are polynomials, they are easy to differentiate, and we use their gradient in a Metropolis-adjusted Langevin kernel (MALA, see, e.g., \citealp[Section 7.8.5]{RoCa04}).
        \begin{algorithm}[tb]
           \caption{Gibbs sampler to sample from \eqref{eq:joint_jacobi_coefficients} with $\beta>0$ and $V$ as in \eqref{eq:sextic_potential}}
           \label{alg:gibbs_sampler}
            \begin{algorithmic}
               \STATE {\bfseries Input:}
               inverse temperature $\beta$, potential $V$, number of MCMC steps $T$
               % \REPEAT
               \STATE Initialize $a_1 = \cdots = a_N = b_1 = \cdots = b_{N-1} = 0$
                \FOR{$t=1$ {\bfseries to} $T$}
                    \FOR{$n=1$ {\bfseries to} $N$}
                        \STATE Sample $a_n \mid \bfa_{\setminus n}, \bfb$
                        \IF{$n < N$}
                            \STATE Sample $b_n \mid \bfa, \bfb_{\setminus n}$
                        \ENDIF
                    \ENDFOR
                \STATE $x_{1}^{t}, \dots, x_{N}^{t} = \operatorname{eigvals}(\Jab)$
                \ENDFOR
               % \UNTIL{$noChange$ is $true$}
            \end{algorithmic}
        \end{algorithm}

        % \vspace{3em}
        \clearpage
        \begin{example}[Quartic potential]
            \label{ex:quartic}
            Let $V(x) = g_4 x^4 + g_2 x^2$.
             %and rescale the samples with a factor $\sqrt{g_2}$ to check the convergence towards $\mu_{V_{g_4/g_2^2}}$.\\
            With the convention $a_0=a_{N+1}=b_0=b_{N}=0$, the conditionals write as follows.\\
            For all $n \in \{1,\dots, N\}$,
            \begin{gather}
            \begin{aligned}
                &a_n \mid \bfa_{\setminus n}, \bfb\\
                &\sim \exp
                        \lrb{
                            -\lrb{
                                g_4 a_n^4
                                + a_n^2
                                    \lrb{g_2 + 4g_4(b_{n-1}+b_{n}) }
                                + 4g_4 a_n(a_{n-1}b_{n-1}+a_{n+1}b_{n})
                            }
                        },
            \label{eq:a_conditional_quartic}
            \end{aligned}
            \end{gather}
            For all $n \in \{1,\dots, N-1\}$,
            \begin{gather}
            \begin{aligned}
                &b_n \mid \bfa, \bfb_{\setminus n}\\
                &\sim b_n^{\frac{\beta}{2}(N-i)-1}
                        \exp
                            \lrb{-2
                                 \lrb{g_4 b_n^2
                                    + b_n \lrb{g_2
                                                + 2g_4(a_{n}^2
                                                       +a_{n}a_{n+1}
                                                       +a_{n+1}^2
                                                       +b_{n-1}+b_{n+1})
                                              }
                                     }
                                }.
            \label{eq:b_conditional_quartic}
            \end{aligned}
            \end{gather}
            In this case, for $g_2, g_4 > 0$, the conditionals given in Equations~\ref{eq:a_conditional_quartic} and \ref{eq:b_conditional_quartic} are unnormalized and $\log$-concave, with easy-to-find modes.
            Thus, the rejection sampling technique of \citet{Dev12} applies, with an expected number of rejections equal to $5$.
            Given an unnormalized and $\log$-concave target density $\pi$ with mode $m = \argmax_y \pi(y)$, \citet{Dev12} constructs a piecewise dominating function $h$ comprising 3 plateaus and 2 exponential tails such that $\int h / \int \pi \leq 5$.
            The breakpoints $m+2u, m+u, m+v, m+2v$ are located on both sides of the mode, where $u < 0 < v$ satisfy $\pi(m+x) \geq \pi(m) / 4 \geq \pi(m+2x)$.
            Such $u$ and $v$ can be found using a simple bisection method.
            In practice, we compute $u' < 0 < v'$ solutions of $\pi(m + x) = \pi(m) / 4$ and assign $u=u'/2$ and $v=v'/2$, see \Figref{fig:illustration_of_Devroye_rejection_sampler}.

                    % For this reason, we set $f(x) \propto \pi(x+m)$ to center the distribution around its mode
                    %
                    % \begin{equation}
                    %   h(x) =
                    %   \begin{cases}
                    %       \exp^{\left( \frac{2b-x}{b} \log f(b)
                    %                   + \frac{x-b}{b} \log f(2b) \right)},
                    %                   &\text{ if } x\leq 2b\\
                    %       f(b)  &\text{ if } x\in [2b, b]\\
                    %       f(0)  &\text{ if } x\in [b, a]\\
                    %       f(b)  &\text{ if } x\in [a, 2a]\\
                    %       \exp^{\left( \frac{2a-x}{a} \log f(a)
                    %                           + \frac{x-a}{a} \log f(2a) \right)},
                    %                   &\text{ if } x\geq 2a
                    %   \end{cases}
                    % \end{equation}
                    %
                    % $h$ has 2 plateaux and exponential tails; sampling is easy \Figref{fig:conditionals_tmp}

            \begin{figure}[!ht]
                \centering
                \begin{subfigure}[b]{0.49\textwidth}
                    \centering
                    \includegraphics[width=\textwidth]{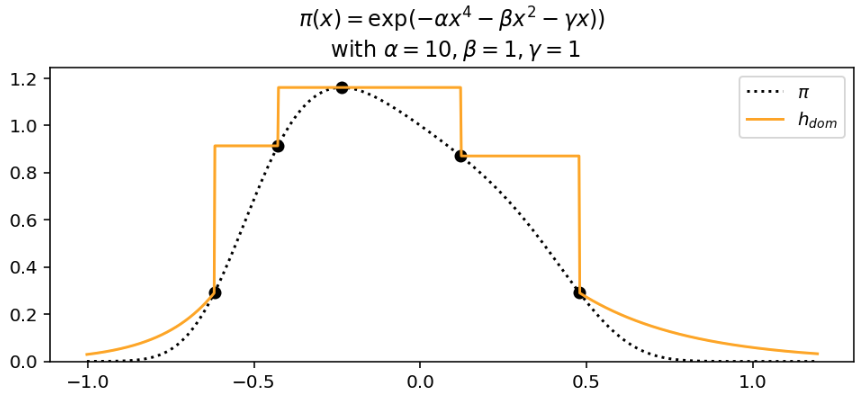}
                    \caption{An example conditional \eqref{eq:a_conditional_quartic}}
                    \label{fig:illustration_Dev12_sampler_quartic}
                \end{subfigure}
                \hfill
                \begin{subfigure}[b]{0.49\textwidth}
                    \centering
                    \includegraphics[width=\textwidth]{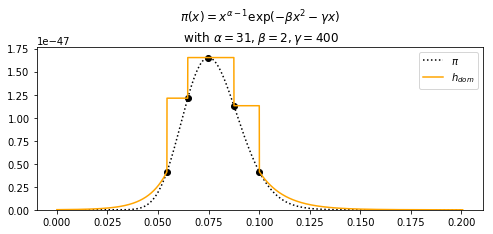}
                    \caption{An example conditional \eqref{eq:b_conditional_quartic}}
                    \label{fig:illustration_Dev12_sampler_gen_gamma}
                \end{subfigure}
                \caption{
                Construction of the dominating function $h$ (solid line) of \citet{Dev12} to perform rejection sampling with a log-concave target $\pi$ (dashed line), the mode of which needs to be analytically tractable.
                % The envelope $h$ is made of three plateaus and two exponential tails.
                }
                \label{fig:illustration_of_Devroye_rejection_sampler}
            \vspace{-2em}
            \end{figure}
        \end{example}

    % subsection sampling_from_the_conditionals (end)

    % \clearpage

    \subsection{Example simulations and empirical study of the convergence} % (fold)
    \label{sub:empirical_study_of_the_mixing_time}

        In this section, we investigate the convergence of the Gibbs sampler detailed in \Secref{sub:sampling_from_the_conditionals}.
        We sample from $\beta$-ensembles with potential $W(x) = \frac{\beta N}{2} V(x)$, for various choices of $V$ of the form \eqref{eq:sextic_potential}.
        The rescaling in $W$ is applied to capture the weak convergence of the empirical distribution of the particles towards the corresponding equilibrium measure $\mu_{\text{eq}}$; see e.g., \citet[Section 6.1]{Dei00}.
        Intuitively, the rescaling balances the effect of the Vandermonde determinant and that of the potential $V$ in \eqref{eq:joint_distribution_beta_ensemble}.

        \subsubsection{Convergence of the marginals} % (fold)
        \label{ssub:marginal_mixing}

            Let $(x_n^t)_{1\leq n\leq N}$ be the vector of ordered particles after $t$ full Gibbs passes, that is, after $t$ outer iterations of \Algoref{alg:gibbs_sampler}.
            A first quantity to monitor is how well the empirical distribution
            $\hat{\mu}_N^t = N^{-1}\sum_{n=1}^{N} \delta_{x_n^t}$ approximates, as $t$ grows, the empirical distribution of the target $\beta$-ensemble
            $$
            \hat{\mu}_N = \frac1N \sum\limits_{n=1}^{N} \delta_{x_n}, \quad \text{where }\{x_1,\dots,x_N\} \text{ is drawn from \eqref{eq:joint_distribution_beta_ensemble}.}
            $$
            It turns out that, under assumptions on the potential $V$ that are satisfied by \eqref{eq:sextic_potential}, the random measure $\hat\mu_N$ is itself well approximated when $N\gg 1$ by a (deterministic) measure $\mueq$ called the \emph{equilibrium measure} of the potential.
            This statement can be made rigorous; see for instance the large deviation principle with fast rate $1/N^2$ in \cite[Theorem 2.3]{Ser15}.

            Two observations are in order.
            First, the fast rate hints that the approximation should hold even for moderate values of $N$, as we shall confirm later on in our simulations.
            Second, $\mueq$ is known analytically for a few choices of polynomial potentials.
            Thus, for these potentials, we compare draws from $\hat{\mu}_N^t = N^{-1}\sum_{n=1}^{N} \delta_{x_n^t}$ with $\mu_{\text{eq}}$, to assess convergence of our marginals $\hat\mu^t_N$ as $t$ grows.
            This is in line with the experiments of \citet{LiMe13}, \citet{OlNaTr15} and \citet{ChFe18}.

            \paragraph{The quartic potential.} % (fold)
            \label{par:the_quartic_potential}

                The equilibrium measure $\mu_{\text{eq}}$ is available in closed form for potentials proportional to $x^{2d}$ \citep[Proposition 6.156]{Dei00}.
                We consider again $V(x) = \frac14 x^4$, as in \Exref{ex:quartic}.
                In this case all conditionals are log-concave, and can thus be sampled exactly, cf.\,\Secref{sub:sampling_from_the_conditionals}.
                \Figref{fig:quartic_marginal_convergence} shows the agregation of the marginal histograms of $1000$ independent runs, after each of the first few Gibbs passes.

                Observe that convergence to the equilibrium measure is extremely fast: beyond $t=3$ Gibbs passes, the histograms are visually indistinguishable from the equilibrium measure.
                This observation is quantitatively monitored in \Figref{fig:quartic_marginal_convergence}, where we plot the logarithm of the $L_\infty$ distance between the empirical cdf of $\hat\mu_N^t$ and the cdf of $\mueq$.

                \begin{figure}[!ht]
                    \centering
                    \begin{subfigure}[b]{.45\textwidth}
                        \centering
                        \includegraphics[width=\textwidth]{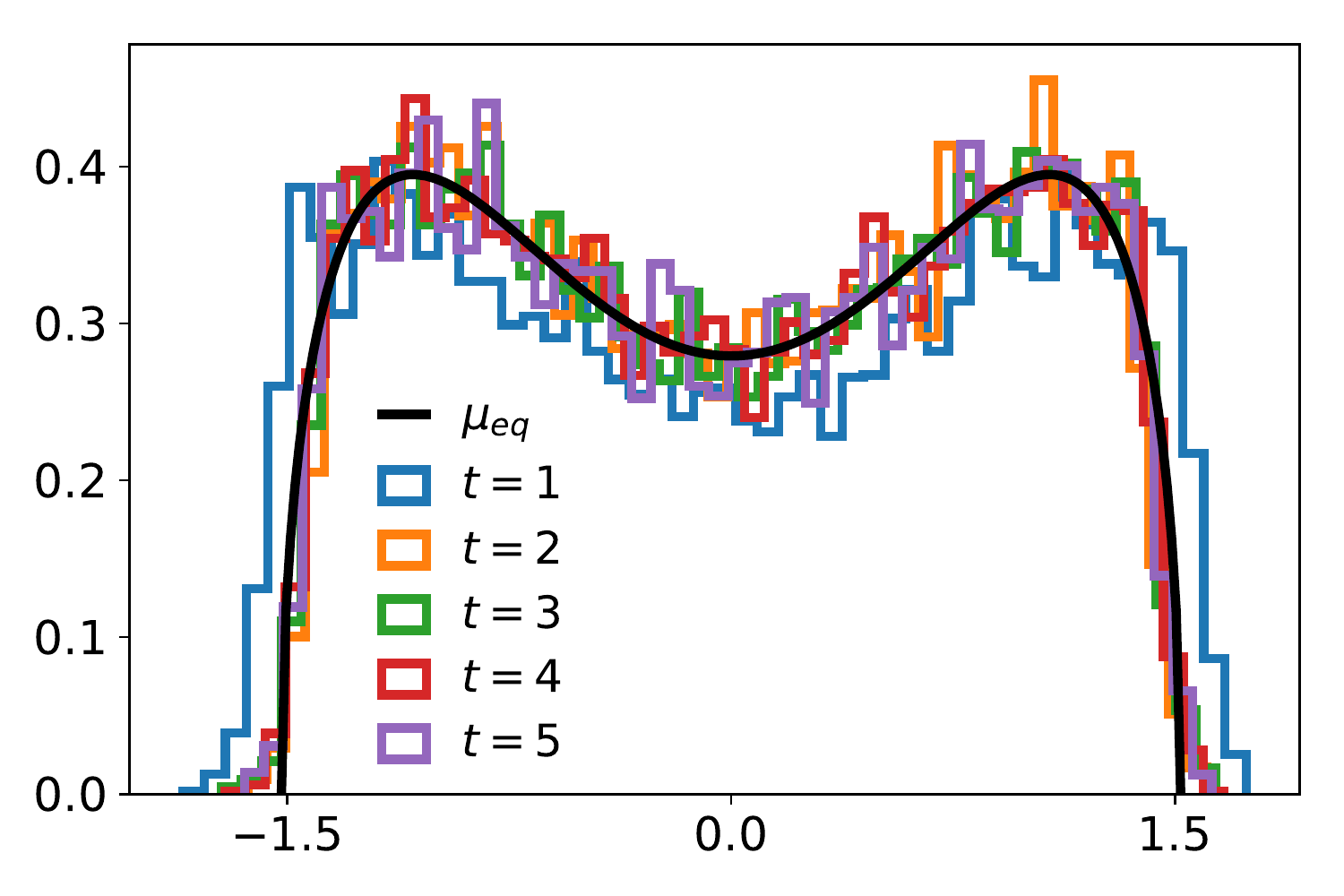}
                        \caption{$N=10$}
                        \label{fig:quartic_marginal_N=10}
                    \end{subfigure}
                    \hfill
                    \begin{subfigure}[b]{0.45\textwidth}
                        \centering
                        \includegraphics[width=\textwidth]{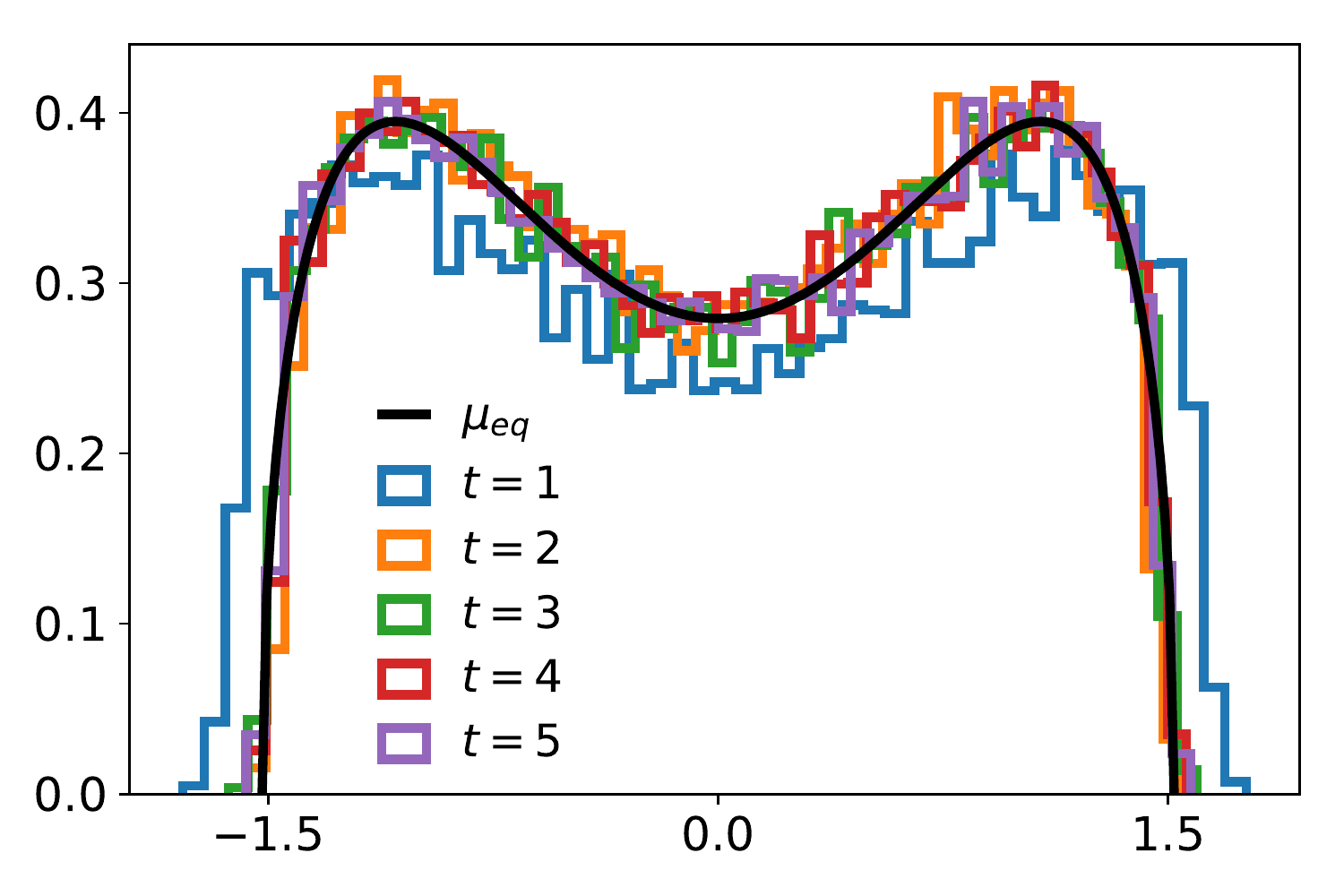}
                        \caption{$N=20$}
                        \label{fig:quartic_marginal_N=20}
                    \end{subfigure}
                    \\
                    \begin{subfigure}[b]{0.45\textwidth}
                        \centering
                        \includegraphics[width=\textwidth]{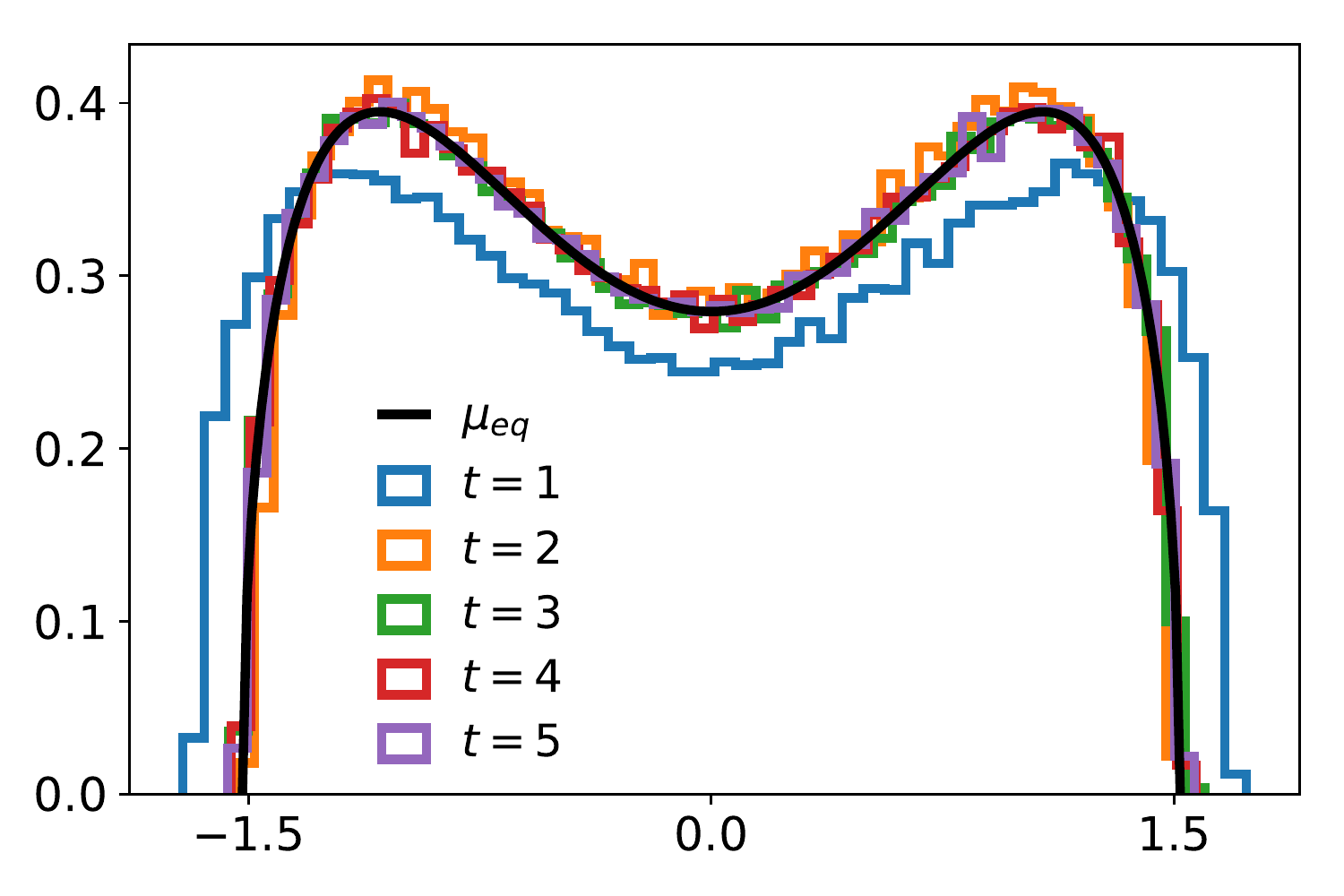}
                        \caption{$N=50$}
                        \label{fig:quartic_marginal_N=50}
                    \end{subfigure}
                    \hfill
                    \begin{subfigure}[b]{0.45\textwidth}
                        \centering
                        \includegraphics[width=\textwidth]{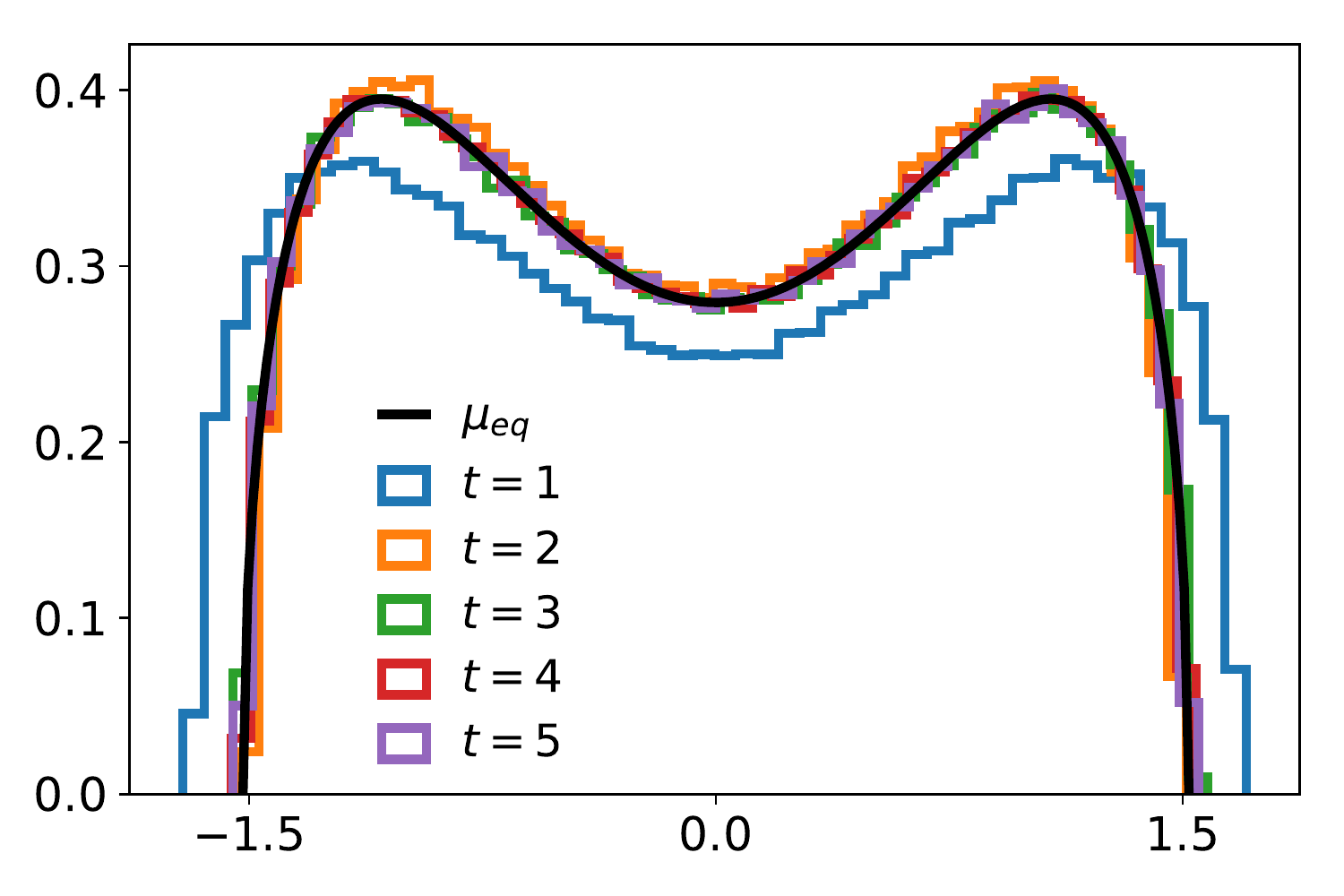}
                        \caption{$N=100$}
                        \label{fig:quartic_marginal_N=100}
                    \end{subfigure}
                    \\
                    \begin{subfigure}[b]{0.45\textwidth}
                        \centering
                        \includegraphics[width=\textwidth]{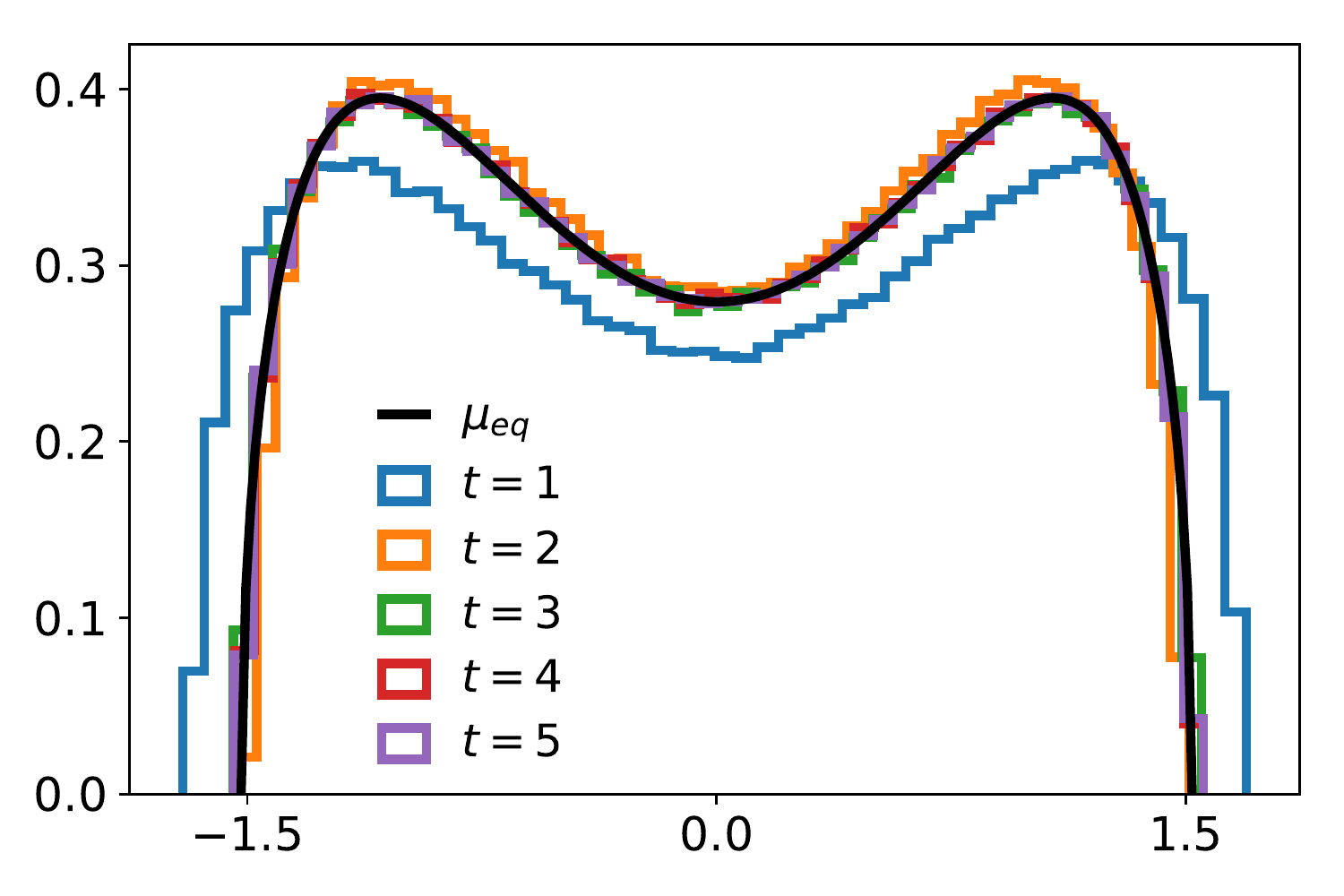}
                        \caption{$N=150$}
                        \label{fig:quartic_marginal_N=150}
                    \end{subfigure}
                    \hfill
                    \begin{subfigure}[b]{0.45\textwidth}
                        \centering
                        \includegraphics[width=\textwidth]{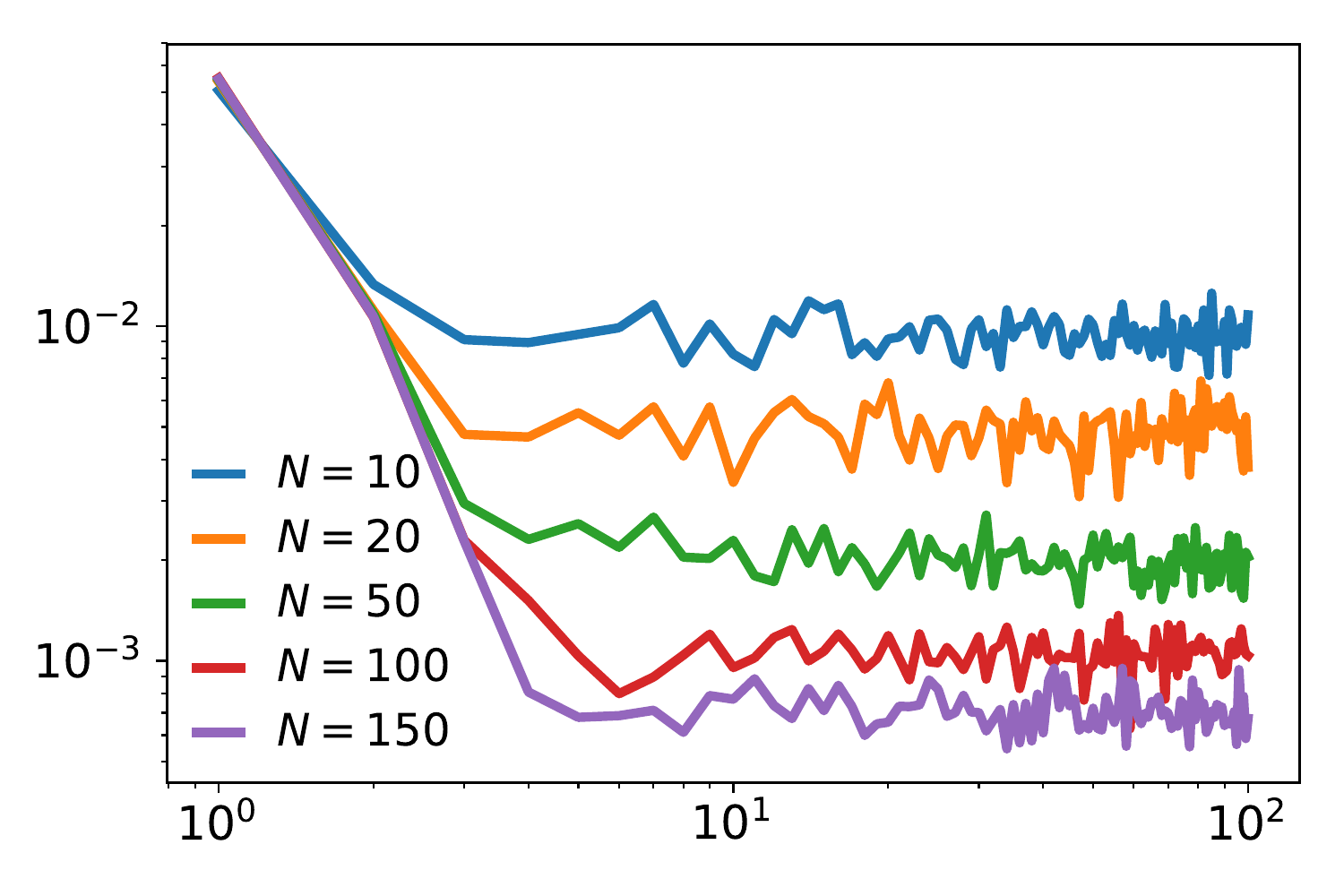}
                        \caption{$N=150$}
                        \label{fig:quartic_marginal_cdf_KS}
                    \end{subfigure}
                    \caption{
                    For $\beta=2$ and $V(x)=\frac14 x^4$, panels~\subref{fig:quartic_marginal_N=10}-\subref{fig:quartic_marginal_N=150} give a visual display of the convergence of the empirical marginal distribution $\mu_N^t$ of the eigenvalues constructed from $1000$ independent chains.
                    Each colored line corresponds to a Gibbs pass $t\in\{1,2,3,...\}$, while the equilibrium pdf is shown as a black line on each panel.
                    Different panels correspond to increasing values of $N$.
                    Panel~\subref{fig:quartic_marginal_cdf_KS} shows the supremum norm of the difference between the cdf of $\mueq$ and the cdf of $\hat\mu_N^t$ as a function of the number $t$ of Gibbs passes.}
                    \label{fig:quartic_marginal_convergence}
                    % \setfloatalignment{b}  %b/t
                    % \forceversofloat  % \forcerectofloat
                \end{figure}

            % paragraph the_quartic_potential (end)

            % \clearpage

            \paragraph{Other potentials of degree $4$.} % (fold)
            \label{par:other_potentials_of_degree_4}

                We also consider the potential $V (x) = \frac{1}{20} x^4 - \frac{4}{15}x^3 + \frac{1}{5}x^2 +\frac{8}{5}x$ and potentials of the form $V(x)=g_2 x^2 + \frac14 x^4$ where we vary $g_2$.
                Except the case where $g_2 \geq 0$, the conditionals $a_n \mid \bfa_{\setminus n}, \bfb$ are not log concave and we sample from them using a few steps of MALA.
                This allows us to select various qualitative behaviors of $\mueq$, which may become dissymmetric \citep[Example 1.2; Section 3.2]{ClKrIt09,OlNaTr15}, or supported by more than one connected component \citep[Figure 4]{Mol18}.
                Our approach allows to simulate from the corresponding $\beta$-ensembles in regimes yet unexplored.
                \Figref{fig:other_quartic_marginal} shows good agreement of marginal histograms of a single sample of $N=1000$ points   with the equilibrium distribution after only $t=10$ Gibbs passes.

                \begin{figure}[!ht]
                    \centering
                    \begin{subfigure}[b]{0.48\textwidth}
                        \centering
                        \includegraphics[width=\textwidth]{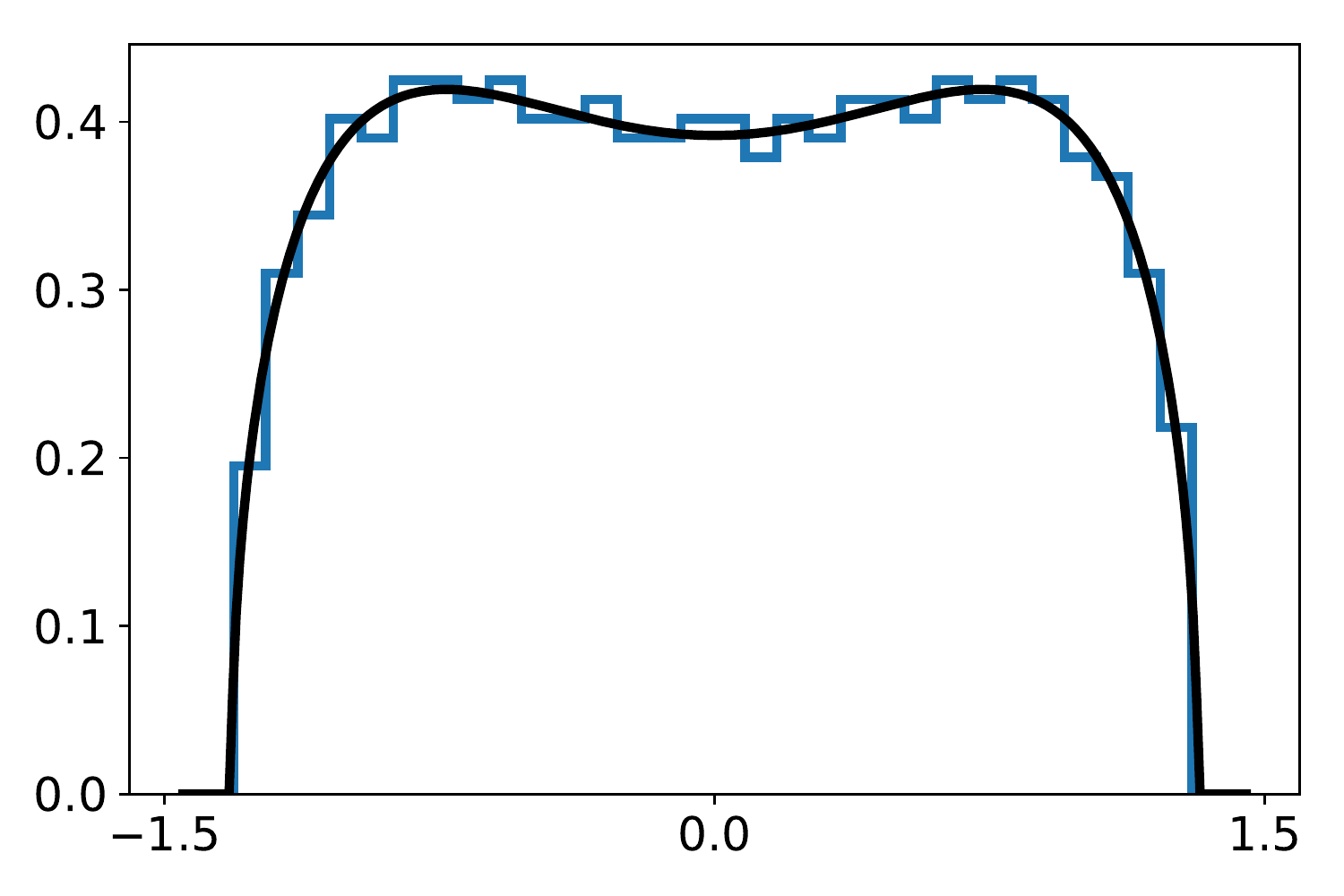}
                        \caption{$V(x)=\frac{1}{4} x^4 + \frac{1}{2} x^2$}
                        \label{fig:quartic_x4_x2}
                    \end{subfigure}
                    \hfill
                    \begin{subfigure}[b]{0.48\textwidth}
                        \centering
                        \includegraphics[width=\textwidth]{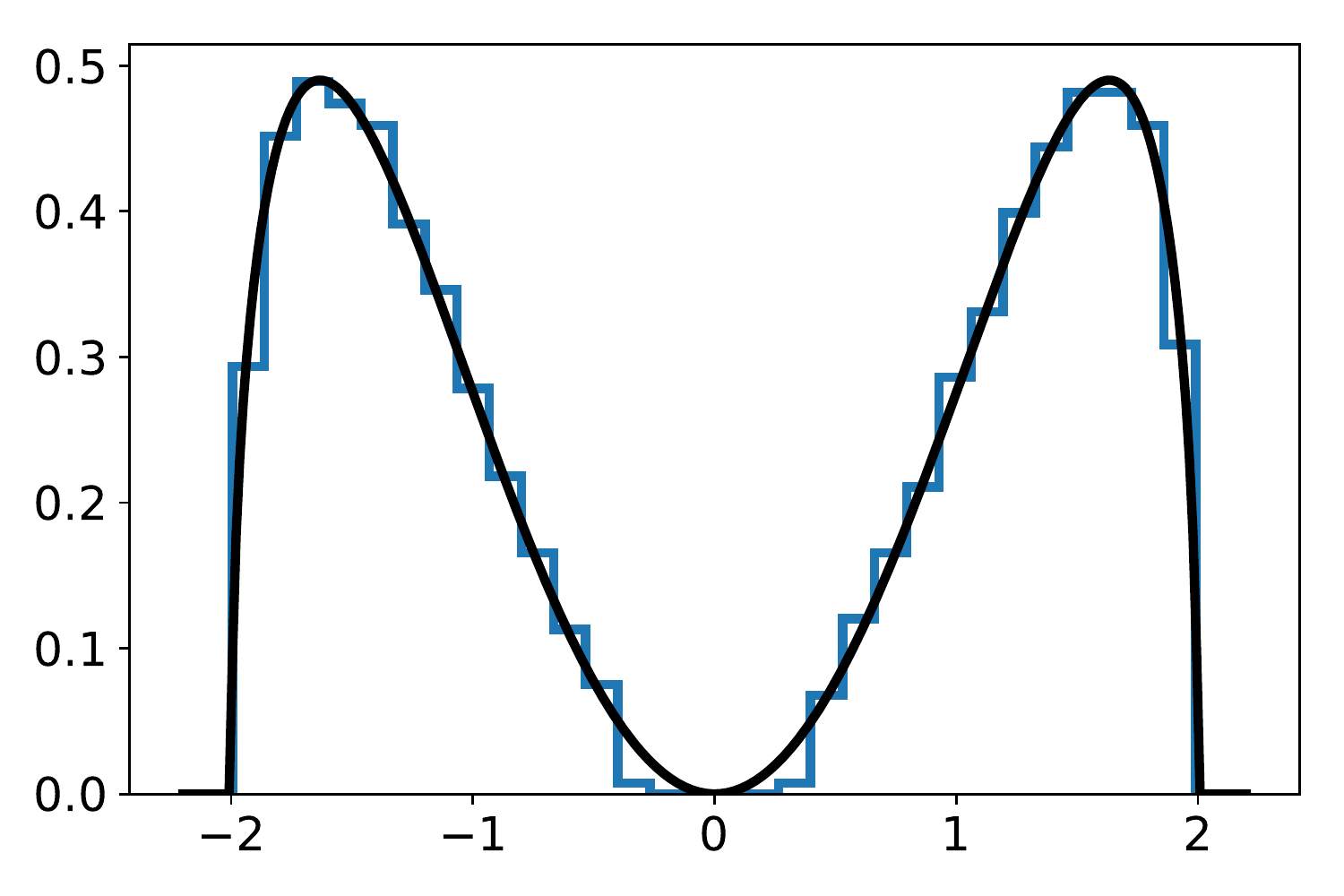}
                        \caption{$V(x)=\frac{1}{4} x^4 - x^2$}
                        \label{fig:quartic_x4_x2_onset_2cut}
                    \end{subfigure}
                    \\
                    \begin{subfigure}[b]{0.48\textwidth}
                        \centering
                        \includegraphics[width=\textwidth]{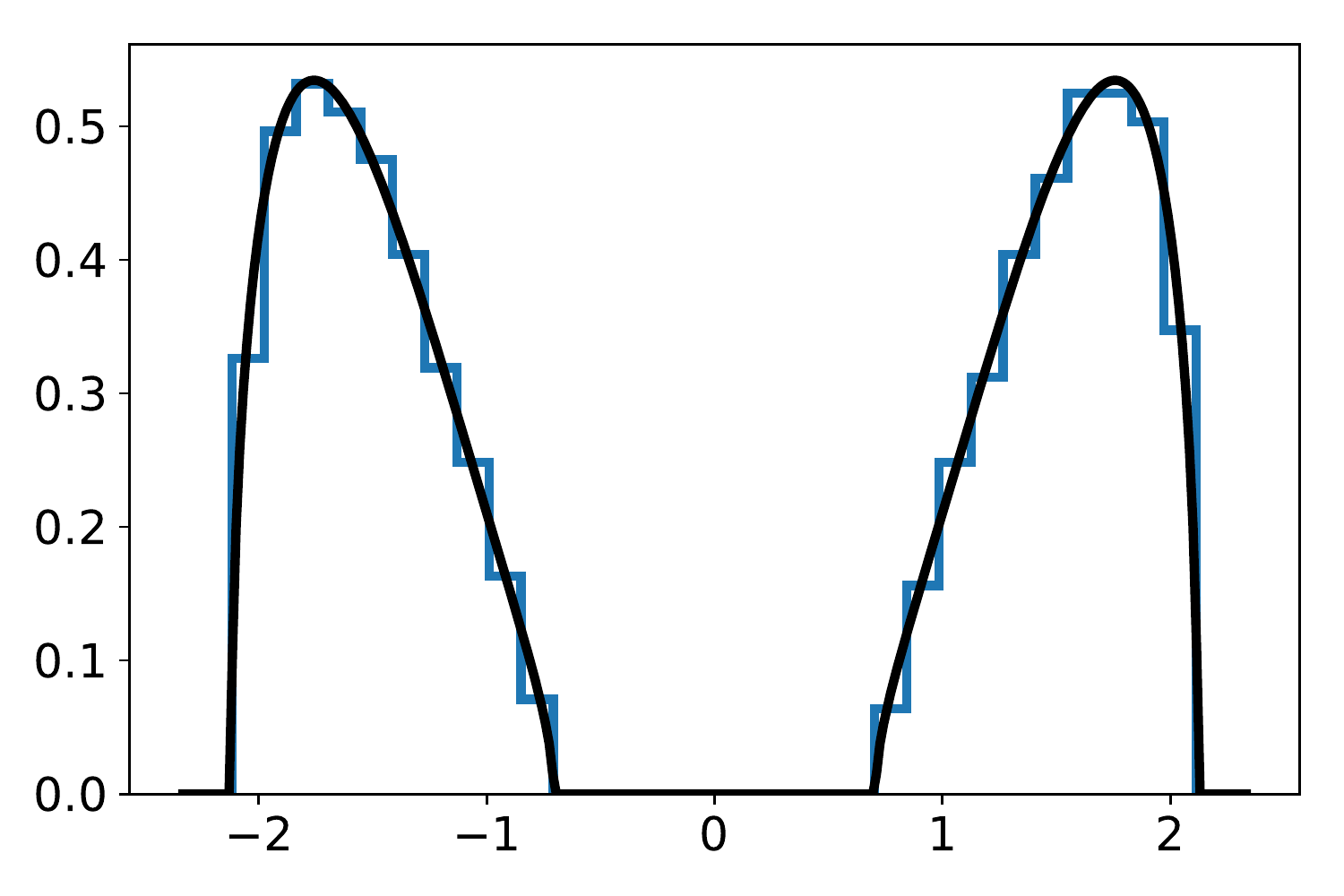}
                        \caption{$V(x)=\frac{1}{4} x^4 -\frac{5}{4} x^2$}
                        \label{fig:quartic_x4_x2_2cut}
                    \end{subfigure}
                    \hfill
                    \begin{subfigure}[b]{0.48\textwidth}
                        \centering
                        \includegraphics[width=\textwidth]{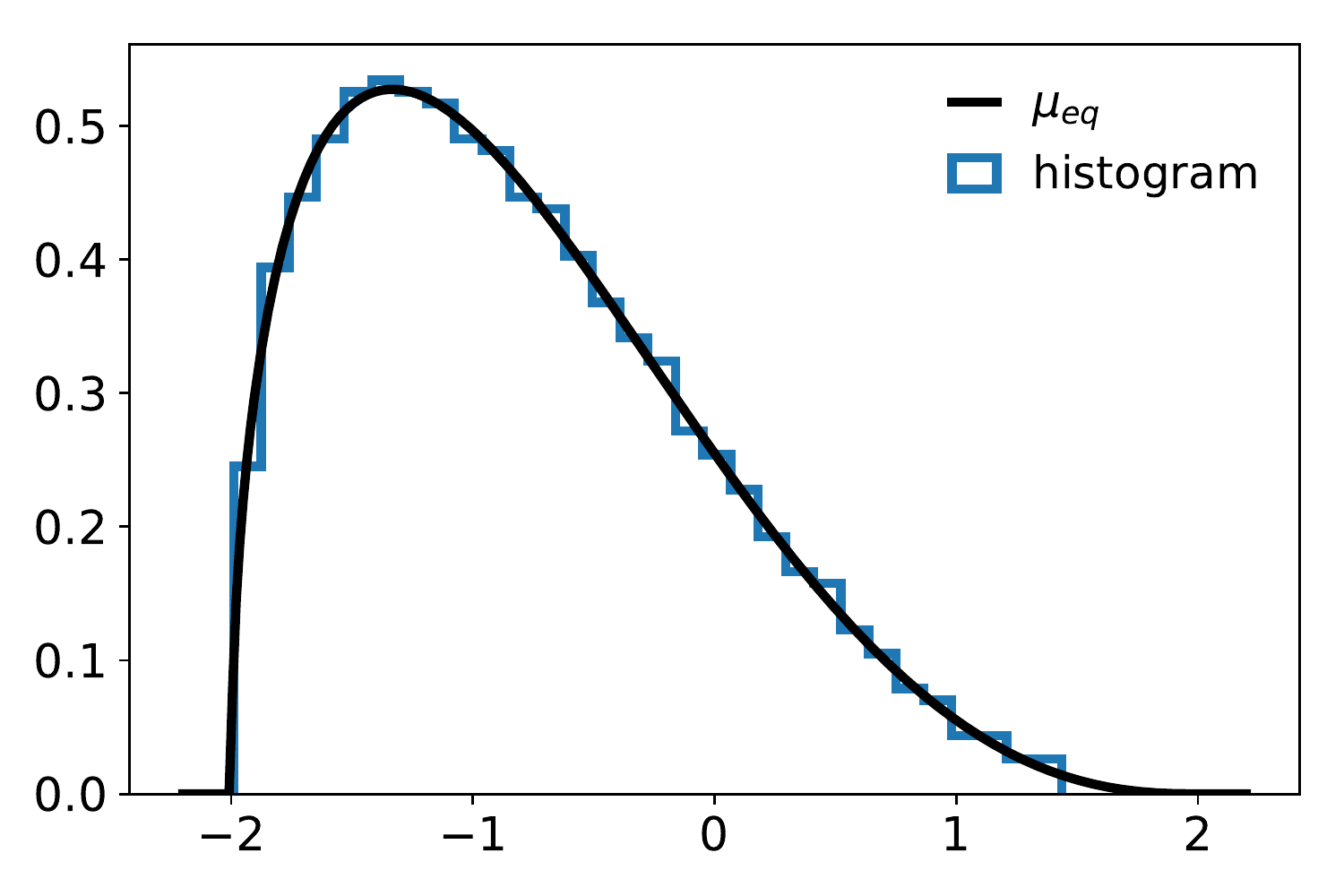}
                        \caption{$V (x) = \frac{1}{20} x^4 - \frac{4}{15}x^3 + \frac{1}{5}x^2 +\frac{8}{5}x$}
                        \label{fig:quartic_x4_x3_x2_x}
                    \end{subfigure}
                    \caption{
                    For various choices of potentials $V$ of degree $4$ and $\beta=2$, each panel shows the histogram of a sample from $\mu_N^t$, with $N=1000$ points after $t=10$ Gibbs passes. The corresponding equilibrium measures are superimposed in black.}
                    \label{fig:other_quartic_marginal}
                    % \setfloatalignment{b}  %b/t
                    % \forceversofloat  % \forcerectofloat
                \end{figure}

            % paragraph other_potentials_of_degree_4 (end)

            % \clearpage
            \vspace{3em}
            \paragraph{The sextic potential.} % (fold)
            \label{par:the_sextic_potential}

                We extend the derivations of \Exref{ex:quartic} and consider sampling from the $\beta$-ensemble with potential $V(x)=\frac{1}{6} x^6$.
                The corresponding equilibrium distribution can be derived from \citet[Proposition 6.156]{Dei00}.
                In this case, the conditionals $a_n \mid \bfa_{\setminus n}, \bfb$ are not log-concave and we cannot use the exact rejection sampler of \citet{Dev12}.
                Instead, we switch to a Metropolis-within-Gibbs sampler and make a few steps of MALA; see \Secref{sub:sampling_from_the_conditionals}.

                One free parameter is the number of MALA steps in one Gibbs pass.
                We empirically observed (not shown) that this number has an influence on the number of Gibbs passes needed to reach the plateau in \Figref{fig:sextic_marginal_cdf_KS}. Manually setting the number of MALA steps per Gibbs pass to 100 was enough for rapid overall convergence in our experiments, and larger values did not significantly influence the fit in \Figref{fig:sextic_marginal_convergence}, which is already striking after less than 10 Gibbs passes.

                \begin{figure}[ht]
                    \centering
                    \begin{subfigure}[b]{0.49\textwidth}
                        \centering
                        \includegraphics[width=\textwidth]{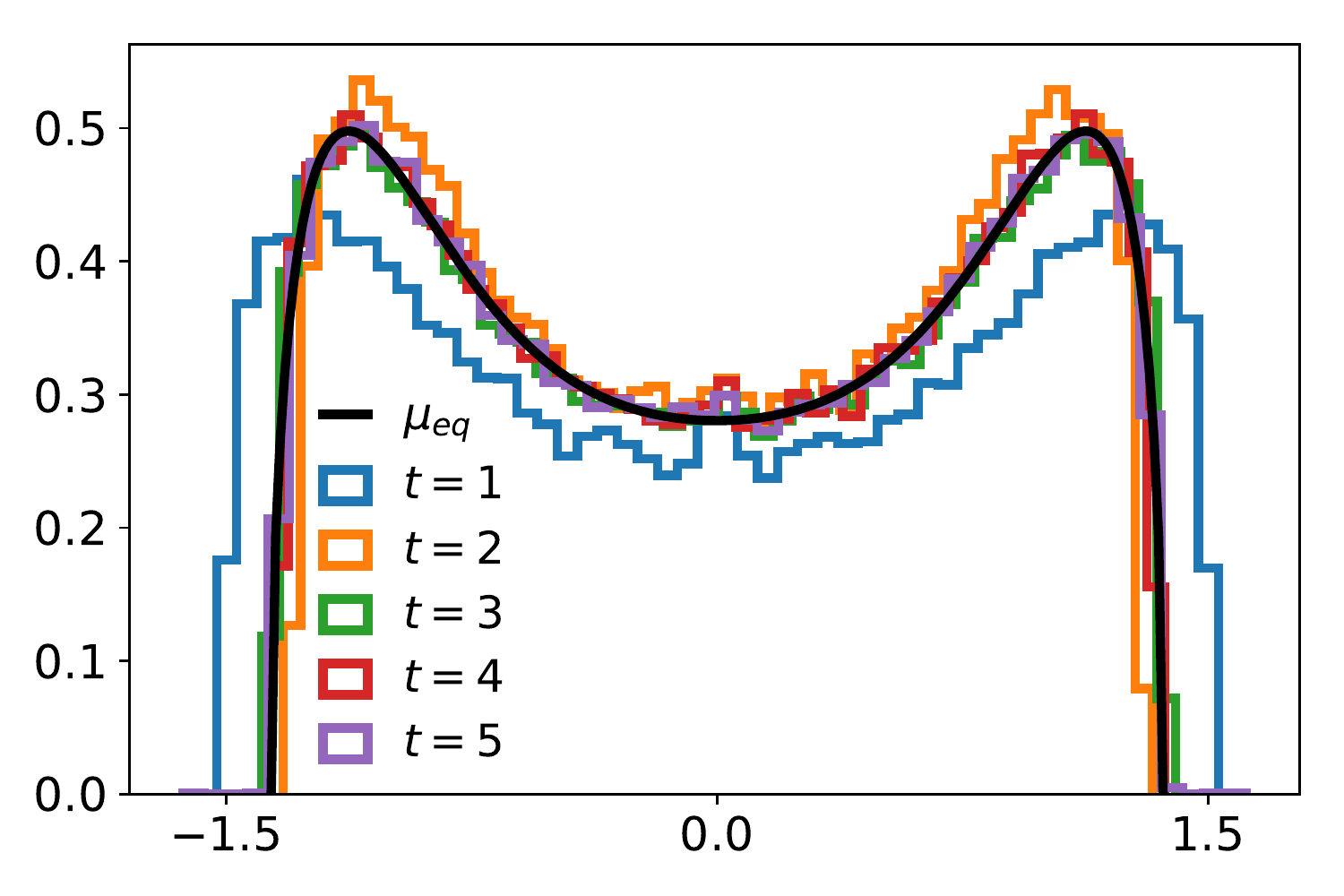}
                        \caption{$N=50$}
                        \label{fig:sextic_marginal_N=50}
                    \end{subfigure}
                    \hfill
                    \begin{subfigure}[b]{0.49\textwidth}
                        \centering
                        \includegraphics[width=\textwidth]{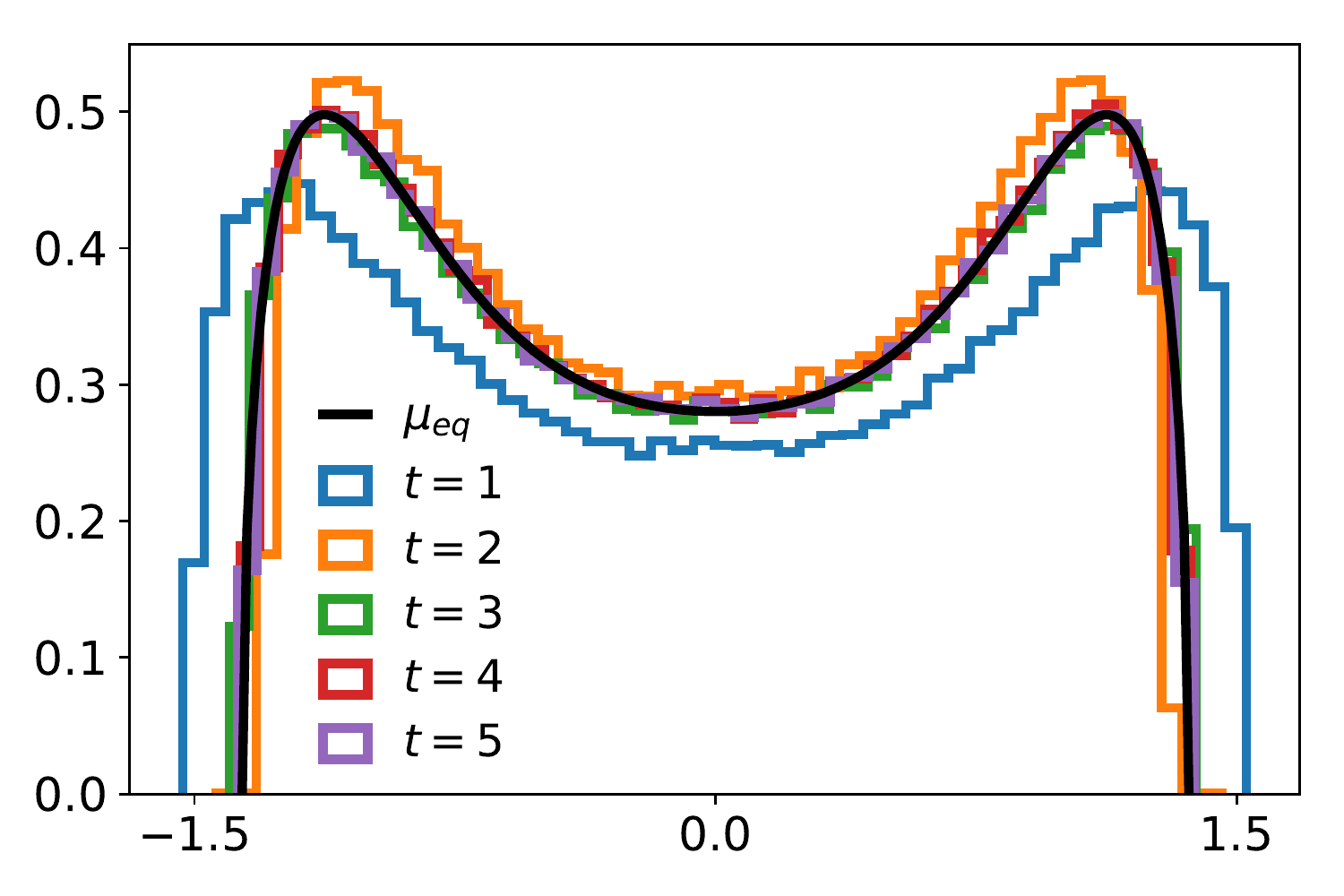}
                        \caption{$N=100$}
                        \label{fig:sextic_marginal_N=100}
                    \end{subfigure}\\
                    \begin{subfigure}[b]{0.49\textwidth}
                        \centering
                        \includegraphics[width=\textwidth]{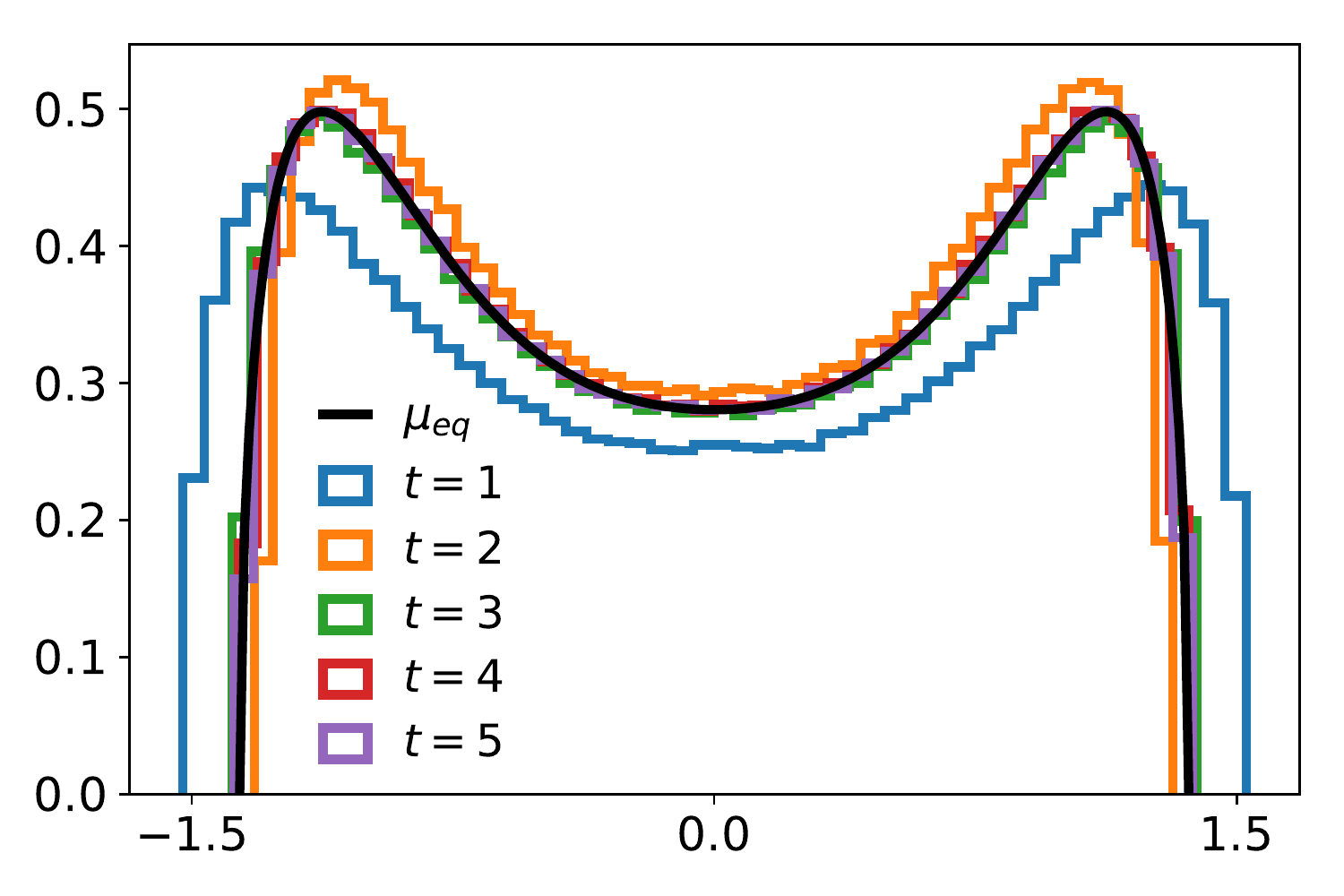}
                        \caption{$N=200$}
                        \label{fig:sextic_marginal_N=200}
                    \end{subfigure}
                    \hfill
                    \begin{subfigure}[b]{0.49\textwidth}
                        \centering
                        \includegraphics[width=\textwidth]{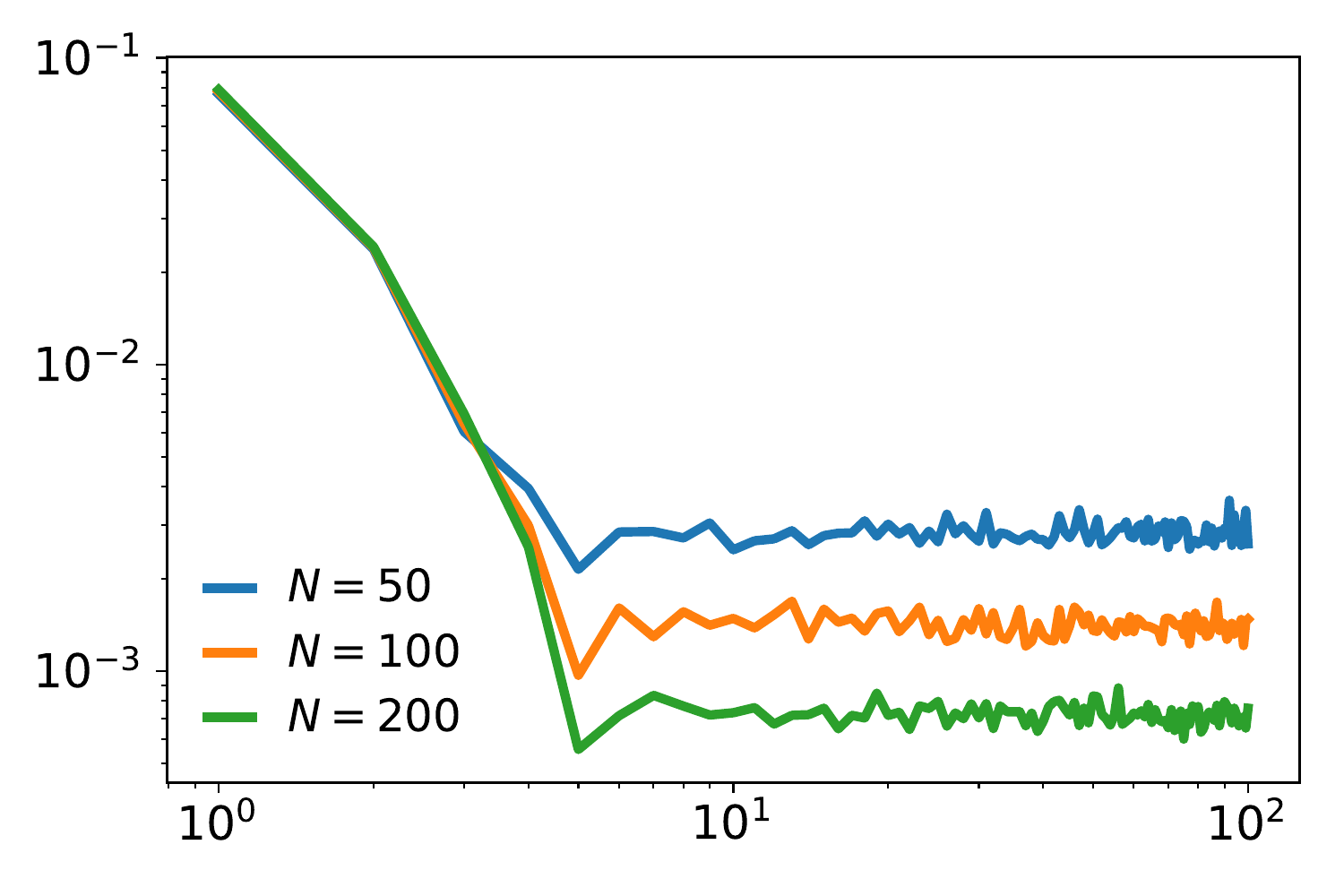}
                        \caption{Convergence of the empirical cdf}
                        \label{fig:sextic_marginal_cdf_KS}
                    \end{subfigure}
                    \caption{
                    For $\beta=2$ and $V(x)=\frac16 x^6$, panels~\subref{fig:sextic_marginal_N=50}-\subref{fig:sextic_marginal_N=200} give a visual display of the convergence of the empirical marginal distribution $\mu_N^t$ of the eigenvalues constructed from $1000$ independent chains.
                    Each colored line corresponds to a Gibbs pass $t\in\{1,2,3,...\}$, while the equilibrium pdf is shown as a black line on each panel.
                    Different panels correspond to increasing values of $N$.
                    Panel~\subref{fig:sextic_marginal_cdf_KS} shows the supremum norm of the difference between the cdf of $\mueq$ and the cdf of $\hat\mu_N^t$ as a function of the number $t$ of Gibbs passes.
                    }
                    \label{fig:sextic_marginal_convergence}
                    % \setfloatalignment{b}  %b/t
                    % \forceversofloat  % \forcerectofloat
                \end{figure}

            % paragraph the_sectic_potential (end)

        % subsubsection marginal_mixing (end)

        \subsubsection{Fluctuations of the largest eigenvalue} % (fold)
        \label{ssub:fluctuations_of_the_largest_eigenvalue}

            After looking at the global behavior of the eigenvalues, we zoom at the right edge of the support of $\mu$ to study the local behavior of our approximate samples.
            We do this for the quartic and sextic potentials.
            When $\beta=2$, the target $\beta$-ensemble is determinantal and we can test the adequation of the largest atom of $\hat\mu_N^t$ to the universal Tracy-Widom limiting distribution \citep[Corollary 1.3]{DeGi05}.
            We implemented the cumulative distribution function (cdf) of the Tracy-Widom law following the work of \citet{Bor09}, and rescale the eigenvalues as \citet[Section 3.2]{OlTr14}.

            For each potential, we run $1000$ independent chains and record only the largest eigenvalue of each chain after each Gibbs pass.
            \Figref{fig:quartic_lambda_max_visual} and \Figref{fig:sextic_lambda_max_visual} show the histograms of the rescaled largest particles after a few Gibbs passes, respectively for the quartic and sextic potential.
            More quantitatively, in \Figref{fig:quartic_sextic_lambda_max} we monitor the convergence to the Tracy-Widom distribution across Gibbs passes, by computing the supremum distance between the empirical cdf of the largest eigenvalue and the cdf of the Tracy-Widom law.

            For the quartic potential, we observe that the adequation with the cdf of the Tracy-Widom law gets tighter as $N$ grows, and again only a few passes of the Gibbs sampler are sufficient to reach a plateau.

            In contrast, for the sextic ensemble there seems to be an impassable gap, as if the rescaling was not adequate or the Tracy-Widom law was not the proper limiting distribution.
            This is despite the square root singularity at the right edge of the equilibrium distribution \citep[Section 6.1]{Dei00}.
            In particular, a simple Kolmogorov-Smirnov test at level $0.05$ would reject the adequation to Tracy-Widom.
            Part of this effect might be due to the fact that the conditionals are not sampled exactly in the sextic case, though.

            \begin{figure}[ht]
                \centering
                %      \begin{subfigure}[b]{.49\textwidth}
                %          \centering
                % \includegraphics[width=\textwidth]{}
                %          \caption{$N=10$}
                % \label{fig:quartic_lambda_max_N=10}
                %      \end{subfigure}
                %      \hfill
                \begin{subfigure}[b]{0.49\textwidth}
                    \centering
                    \includegraphics[width=\textwidth]{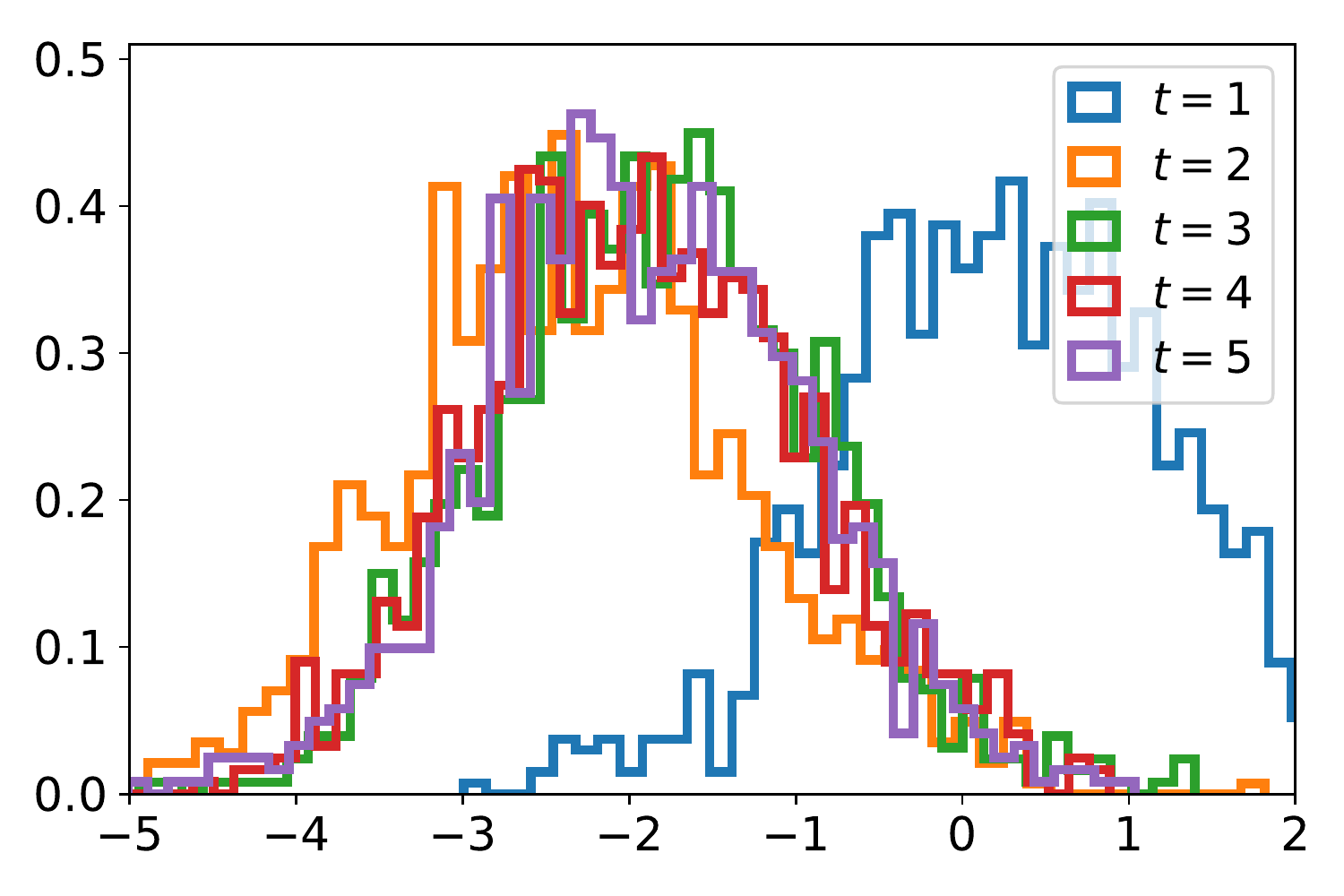}
                    \caption{$N=20$}
                    \label{fig:quartic_lambda_max_N=20}
                \end{subfigure}
                \begin{subfigure}[b]{0.49\textwidth}
                    \centering
                    \includegraphics[width=\textwidth]{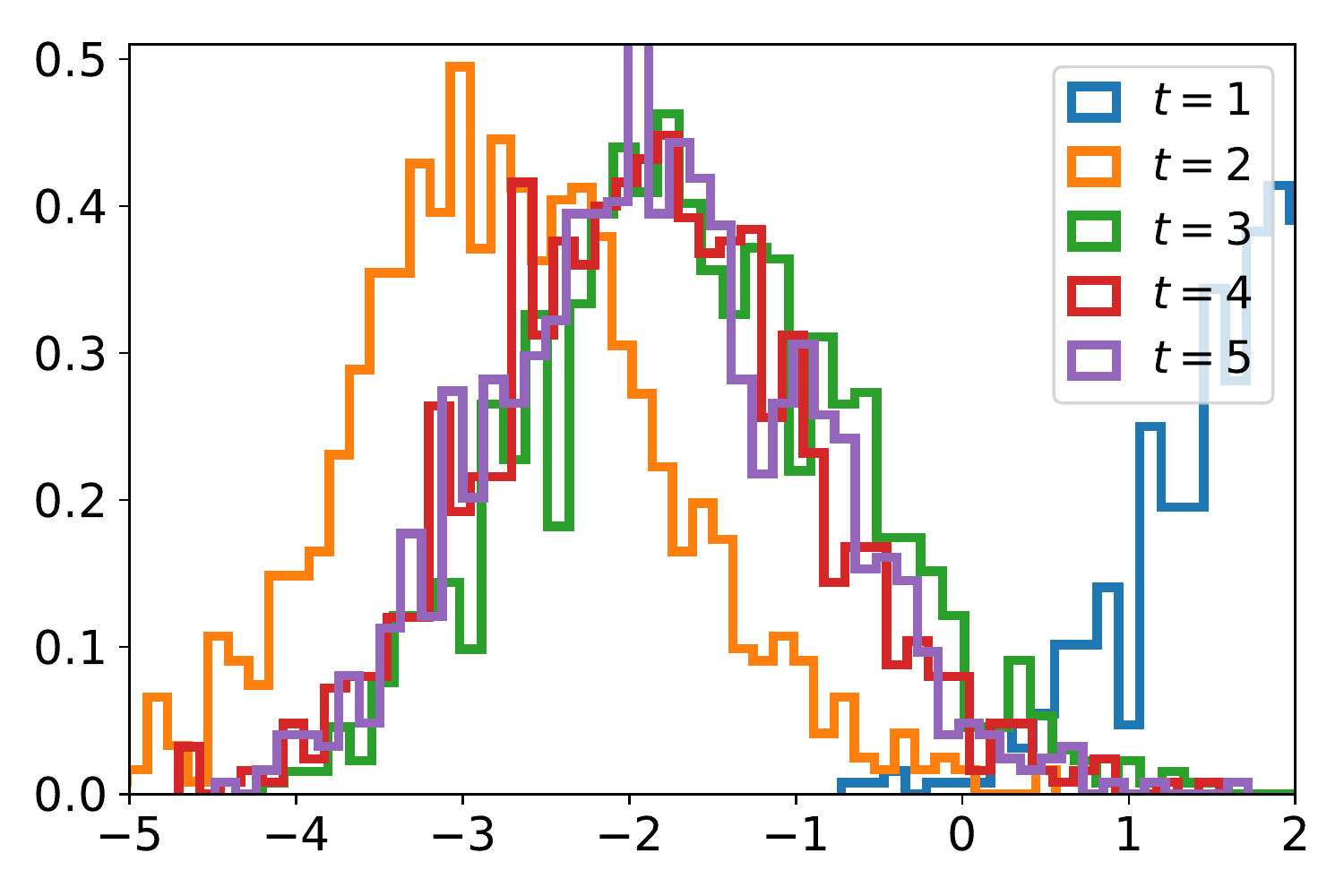}
                    \caption{$N=50$}
                    \label{fig:quartic_lambda_max_N=50}
                \end{subfigure}
                \\
                \begin{subfigure}[b]{0.49\textwidth}
                    \centering
                    \includegraphics[width=\textwidth]{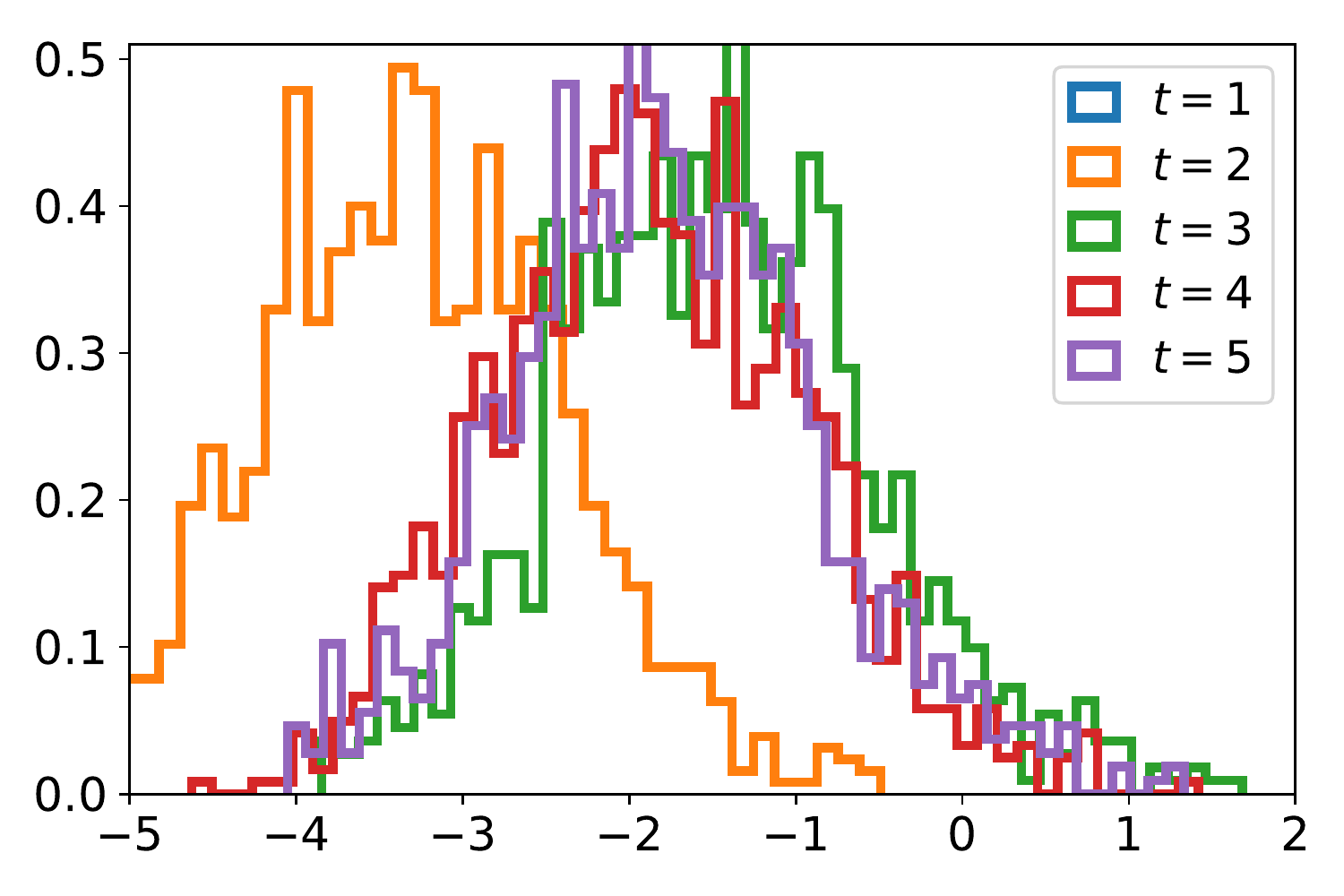}
                    \caption{$N=100$}
                    \label{fig:quartic_lambda_max_N=100}
                \end{subfigure}
                \begin{subfigure}[b]{0.49\textwidth}
                    \centering
                    \includegraphics[width=\textwidth]{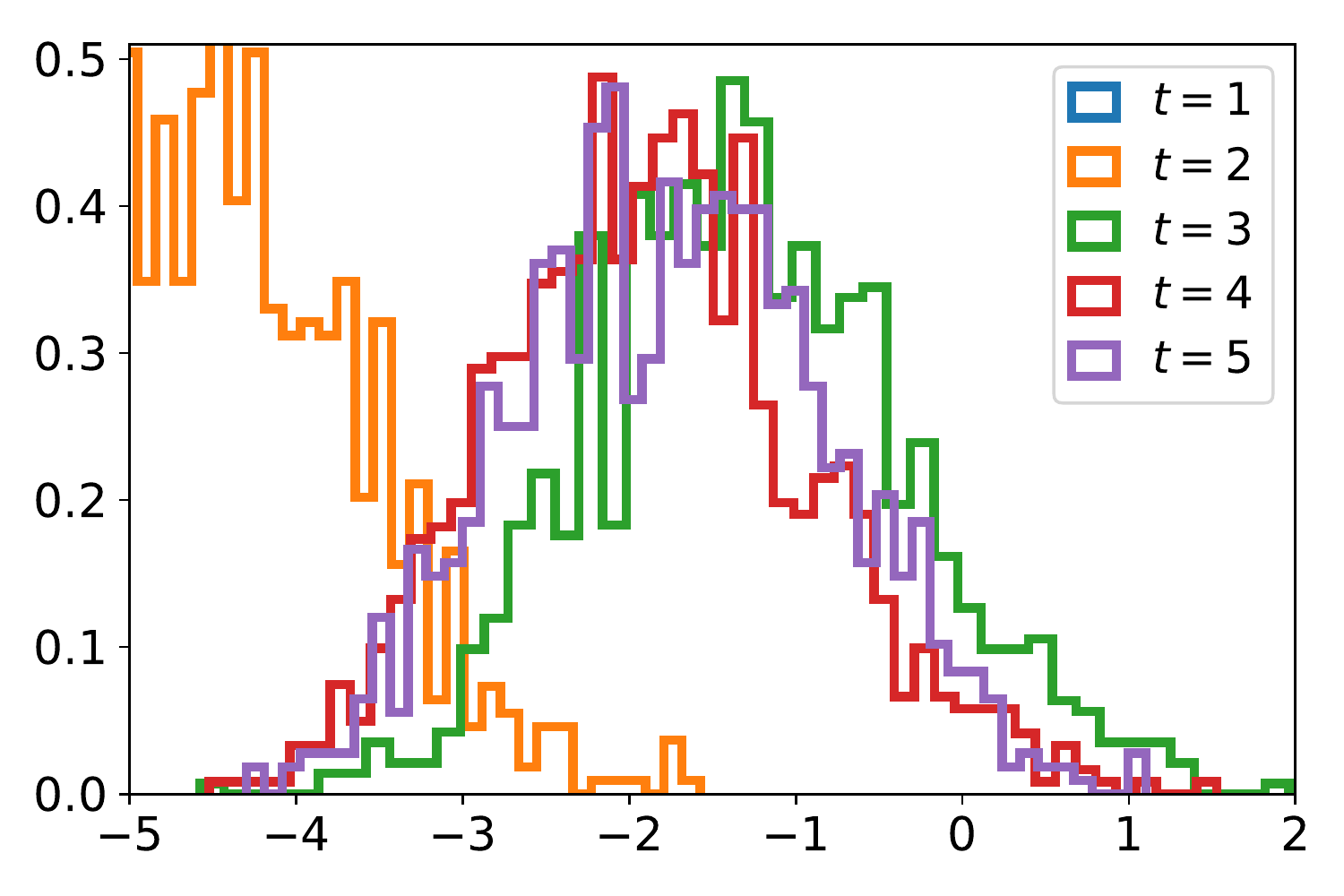}
                    \caption{$N=250$}
                    \label{fig:quartic_lambda_max_N=250}
                \end{subfigure}
                %             \\
                % \begin{subfigure}[b]{0.49\textwidth}
                %                 \centering
                %        \includegraphics[width=\textwidth]{quartic_marginal_KS_vs_Gibbs_pass_N=10,20,50,100,150.pdf}
                %                 \caption{Convergence of the empirical cdf}
                %        \label{fig:quartic_marginal_cdf_convergen}
                %             \end{subfigure}
                %             \begin{subfigure}[b]{0.49\textwidth}
                %                 \centering
                %        \includegraphics[width=\textwidth]{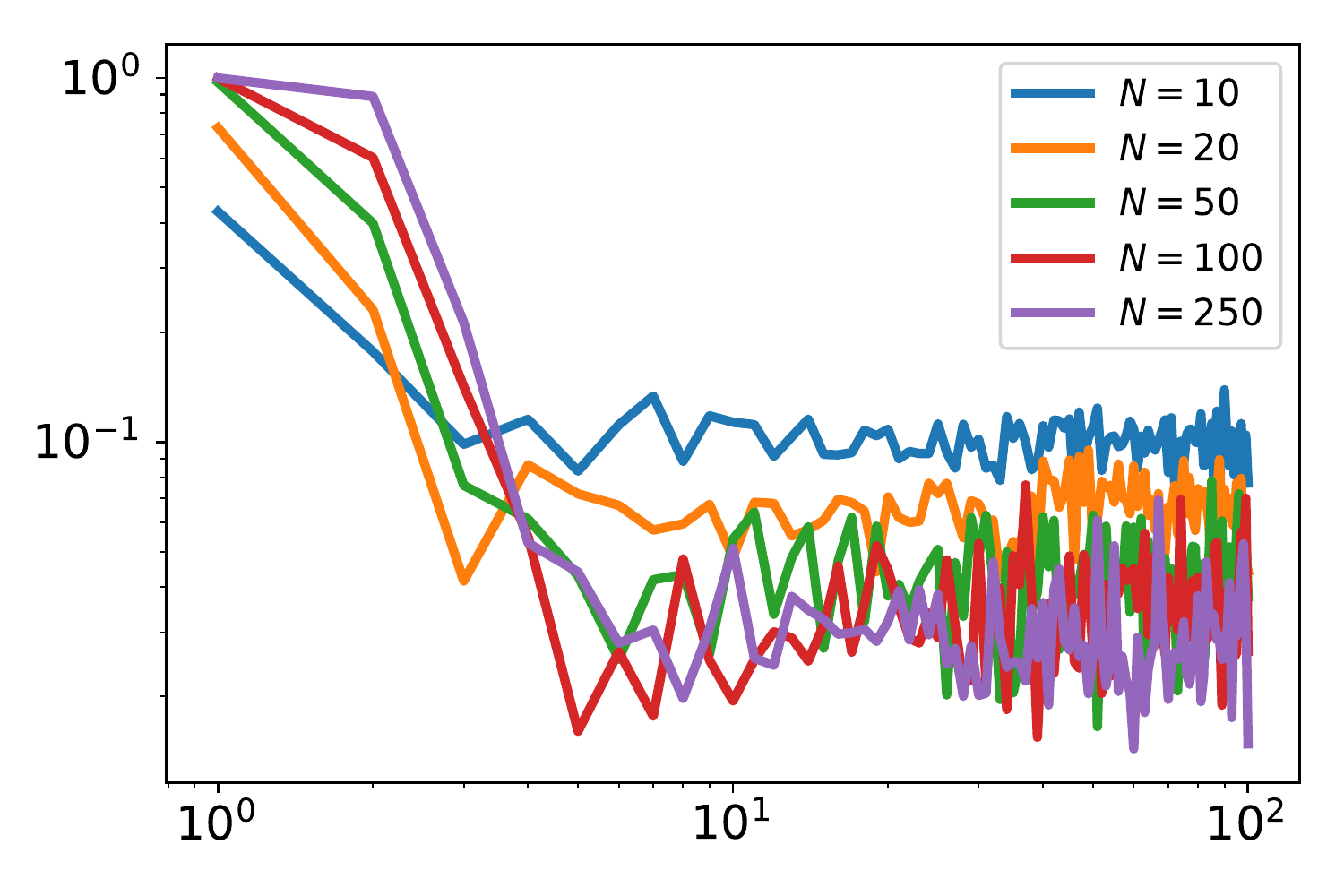}
                %                 \caption{Convergence of the empirical cdf}
                %        \label{fig:quartic_lambda_max_cdf_convergence_visual}
                %             \end{subfigure}
                \caption{%
                    For $\beta=2$ and $V(x)=\frac14 x^4$, we give a visual display of the convergence of the empirical distribution of the largest atom of $\mu_N^t$ constructed from $1000$ independent chains to the Tracy-Widom distribution.
                    Each colored line corresponds to a Gibbs pass $t\in\{1,2,3,...\}$.
                    Different panels correspond to increasing values of $N$.
                    As $N$ increases the histogram of passes $t=1$ and $t=2$ are farther from matching the high density regions of the Tracy-Widom law but then the fit is good and fast; see also Figures~\ref{fig:quartic_lambda_max_cdf_convergence_visual}-\subref{fig:quartic_lambda_max_cdf_smoothed} for a more quantitative monitoring.
                }
                \label{fig:quartic_lambda_max_visual}
                % \setfloatalignment{b}  %b/t
                % \forceversofloat  % \forcerectofloat
            \end{figure}

            \begin{figure}[ht]
                \centering
                %      \begin{subfigure}[b]{.49\textwidth}
                %          \centering
                % \includegraphics[width=\textwidth]{}
                %          \caption{$N=10$}
                % \label{fig:sextic_lambda_max_N=10}
                %      \end{subfigure}
                %      \hfill
                \begin{subfigure}[b]{0.49\textwidth}
                    \centering
                    \includegraphics[width=\textwidth]{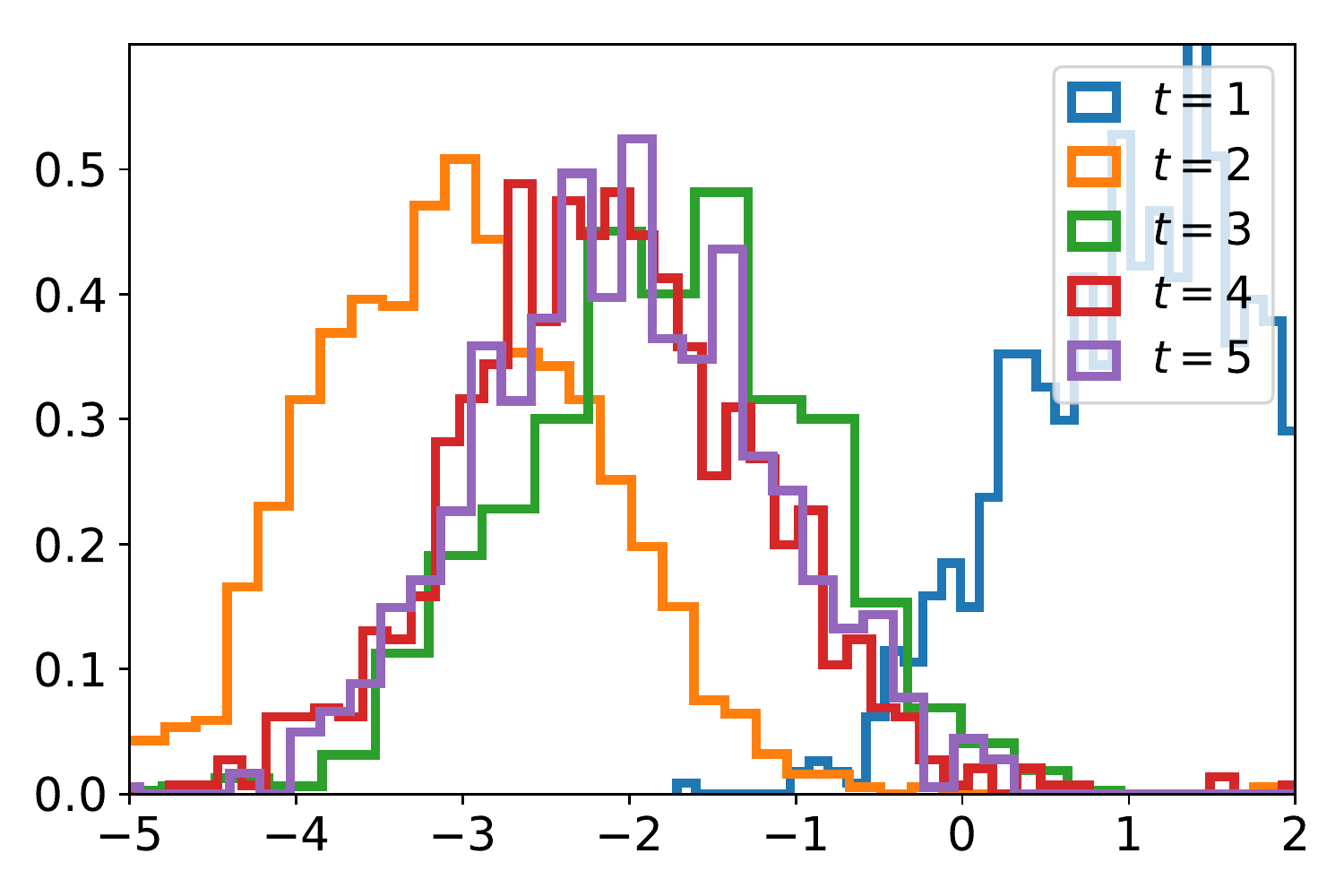}
                    \caption{$N=20$}
                    \label{fig:sextic_lambda_max_N=20}
                \end{subfigure}
                \hfill
                \begin{subfigure}[b]{0.49\textwidth}
                    \centering
                    \includegraphics[width=\textwidth]{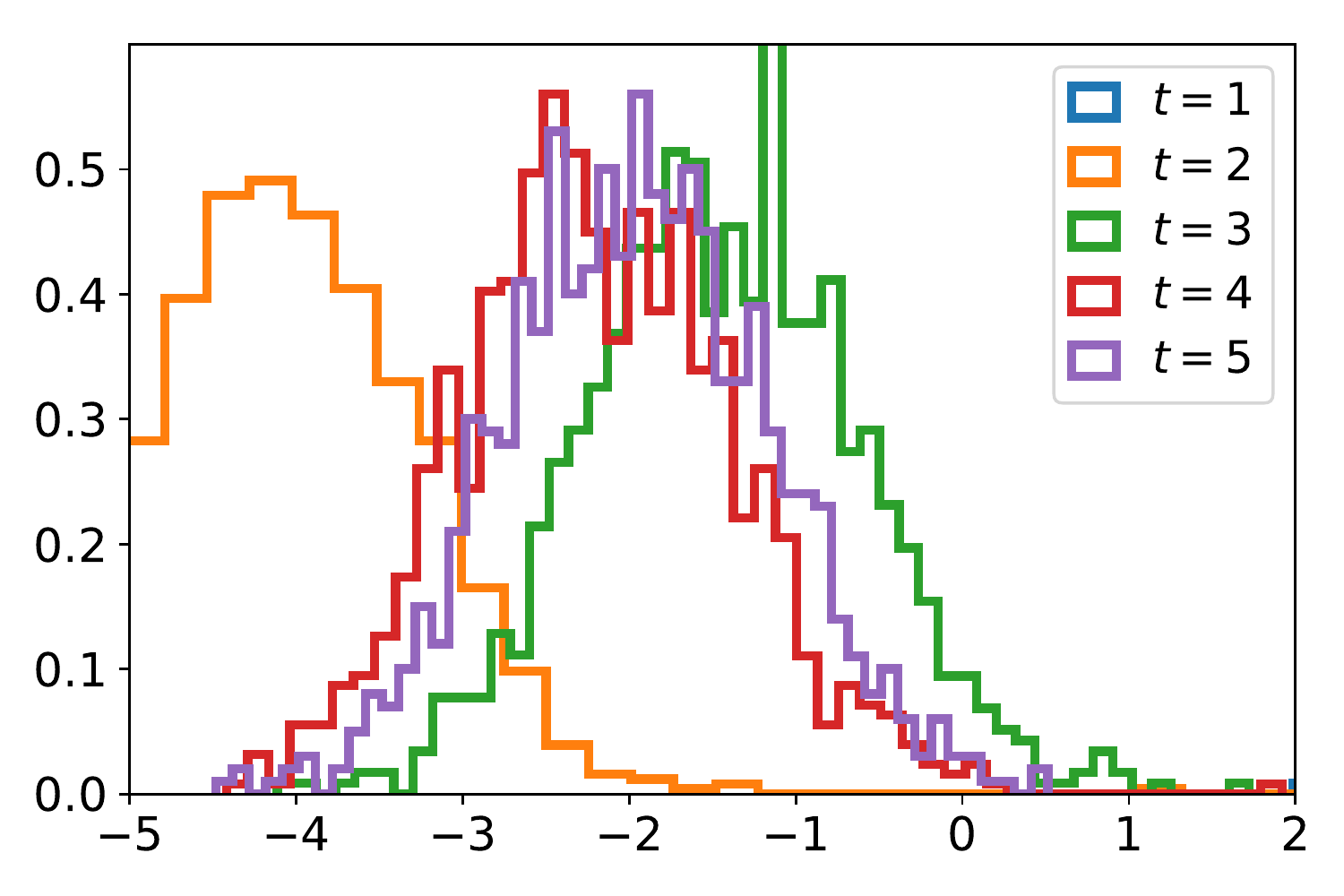}
                    \caption{$N=50$}
                    \label{fig:sextic_lambda_max_N=50}
                \end{subfigure}
                \\
                \begin{subfigure}[b]{0.49\textwidth}
                    \centering
                    \includegraphics[width=\textwidth]{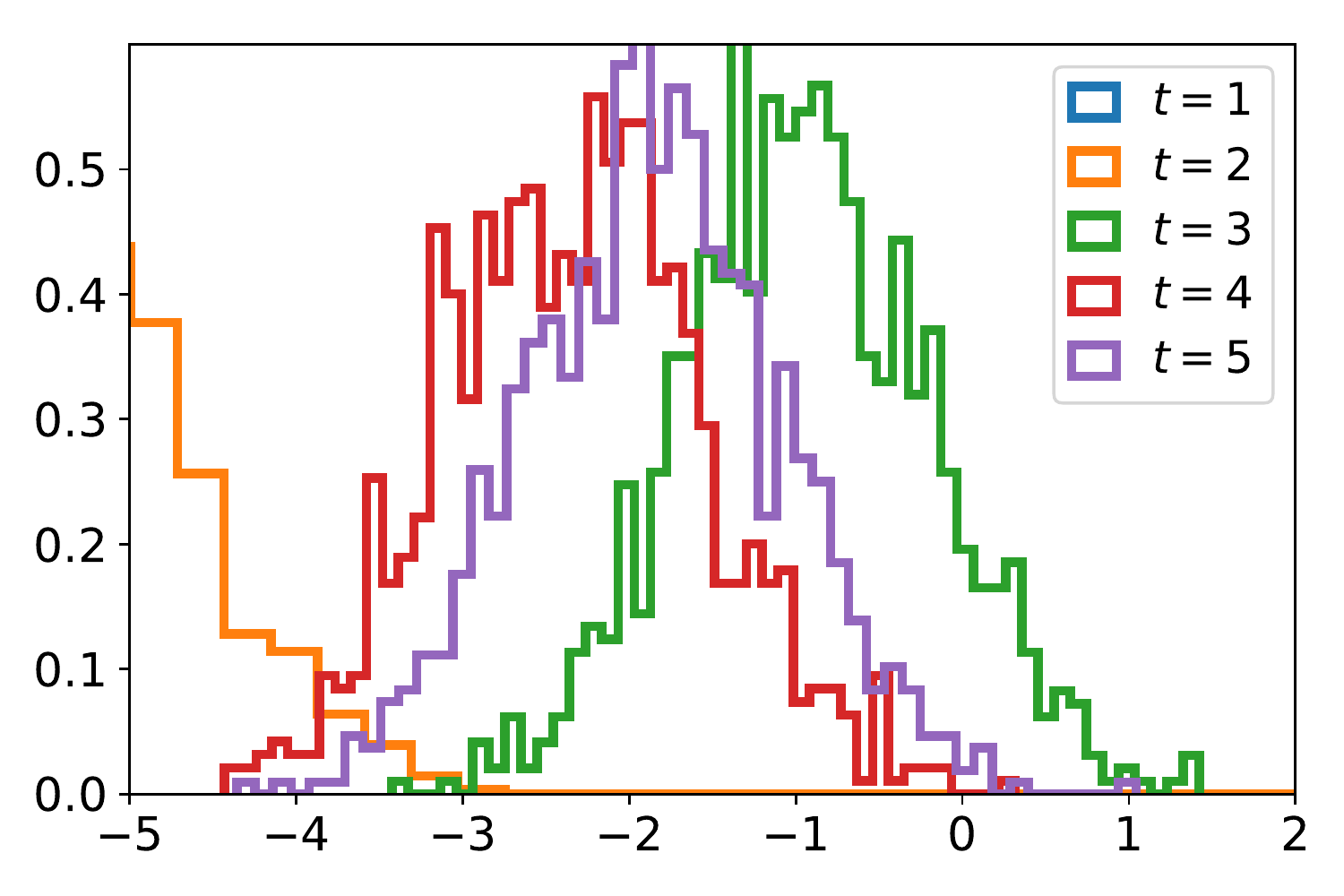}
                    \caption{$N=100$}
                    \label{fig:sextic_lambda_max_N=100}
                \end{subfigure}
                \hfill
                \begin{subfigure}[b]{0.49\textwidth}
                    \centering
                    \includegraphics[width=\textwidth]{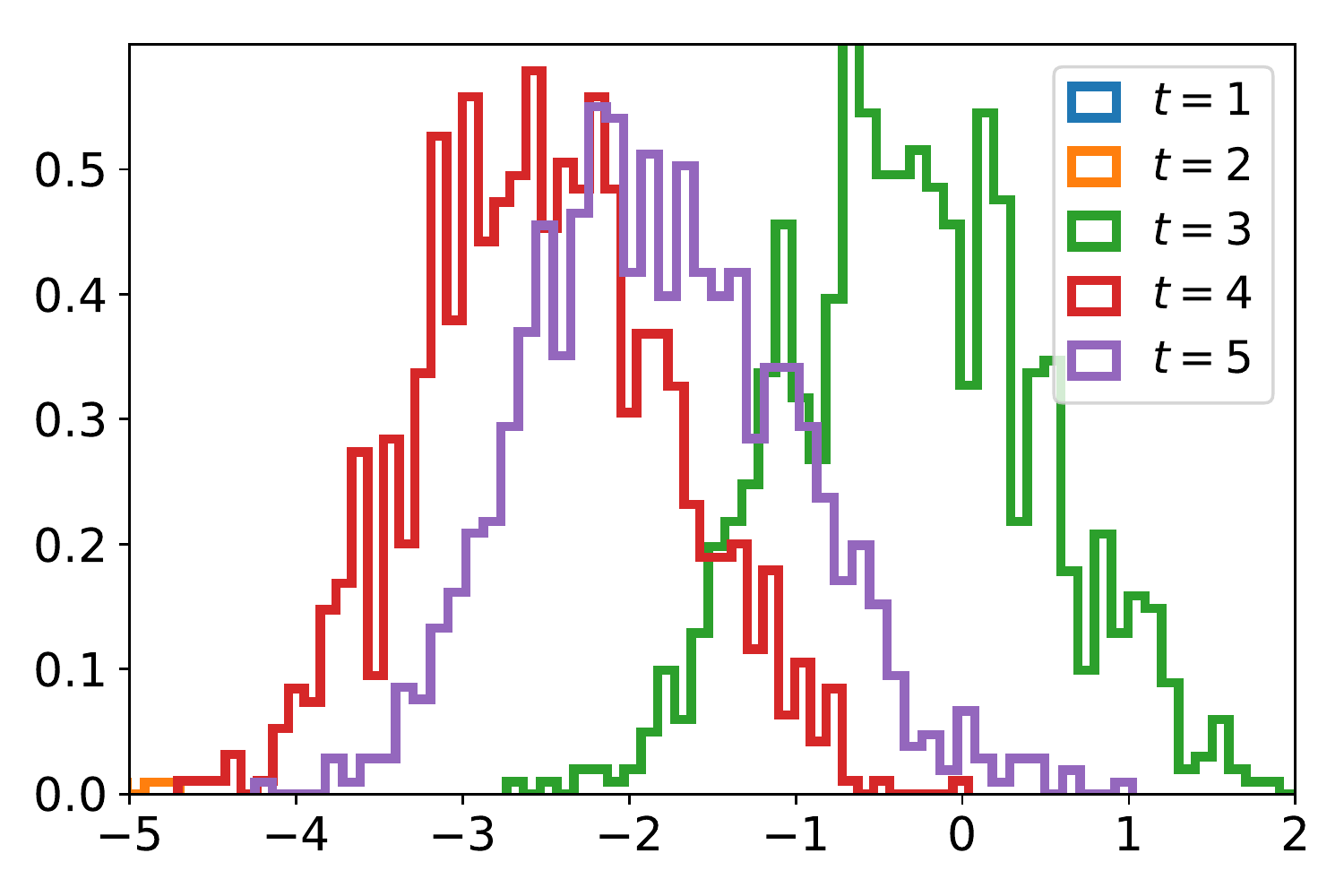}
                    \caption{$N=200$}
                    \label{fig:sextic_lambda_max_N=200}
                \end{subfigure}
                %      \\
                %      \begin{subfigure}[b]{0.49\textwidth}
                %          \centering
                % \includegraphics[width=\textwidth]{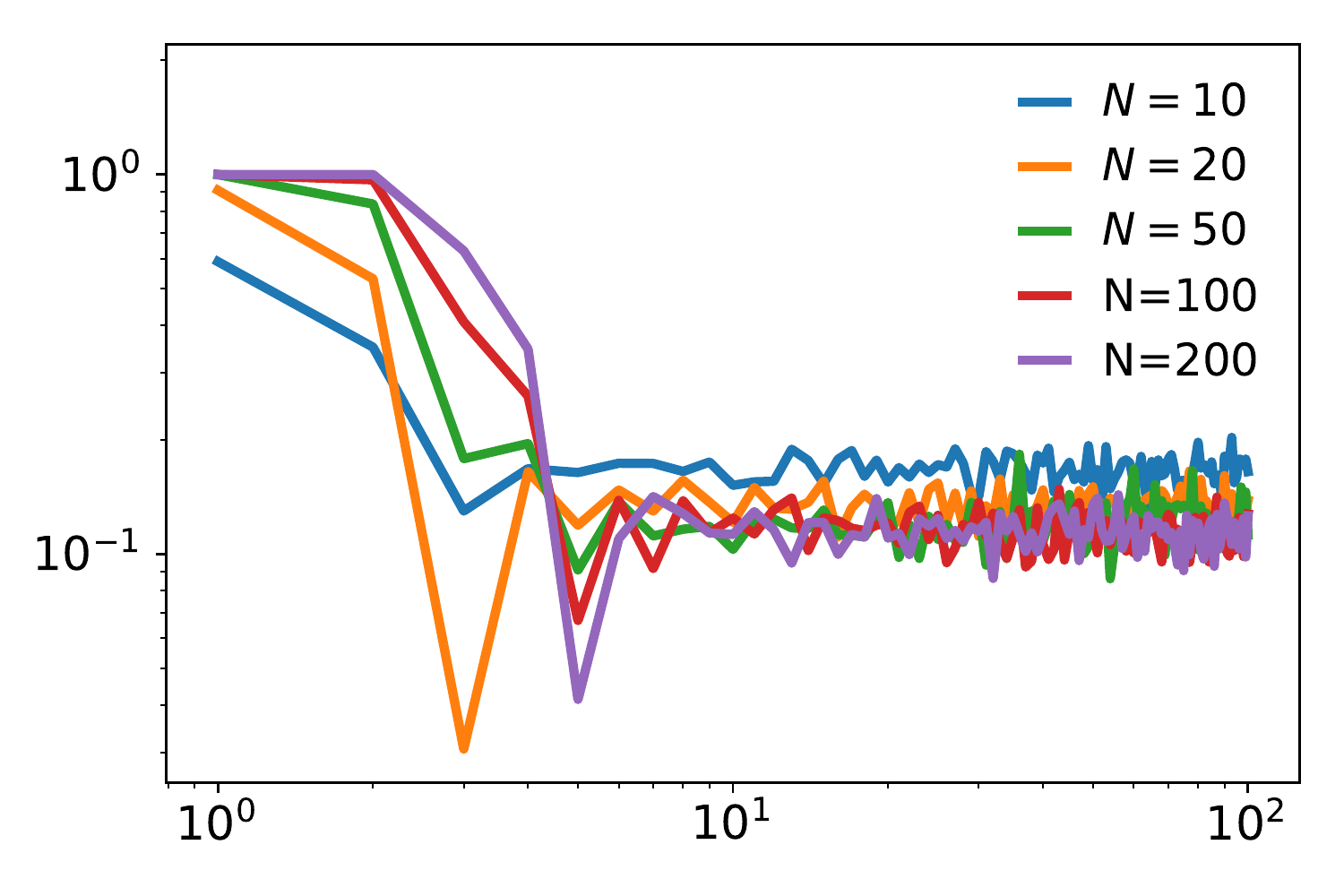}
                %          \caption{Convergence of the empirical cdf}
                % \label{fig:sextic_lambda_max_cdf_convergence}
                %      \end{subfigure}
                \caption{%
                    For $\beta=2$ and $V(x)=\frac16 x^6$, we give a visual display of the convergence of the empirical distribution of the largest atom of $\mu_N^t$ constructed from $1000$ independent chains to the Tracy-Widom distribution.
                    Each colored line corresponds to a Gibbs pass $t\in\{1,2,3,...\}$, while the equilibrium pdf is shown as a black line on each panel.
                    Different panels correspond to increasing values of $N$.
                    As $N$ increases the histogram of passes $t=1$ and $t=2$ are farther from matching the high density regions of the Tracy-Widom law and convergence is not as fast as for the quartic case; see also Figures~\ref{fig:sextic_lambda_max_cdf_convergence_visual}-\subref{fig:sextic_lambda_max_cdf_smoothed} for a more quantitative monitoring.
                    }
                \label{fig:sextic_lambda_max_visual}
                % \setfloatalignment{b}  %b/t
                % \forceversofloat  % \forcerectofloat
            \end{figure}

            \begin{figure}[ht]
                \centering
                \begin{subfigure}[b]{0.49\textwidth}
                    \centering
                    \includegraphics[width=\textwidth]{quartic_lambda_max_KS_vs_Gibbs_pass_N=10,20,50,100,250.pdf}
                    \caption{$V(x)=\frac14 x^4$}
                    \label{fig:quartic_lambda_max_cdf_convergence_visual}
                \end{subfigure}
                \hfill
                \begin{subfigure}[b]{0.49\textwidth}
                    \centering
                    \includegraphics[width=\textwidth]{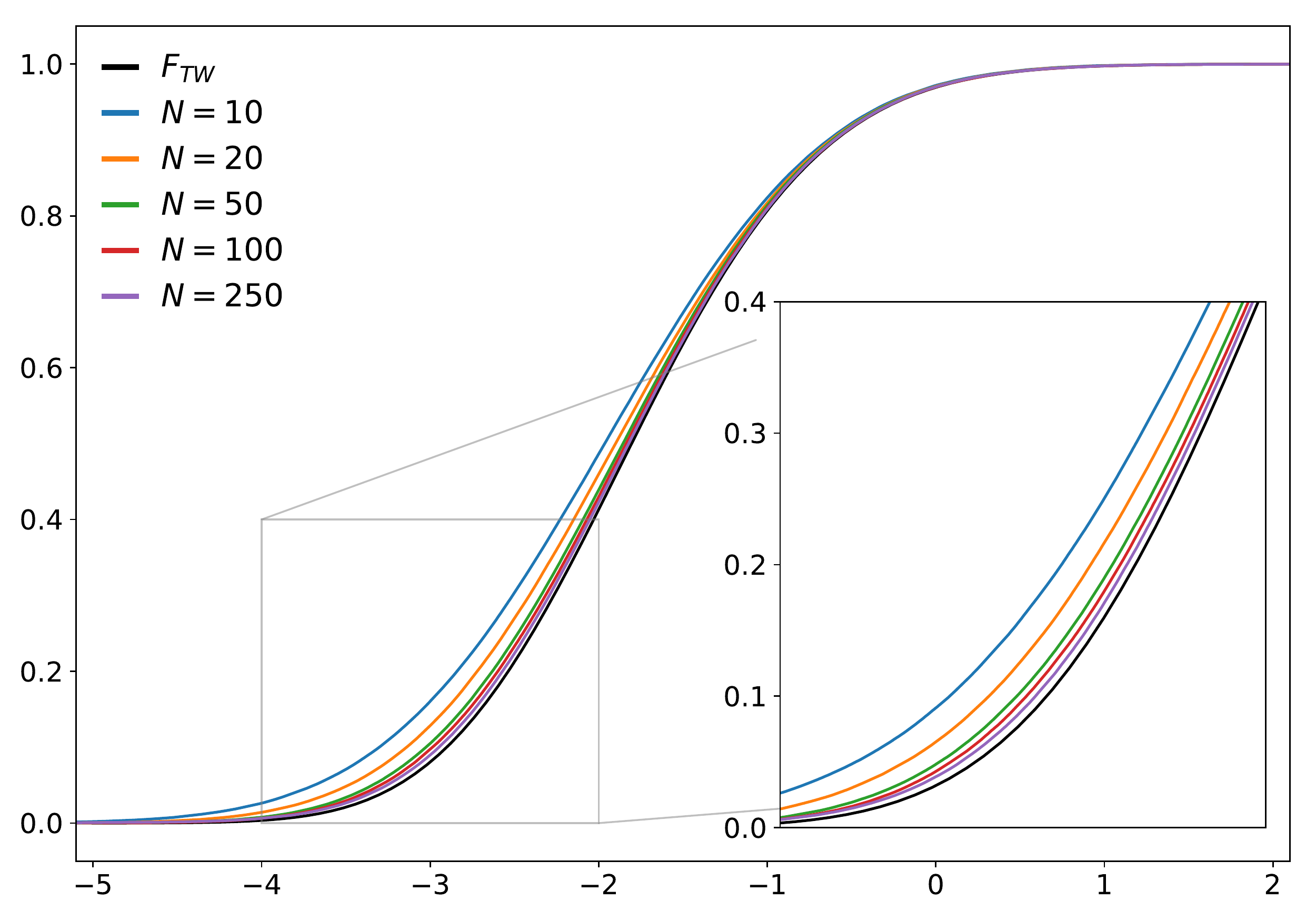}
                    \caption{$V(x)=\frac14 x^4$}
                    \label{fig:quartic_lambda_max_cdf_smoothed}
                \end{subfigure}
                \\
                \begin{subfigure}[b]{0.49\textwidth}
                    \centering
                    \includegraphics[width=\textwidth]{sextic_lambda_max_KS_vs_Gibbs_pass_N=10,20,50,100,200.pdf}
                    \caption{$V(x)=\frac16 x^6$}
                    \label{fig:sextic_lambda_max_cdf_convergence_visual}
                \end{subfigure}
                \hfill
                \begin{subfigure}[b]{0.49\textwidth}
                    \centering
                    \includegraphics[width=\textwidth]{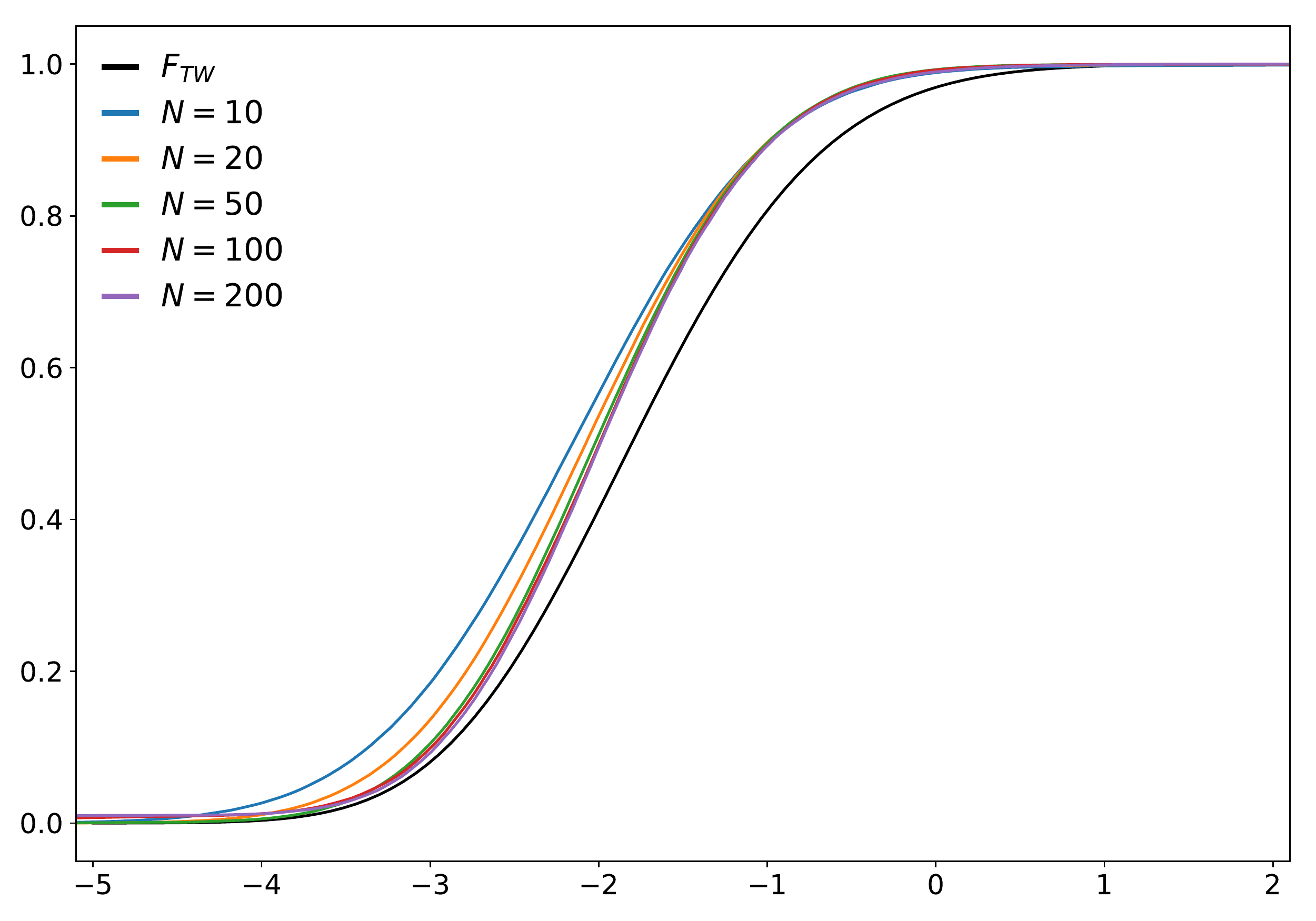}
                    \caption{$V(x)=\frac16 x^6$}
                    \label{fig:sextic_lambda_max_cdf_smoothed}
                \end{subfigure}
                \caption{%
                    For $\beta=2$ and $V(x)=\frac14 x^4$ and $\frac16 x^6$, we monitor the convergence of the empirical cdf of the largest atom of $\hat\mu_N^t$ to the expected Tracy-Widom distribution, for several values of the number of points $N$.
                    Panels~\subref{fig:quartic_lambda_max_cdf_convergence_visual} and~\subref{fig:sextic_lambda_max_cdf_convergence_visual}
                    show the supremum norm of the difference between the cdf of the Tracy-Widom distribution and the empirical cdf of the largest atom of $\hat\mu_N^t$, constructed from $1000$ independent chains, as a function of the number $1\leq t\leq 100$ of Gibbs passes.
                    Panels~\subref{fig:quartic_lambda_max_cdf_smoothed} and~\subref{fig:sextic_lambda_max_cdf_smoothed} show the corresponding smoothed empirical cdf constructed from the aggregation of the $1000$ independent chains over the $t=100$ passes.
                    Different panels correspond to increasing values of $N$.
                    }
                \label{fig:quartic_sextic_lambda_max}
                % \setfloatalignment{b}  %b/t
                % \forceversofloat  % \forcerectofloat
            \end{figure}

        % subsubsection fluctuations_of_the_largest_eigenvalue (end)

    % subsection empirical_study_of_the_mixing_time (end)

% section gibbs_sampling (end)

% \clearpage
% \vspace{3em}
\section{Conclusion}

    First, we wrote down the details of an elementary proof of the three classical tridiagonal models for $\beta$-ensembles.
    Most arguments of the proof already appeared in work by \citet{DuEd02,KiNe04,FoRa06,GaRo10,DeNa12,KrRiVi16} and we take no credit for the originality of the proof, only for a stand-alone and elementary version, akin to a survey.
    We hope that this version will help share the ideas of parametrizing a measure through its recurrence coefficients to computational scientists interested in interacting particle systems.
    Indeed, throughout the proof, we outline natural reparametrizations of $\beta$-ensembles through tridiagonal Jacobi matrices, in which the Vandermonde interaction disappears and leaves only a stream of easy-to-sample, independent matrix entries.
    Coupled with diagonalization of the underlying tridiagonal matrix, this gives a rejection-free, $\mathcal{O}(N^2)$ exact sampler for the three classical $\beta$-ensembles.

    Second, when the potential is more generic, independence is lost, but the new interaction can be short-range.
    We exploited this property to implement a Gibbs kernel and a Metropolis-within-Gibbs variant, which sample $\beta$-ensembles with polynomial potentials.
    This leads to simple MCMC samplers that empirically mix much faster, even for a large number of points, than more sophisticated MCMC kernels working in the original domain of the particles \citep{LiMe13,ChFe18}
    In particular, marginal behavior that matches known theoretical results can be obtained in a few Gibbs passes, totaling a few seconds on a laptop for hundreds of points.
    However, local behavior, such as the law of the largest particle in the $\beta$-ensemble, remains harder to approximate as the degree of the potential grows.
    Finally, to be fair, we note that the sampler of \citet{ChFe18} applies much more generally than ours, and in particular to multivariate $\beta$-ensembles.

    Finally, we want to stress a third related approach, which we leave for future work.
    As we have seen, diagonalizing a random Jacobi matrix is equivalent to solving a randomized moment problem.
    One can thus cast sampling $\beta$-ensembles as a constrained optimization problem, namely a linear program, as in the work of \citet{RyBo15}, but with randomized constraints.
    Our own interest in tridiagonal models actually came from trying to generalize a sampler for finite determinantal point processes \citep*{GaBaVa17} of this very form.
    It is then tempting to look for multivariate versions of the corresponding randomized linear program.
    We conjecture that the semidefinite relaxations of \citet{Las10} of multivariate moment problems, with properly randomized constraints, would lead to efficient samplers for multidimensional $\beta$-ensembles.
    This is a technically difficult next step, both in mathematical and computational terms, but it would be useful for Monte Carlo integration \citep{BaHa19}.

% section conclusion (end)

\section*{Acknowledgements}
    We acknowledge support from ANR grant \textsc{BoB} (ANR-16-CE23-0003) and ERC grant \textsc{Blackjack} (ERC-2019-STG-851866).
\endgroup

  \clearpage
  \bibliographystyle{icml2017}
  \bibliography{biblio}

\end{document}